\theoremstyle{plain}
\newtheorem{lemma}{Lemma}[section]           
\newtheorem{proposition}{Proposition}[section] 
\newtheorem{corollary}{Corollary}[section]
\newtheoremstyle{uprightremark} 
  {}                             
  {}                             
  {\normalfont}                  
  {}                             
  {\bfseries}                    
  {.}                            
  { }                            
  {}                             
\theoremstyle{uprightremark}    
\newtheorem{remark}{Remark}[section]
\theoremstyle{definition}
\newtheorem{assumption}{Assumption}[section]
\newcommand{\vs}{\vspace*{3mm}}
\newcommand{\argmin}{\mathop{\rm argmin}}
\def\blot{\quad {$\vcenter{\vbox{\hrule height.4pt
             \hbox{\vrule width.4pt height.9ex \kern.9ex \vrule
width.4pt}
             \hrule height.4pt}}$}}
\providecommand{\keywords}[1]
{
{  \small	
  {\textbf{Keywords---}} #1}
}
\begin{document}

\title{Joint modeling and inference of multiple-subject \\
high-dimensional sparse vector autoregressive models}
\author[1]{Younghoon Kim$^*$}
\author[2]{Zachary F. Fisher}
\author[2]{Vladas Pipiras}
\affil[1]{Cornell University}
\affil[2]{University of North Carolina at Chapel Hill}
\def\thefootnote{$*$}\footnotetext{Corresponding author. Email: yk748@cornell.edu}

\date{\today}

\maketitle
\begin{abstract}
The multiple-subject vector autoregression (multi-VAR) model captures heterogeneous network Granger causality across subjects by decomposing individual sparse VAR transition matrices into commonly shared and subject-unique paths. The model has been applied to characterize hidden shared and unique paths among subjects and has demonstrated performance compared to methods commonly used in psychology and neuroscience. 
Despite this innovation, the model suffers from using a weighted median for identifying the common effects, leading to statistical inefficiency as the convergence rates of the common and unique paths are determined by the least sparse subject and the smallest sample size across all subjects.
We propose a new identifiability condition for the multi-VAR model 
based on a communication-efficient data integration framework.
We show that this approach achieves convergence rates tailored to each subject’s sparsity level and sample size. 
Furthermore, we develop hypothesis tests to assess the nullity and homogeneity of individual paths,
using Wald-type test statistics constructed from individual debiased estimators. A test for the significance of the common paths can also be derived through the framework.
Simulation studies under various heterogeneity scenarios and a real data application demonstrate the performance of the proposed method compared to existing benchmark across standard evaluation metrics.
\end{abstract}

\keywords{High-dimensional time series, meta-analysis, heterogeneity, debiased lasso, hypothesis testing, fMRI}

\baselineskip18pt

\section{Introduction}

\subsection{Multiple Subject Time Series Model}

In recent years, sparse vector autoregressive (VAR) modeling of high-dimensional time series (HDTS) has become a central topic in statistics and machine learning, driven by the increasing availability of high-dimensional, temporally dependent data. These data are collected across diverse scientific domains, including neuroscience, genomics, finance, and social networks \cite[e.g.,][]{song2011large,han2015direct,kock2015oracle,davis2016sparse}. A key challenge in analyzing these data is characterizing the dynamic dependencies among a large number of variables, often through the framework of network Granger causality \citep{shojaie2022granger}, which encodes directional relationships, with edges indicating whether past values of one variable improve the prediction of another.

Despite the popularity of sparse VAR modeling, extensions of VAR models to multiple subjects (or datasets) have been relatively uncommon. This scarcity is partly due to the large number of parameters, which grow quadratically with the number of variables, and the difficulty associated with estimation in high-dimensional settings with limited observations per subject. Moreover, representing common and subject-specific (or unique; we use those two terms interchangeably) components through decompositions of VAR transition matrices introduces challenges in interpretation and identifiability. For these reasons, factor models have become a popular alternative in multi-subject time series analysis \cite[e.g.,][]{abdi2013multiple,fan2018heterogeneity,o2019linked,kim2024group}. These models capture shared components through common loadings while allowing subject-specific variation. In addition, their computational cost and identifiability requirements are often comparable to those of single-subject models.

Despite their scarcity in the literture, extending VAR models to multiple subjects is interesting; joint VAR modeling can reveal shared mechanisms across individuals and enable comparisons of heterogeneous dynamics. A related idea has been explored in multi-level modeling, which is widely used in individual-differences analyses \cite[e.g.,][]{wright2015examining,jongerling2015multilevel,haslbeck2025testing}. A promising extension of this idea to high-dimensional settings is the multiple-subject vector autoregression model \cite[multi-VAR;][]{fisher2022penalized,fisher2024structured}. Instead of relying on mixed-effects formulations, multi-VAR assumes that the sparsely estimated components of individual VAR transition matrices, called paths, can be decomposed into common or shared, and subject-specific, components. The model encourages similarity across individual parameter vectors while allowing for deviations, thereby capturing both shared temporal dynamics and subject-specific variation. Compared with other VAR-type models commonly used in psychology and neuroscience \cite[e.g.,][]{chen2011vector,gates2012group,seth2015granger}, multi-VAR explicitly borrows information across subjects rather than fitting each subject separately, leading to more efficient estimation of shared paths while still accounting for individual differences.

However, a limitation of the multi-VAR approach is the lack of identifiability of common paths. In practice, common effects are usually determined using a weighted median across individual paths. While this provides a well-defined solution, it can be statistically inefficient \citep{asiaee2019data} (see also \cite{maity2022meta} for a related discussion), which will be explained. As a result, there is room for improvement in the form of new identification conditions for the multi-VAR framework. In addition, the literature still lacks a formal hypothesis testing framework for multi-subject VAR models. Statistical tests for assessing the nullity, significance, and homogeneity of individual paths across subjects remain underdeveloped, which limits the ability to validate estimated causal structures and to interpret common versus subject-specific effects in scientific applications.

\subsection{Original Multi-VAR Model}

In this section, we provide an overview of the original multi-VAR model \citep{fisher2022penalized,fisher2024structured} and introduce its estimation procedure, which leads to the statistical inefficiency inherent in the original model.

Suppose we have a $d$-dimensional observation series $\{X_t^{(k)}\}$ from $K>1$ subjects for $T_k$ time points. Note that the variables across subjects are the same, while their sample length can vary across subjects. We assume that each vector series follows a VAR$(p)$ model,
\begin{align}\label{e:var}
    X_{t}^{(k)} = \Phi_1^{(k)} X_{t-1}^{(k)} + \ldots + \Phi_p^{(k)} X_{t-p}^{(k)} + \epsilon_{t}^{(k)}, \quad \epsilon_{t}^{(k)}\sim\mathcal{N}(0,\Sigma_{\epsilon}^{(k)}=\textrm{diag}(\sigma_{k,1}^{2},\ldots,\sigma_{k,d}^{2})).
\end{align}
Note that the lag order $p$ is also the same across all subjects. Each VAR(p) model can be formed into the matrix-valued regression equations,
\begin{equation*}
    \underbrace{
    \begin{bmatrix}
      (X_{p+1}^{k})' \\ 
      (X_{p+2}^{k})'\\ 
      \vdots\\ 
      (X_{T}^{k})'
   \end{bmatrix}}_{\textstyle{ \mathcal{Y}^{(k)} }}
    =
    \underbrace{
    \begin{bmatrix}
      (X_{p}^{k})' & (X_{p-1}^{k})' & \ldots & (X_{1}^{k})'  \\ 
      (X_{p+1}^{k})' & (X_{p}^{k})' & \ldots & (X_{2}^{k})' \\ 
       \vdots & \vdots & \ddots & \vdots \\ 
      (X_{T-1}^{k})' & (X_{T-2}^{k})' & \ldots & (X_{T-p}^{k})' \\ 
    \end{bmatrix}}_{\textstyle{ \mathcal{X}^{(k)} }}
     \underbrace{
    \begin{bmatrix}
      (\Phi_{1}^{k})'\\ 
      (\Phi_{2}^{k})'\\ 
      \vdots\\ 
      (\Phi_{p}^{k})' \\
   \end{bmatrix}}_{\textstyle{ B^{(k)} }}
    +
    \underbrace{
    \begin{bmatrix}
      (\varepsilon_{p+1}^{k})' \\ 
      (\varepsilon_{p+2}^{k})' \\ 
      \vdots\\ 
      (\varepsilon_{T}^{k})' \\ 
    \end{bmatrix}}_{\textstyle{ E^{(k)} }},
\end{equation*}
where $\mathcal{Y}^{(k)}\in\mathbb{R}^{N_k \times d}$ and $\mathcal{X}^{(k)} \in\mathbb{R}^{N_k \times dp}$ are response vectors and covariate matrices, respectively, and $N_k=T_k-p$. Note that for the stacked VAR transition matrices $B^{(k)} = (\Phi_1^{(k)'} \ldots \Phi_p^{(k)'})' \in\mathbb{R}^{pd \times d}$, consider its vectorization $\beta^{(k)}=\textrm{vec}(B^{(k)})\in\mathbb{R}^{d^2p}$. The multi-VAR assumes that the $d^2p$ so-called individual paths are decomposed into
\begin{equation}\label{e:decomposition}
    \beta^{(k)} = \alpha^{(0)} + \alpha^{(k)},
\end{equation}
where $\alpha^{(0)}$ are the common paths shared by all $K$ subjects and $\alpha^{(k)}$, $k=1,\ldots,K$ are unique paths of $k^{\textrm{th}}$ subject. 

The original multi-VAR model \citep{fisher2022penalized,fisher2024structured} employs a joint estimation framework to obtain the decomposed paths \eqref{e:decomposition} in the VAR transition matrices in \eqref{e:var} across all $K$ subjects. Specifically, it builds on the data-shared Lasso \citep{gross2016data} or the stratified Lasso \citep{ollier2017regression};
\begin{align}
    & (\hat{\alpha}^{(0)},\hat{\beta}^{(1)},\ldots,\hat{\beta}^{(K)}) \nonumber\\
    & = \argmin_{\alpha^{(0)},\beta^{(1)},\ldots,\beta^{(K)}} \left\{\sum_{k=1}^K \frac{1}{2N_k}\|\textrm{vec}(\mathcal{Y}^{(k)}) - (I_d \otimes \mathcal{X}^{(k)}) \beta^{(k)}\|_2^2  + \tilde{\lambda}_0\|\alpha^{(0)}\|_1 + \sum_{k=1}^K \tilde{\lambda}_k \|\beta^{(k)} - \alpha^{(0)}\|_1\right\}, \label{e:stratified_lasso}
\end{align}
The estimation in the algorithm is performed using the fast iterative shrinkage-thresholding algorithm (FISTA; \cite{beck2009fast}) by stacking all individual equations. Specifically, with $\boldsymbol{Y}^{(k)} = \textrm{vec}(\mathcal{Y}^{(k)})$ and $\boldsymbol{Z}^{(k)} = (I_d \otimes \mathcal{X}^{(k)})$, the aggregated equation is
\begin{equation}\label{e:full_stack}
    \underbrace{\begin{bmatrix}
        \boldsymbol{Y}^{(1)} \\ 
        \boldsymbol{Y}^{(2)} \\ 
        \vdots \\
        \boldsymbol{Y}^{(K)} 
    \end{bmatrix}}_{\boldsymbol{Y}} = 
    \underbrace{\begin{bmatrix}
        \boldsymbol{Z}^{(1)} & \boldsymbol{Z}^{(1)} & \boldsymbol{0} & \ldots & \boldsymbol{0} \\
        \boldsymbol{Z}^{(2)} & \boldsymbol{0} & \boldsymbol{Z}^{(2)} & \ldots & \boldsymbol{0} \\
        \vdots & \vdots & \vdots & \ddots & \vdots \\
        \boldsymbol{Z}^{(K)} & \boldsymbol{0} & \boldsymbol{0} & \ldots & \boldsymbol{Z}^{(K)} \\
    \end{bmatrix}}_{\boldsymbol{Z}}
    \begin{bmatrix}
        \alpha^{(0)} \\ \alpha^{(1)} \\ \vdots \\ \alpha^{(K)}
    \end{bmatrix} +
    \begin{bmatrix}
        \boldsymbol{E}^{(1)} \\ \boldsymbol{E}^{(2)} \\ \vdots \\ \boldsymbol{E}^{(K)}
    \end{bmatrix}.
\end{equation}
Then for $\boldsymbol{\theta}:=(\alpha^{(0)'},\alpha^{(1)'},\ldots ,\alpha^{(K)'})'$, the optimizer solves
\begin{displaymath}
    \hat{\boldsymbol{\theta}} = \argmin_{\boldsymbol{\theta}}\frac{1}{N}\|\boldsymbol{Y} - \boldsymbol{Z}\boldsymbol{\theta}\|_2^2 + \tilde{\lambda}\|\boldsymbol{\theta}\|_1.
\end{displaymath}
Here we use $N=T-k$ so that $N_k=N$ for all $k$ is assumed. Additional computational strategies, such as the backtracking step-size rule, are also included. The multi-VAR modeling and its estimation algorithm are implemented in the R package \texttt{multivar} \citep{fisher2021multivar}.

Note that the penalty term in \eqref{e:stratified_lasso} is separable. So, each estimated common path is determined by the weighted median of the estimated individual paths,
\begin{align*}
    (\hat{\alpha}_i^{(0)})_{j} 
    &= \argmin_{(\alpha_i^{(0)})_{j}\in\mathbb{R}} \left\{|(\alpha_i^{(0)})_{j}| + \sum_{k=1}^K \frac{\tilde{\lambda}_{k}}{\tilde{\lambda}_0}|(\hat{\beta}_i^{(k)})_{j} - (\alpha_i^{(0)})_{j}|\right\}, \\
    &= \textrm{median}((\hat{\beta}_i^{(1)})_{j},\ldots,(\hat{\beta}_i^{(K)})_{j};(1,\tilde{\lambda}_{1}/\tilde{\lambda}_0,\ldots,\tilde{\lambda}_{K}/\tilde{\lambda}_0)),
\end{align*}
and $\hat{\alpha}^{(k)} = \hat{\beta}^{(k)} - \hat{\alpha}^{(0)}$, $k=1,\ldots,K$. However, \cite{asiaee2019data} pointed out that it is statistically inefficient in terms of convergence rate compared to those of individual VAR models. That is,
\begin{displaymath}
    \|\hat{\alpha}^{(0)} - \alpha^{(0)}\|_2 + \sum_{k=1}^{K}\sqrt{\frac{N_k}{N_0}}\|\hat{\alpha}^{(k)} - \alpha^{(k)}\|_2 \leq \max_{k=1,\ldots,K}\frac{N_0}{N_k} \mathcal{O}_{\mathbb{P}}\left( \sqrt{\frac{\max_k (\|\alpha^{(k)}\|_0 \log d^2p)}{N_0}}\right),
\end{displaymath}
where $N_0=\sum_k N_k$. Consequently, convergence rates are determined by the least sparse subject and single-individual sample size,
\begin{equation}\label{e:convergence_rate_multiVAR}
    \|\hat{\alpha}^{(k)} - \alpha^{(k)}\|_2 \leq \mathcal{O}_{\mathbb{P}}\left(\sqrt{\frac{ \max_k (\|\alpha^{(k)}\|_0)\log d^2p}{N_k}}\right).
\end{equation}
As a consequence, the convergence rates of both the common and subject-specific path estimators are determined by the least sparse subject and the smallest sample size, which is problematic when integrating heterogeneous datasets. In addition, since the equations are all stacked as in \eqref{e:full_stack} to solve the single large-scale optimization problem, the computation is consequently slow.

One improvement of the original multi-VAR model is the use of adaptive Lasso penalties \citep{zou2006adaptive}. Specifically, the estimation problem is formulated as
\begin{align}
    & (\hat{\alpha}^{(0)},\hat{\beta}^{(1)},\ldots,\hat{\beta}^{(K)}) \nonumber\\
    & = \argmin_{\alpha^{(0)},\beta^{(1)},\ldots,\beta^{(K)}} \left\{\sum_{k=1}^K \frac{1}{2N_k}\|\textrm{vec}(\mathcal{Y}^{(k)}) - (I_d \otimes \mathcal{X}^{(k)}) \beta^{(k)}\|_2^2  + \tilde{\lambda}_0\|\alpha^{(0)}\|_{1,w} + \sum_{k=1}^K \tilde{\lambda}_k \|\beta^{(k)} - \alpha^{(0)}\|_{1,w}\right\}, \label{e:stratified_adaptive_lasso}
\end{align}
where $|\theta|_{1,w} = \sum_i w_i|\theta_i|$ with nonnegative weights $\{w_i\}$. The weights are constructed by initially fitting individual VAR models with Lasso to obtain $\{\hat{\beta}^{(k)}\}$, then setting the weights for the common path as $w_i^{(0)}=1/|\hat{\beta}_i^{(0)}|$, where $\hat{\beta}_i^{(0)}=\textrm{median}(\hat{\beta}_i^{(1)},\ldots,\hat{\beta}_i^{(K)})$, and the weights for the unique paths as $w_i^{(k)}=1/|\hat{\beta}_i^{(0)}-\hat{\beta}_i^{(k)}|$, $k=1,\ldots,K$. A similar equation, stacked across all subjects, is applied in this adaptive weighting scheme, which is the default setting of \texttt{multivar}. It is known that the adaptive Lasso penalty reduces the bias of the standard Lasso estimator. However, how these adaptive weights affect the convergence rates of the individual components has not been studied yet. Moreover, the framework still relies on the fully stacked equations in \eqref{e:full_stack}.

\subsubsection{Related Works}
\label{sse:related}

Due to the scarcity of studies, there are few VAR-type models that explicitly identify common and unique paths. Nevertheless, several approaches related to multi-VAR have been proposed. For example, \citet{wilms2018multiclass} developed a multiclass VAR model in which the vectorized VAR transition matrices are assumed to be similar across classes, corresponding to subjects in our setting. Their method employs $\ell_1$ and fused Lasso penalties to enforce sparsity and similarity across subjects. However, this approach is closer to differential analysis \citep{shojaie2021differential}; rather than decomposing common and unique components, it focuses on encouraging similarity in individual dynamics across groups.

A model introduced by \citet{skripnikov2019joint} is more closely aligned with our setting in that it explicitly distinguishes between common and unique components. In particular, they impose nonoverlapping supports between the two by adding a penalty, which makes the formulation nonconvex. At the same time, they assume that the supports of the common components are shared across subjects, while their values may differ; that is, $\textrm{Supp}(\alpha^{(0,k_1)})=\textrm{Supp}(\alpha^{(0,k_2)})$ for $k_1\neq k_2$ but not necessarily $\alpha^{(0,k_1)}=\alpha^{(0,k_2)}$, unlike the decomposition in \eqref{e:decomposition}.

Similarly, \citet{manomaisaowapak2022joint} proposed three variants of joint VAR models. In their framework, common components are identified using a group Lasso penalty, while individual components are encouraged to be similar across subjects via nonconvex fused penalties. Their approach lies between the two aforementioned models, but distinguishing common from individual components requires fitting multiple models, which cannot be identified simultaneously. 

A recent study by \cite{lyu2024high} focuses on covariate-driven population patterns with latent VAR formulations rather than heterogeneous subject-specific effects. In this framework, the observations are generated by a latent VAR model whose transition matrix is decomposed into a low-dimensional individual covariate multiplied by a common sparse matrix, plus a random measurement error component. The primary aim is to identify population-level patterns explained by covariates, whereas our method is explicitly designed to capture both shared and individual dynamics across subjects, with an emphasis on establishing identifiability and providing formal inferential tools. Moreover, their formulation is conceptually closer to low-rank VAR models \cite[e.g.,][]{basu2019low,alquier2020high}.

\subsection{Contributions}

This work makes two main contributions. First, we propose a new identifiability condition for the multi-VAR model, grounded in the communication-efficient data integration framework of \citet{maity2022meta}. This condition offers a statistically principled alternative to median-based identification, improving estimation efficiency and ensuring robustness in heterogeneous settings. Second, building on this foundation, we develop a hypothesis testing framework specifically designed for multiple-subject high-dimensional VAR models. Our framework enables rigorous assessment of the significance and homogeneity of individual paths as well as the validity of shared paths, thereby addressing a critical methodological gap.

\subsection{Organization of Paper}
\label{sse:organization}

The rest of the paper is organized as follows. Section \ref{se:estimation} introduces the new estimation framework and derives the convergence rates of the estimators. Section \ref{se:tests} presents the inference framework associated with the estimation procedure, along with the theory of the hypothesis tests. Section \ref{se:illustrative} reports numerical experiments comparing our method with existing joint estimation frameworks, as well as hypothesis testing results. Section \ref{se:data} applies the proposed methods to neuroimaging data. Finally, Section \ref{se:discuss} concludes with a discussion.

\section{Estimation}
\label{se:estimation}

In this section, we introduce the proposed estimation framework. We then present the theoretical results on convergence rates, demonstrating that the proposed method achieves improved rates compared to existing approaches.

\subsection{Estimation Procedure}
\label{sse:approach_estimation}

The key difference from the original multi-VAR estimation framework is that we first estimate the individual VAR models separately and then aggregate them, rather than jointly estimating all parameters as in \eqref{e:stratified_lasso} and \eqref{e:stratified_adaptive_lasso}.
    
First, we use equation-by-equation for the VAR models of each $k^{\textrm{th}}$ subject, $k=1,\ldots,K$. Note that the $i^{\textrm{th}}$ equation in the $d$-dimensional VAR$(p)$ model is written as
\begin{displaymath}
    X_{i,t}^{(k)} 
    = \sum_{\ell=1}^p [\Phi_{\ell}^{(k)}]_{i:}X_{t-\ell}^{(k)} + \epsilon_{i,t}^{(k)} 
    = \begin{pmatrix}
    X_{t-1}^{(k)'} & \ldots & X_{t-p}^{(k)'}    
    \end{pmatrix}\begin{pmatrix}
        [\Phi_{1}^{(k)}]_{i:} \\
        \vdots \\
        [\Phi_{p}^{(k)}]_{i:}
    \end{pmatrix} + \epsilon_{i,t}^{(k)} 
    =: \mathcal{X}^{(k)}\beta_{i}^{(k)} + \epsilon_{i,t}^{(k)},
\end{displaymath}
where $\epsilon_{i,t}^{(k)} \sim \mathcal{N}(0,(\sigma_i^{(k)})^2)$ and $\epsilon_{i_1,t}^{(k)} \perp \epsilon_{i_2,t}^{(k)}$ for $i_1 \neq i_2$. The regression equation for $i^{\textrm{th}}$ variable is
\begin{displaymath}
    \underbrace{\begin{pmatrix}
        X_{i,p+1}^{(k)} \\
        \vdots \\
        X_{i,T}^{(k)}
    \end{pmatrix}}_{\mathcal{Y}_i^{(k)}} 
    = \underbrace{\begin{pmatrix}
    X_{p}^{(k)'} & \ldots & X_{1}^{(k)'} \\
    \vdots & \ddots & \vdots \\
    X_{T-1}^{(k)'} & \ldots & X_{T-p}^{(k)'}
    \end{pmatrix}}_{\mathcal{X}^{(k)}} \beta_{i}^{(k)} + \underbrace{\begin{pmatrix}
        \epsilon_{i,p+1}^{(k)} \\
        \vdots \\
        \epsilon_{i,T}^{(k)}
    \end{pmatrix}}_{E_i^{(k)}}.
\end{displaymath}
For $\mathcal{Y}_i^{(k)}$, $i=1,\ldots,d$, we get the estimator $(\hat{\beta}_i^{(k)})\in\mathbb{R}^{dp}$ by Lasso program,
\begin{equation}\label{e:lasso}
    \hat{\beta}_i^{(k)} = \argmin_{\beta_i^{(k)} \in \mathbb{R}^{dp}}\left\{ \mathcal{L}(\beta_i^{(k)}) = \frac{1}{2N_k} \left\|\mathcal{Y}_i^{(k)} -  \mathcal{X}^{(k)}\beta_i^{(k)} \right\|_2^2 + \lambda_i^{(k)}\|\beta_i^{(k)}\|_1\right\}.
\end{equation}
By following the debiased Lasso estimator \citep[e.g.,][]{basu2024high,adamek2023lasso}, the each $i^{\textrm{th}}$ equation in \eqref{e:lasso} has a debiased $dp$-dimensional estimator
\begin{equation}\label{e:debiased}
    \tilde{\beta}_{i}^{(k)} = \hat{\beta}_{i}^{(k)} + \frac{1}{N_k}\hat{\Theta}^{(k)}\mathcal{X}^{(k)'}(\mathcal{Y}_i^{(k)} - \mathcal{X}^{(k)}\hat{\beta}_{i}^{(k)}), \quad i=1,\ldots,d,
\end{equation}
where $\hat{\Theta}^{(k)}$ is the approximated inverse of the Hessian at $k^{\textrm{th}}$ subject regarding the squared loss function $\frac{1}{2N_k}\|\mathcal{Y}_i^{(k)} - \mathcal{X}^{(k)}\hat{\beta}_{i}^{(k)}\|_2^2$, which is the common across all $\tilde{\beta}_{i}^{(k)}$s, $i=1,\ldots,d$. It is computed as $\hat{\Theta}^{(k)} = (\hat{\gamma}^{(k)})^{-2}\hat{\Gamma}^{(k)}$, which consists of
\begin{displaymath}
    \hat{\Gamma}^{(k)} = \begin{pmatrix}
        1 & -\hat{\gamma}_{1,2}^{(k)} & \ldots & -\hat{\gamma}_{1,dp}^{(k)} \\
        -\hat{\gamma}_{2,1}^{(k)} & 1 & \ldots & -\hat{\gamma}_{2,dp}^{(k)} \\
        \vdots & \vdots & \ddots & \vdots \\
        -\hat{\gamma}_{dp,1}^{(k)} & -\hat{\gamma}_{dp,2}^{(k)} & \ldots & 1
    \end{pmatrix},
\end{displaymath}
where $\hat{\gamma}_j^{(k)} = \{\hat{\gamma}_{j,l},l \in \{1,\dots,dp\}\setminus\{j\}\}$. Each of the vectors is obtained by the nodewise regression,
\begin{equation}\label{e:nodewise}
    \hat{\gamma}_j^{(k)} = \argmin_{\gamma_j^{(k)}\in\mathbb{R}^{dp-1}}\left\{\frac{1}{2N_k} \left\|\mathcal{X}_j^{(k)} - \mathcal{X}_{-j}^{(k)}\gamma_{j}^{(k)} \right\|_2^2 + \lambda_{j}\|\gamma_j^{(k)}\|_1\right\},
\end{equation}
where $\mathcal{X}_j^{(k)}$ is $j^{\textrm{th}}$ column in $\hat{\Gamma}^{(k)}$ and $\mathcal{X}_{-j}^{(k)}$ is $\hat{\Gamma}^{(k)}$ with $j^{\textrm{th}}$ column removed. By taking $(\hat{\tau}_{j}^{(k)})^2 = \frac{1}{N_k}\|\mathcal{X}_j^{(k)} - \mathcal{X}_{-j}^{(k)}\hat{\gamma}_{j}^{(k)} \|_2^2 + \lambda_{j}^{(k)}\|\hat{\gamma}_j^{(k)}\|_1$, we have 
\begin{displaymath}
    (\hat{\gamma}^{(k)})^{-2} = \textrm{diag}(1/(\hat{\tau}_1^{(k)})^2,\ldots,1/(\hat{\tau}_{dp}^{(k)})^2).
\end{displaymath}
Next, we aggregate the individual estimators to obtain the common paths, then separate the unique paths. By following \cite{maity2022meta}, one can view the identification of the common effects as finding a robust $M$-estimator for measurement contaminated by influential errors. That is, for the $j^{\textrm{th}}$ coordinate in the parameter vector of $j^{\textrm{th}}$ variable in $k^{\textrm{th}}$ subject $(\beta_i^{(k)})_j\in\mathbb{R}$, $k=1,\ldots,K$, $j=1,\ldots,dp$, $i=1,\ldots,d$, where $d$ is the number of variables and $p$ is the lag order of the VAR model, it is assumed to follow
\begin{equation}\label{e:contaminated}
    (\beta_i^{(k)})_j \sim (1-c) \mathcal{N}((\alpha_i^{(0)})_j,\sigma_0^2) + c G_{ij}, 
\end{equation}
for some unknown distribution $G_{ij}$, where $(\alpha_i^{(0)})_j$ is the $j^{\textrm{th}}$ coordinate of the common path of $i^{\textrm{th}}$ variable across $K$ subjects. 

Inspired by \eqref{e:contaminated}, the common path can be obtained by minimizing the sum of the redescending loss function \cite[e.g., Chapter 4.8 in][]{huber2011robust}
\begin{equation}\label{e:redescending}
    (\tilde{\alpha}_i^{(0)})_j
    = \argmin_{x\in\mathbb{R}}\left\{L_{ij}(x) := \sum_{k=1}^K \min\{((\tilde{\beta}_i^{(k)})_{j}-x)^2,\eta_j^2\}\right\}
\end{equation}
where $\tilde{\beta}_i^{(k)}$ is $dp$-dimensional debiased estimator of $\beta_i$ in \eqref{e:debiased}. Naturally, the unique paths are defined as $\tilde{\alpha}_i^{(k)} = \tilde{\beta}_i^{(k)} - \tilde{\alpha}_i^{(0)}$. To recover the sparsity, either the hard threshold (HT) or soft threshold (ST) is applied to $(\tilde{\alpha}_i^{(0)})_j$ and $(\tilde{\alpha}_i^{(k)})_j$ to produce sparse estimators $(\hat{\alpha}_i^{(0)})_j$ and $(\hat{\alpha}_i^{(k)})_j$, respectively, $j=1,\ldots,dp$, $i=1,\ldots,d$, and $k=1,\ldots,K$. The thresholds are defined as
\begin{align}
    HT_{\delta_k}(\theta_j) &= \theta_j1_{\{|\theta_j|\geq \delta_k\}}, \label{e:hard_thres} \\
    ST_{\delta_k}(\theta_j) &= \textrm{sign}(\theta_j)\max\{|\theta_j|-\delta_k,0\}, \label{e:soft_thres}
\end{align}
for some univariate parameter $\theta_j$. The desirable scales of the threshold are known as $\delta_k \sim \sqrt{\frac{\log q}{N_k}}$, $k=1,\ldots,K$, and $\delta_0 \sim \sqrt{\frac{\log q}{K N_{\min}}}$, where $q=d^2p$ and $N_{\min}= \min_k N_k$. Throughout this study, we only focus on hard thresholding, as the theoretical results described in Section \ref{sse:theory_estimation} are identical for both choices.

\begin{remark}\label{rem:cv_rule}
In practice, it is not feasible to tune ${\eta_{ij}}$ and ${\delta_{k}}$ individually. Therefore, we use three-layer cross-validation to determine the threshold $\eta = \eta_{ij}$ for the redescending loss function \eqref{e:redescending}, along with two constants, $c_0$ and $c_K$, defined as $\delta_0= \max_{k}\kappa(\Sigma_{\varepsilon}^{(k)})\sqrt{\frac{\log q}{c_0K N_{\min}}}$ and $\delta_k= c_K\kappa(\hat{\Sigma}_{\varepsilon}^{(k)})\sqrt{\frac{\log q}{N_k}}$, $k=1,\ldots,K$, where $\{\kappa(\hat{\Sigma}_{\varepsilon}^{(k)})\}_{k=1,\ldots,K}$ are the condition numbers of the estimated covariance matrices $\hat{\Sigma}_{\varepsilon}^{(k)}$ of the residuals, defined analogously as $\{\kappa(\Sigma_{\varepsilon}^{(k)})\}_{k=1,\ldots,K}$ in Section \ref{sse:theory_estimation}. Note that while the form of $\delta_0$ aligns with the theoretical motivation in Proposition \ref{prop:estim}, its $\delta_k$ used in the cross-validation is determined empirically. The mean prediction errors averaged over all $K$ subjects, similar to those commonly used in cross-validation, are highly sensitive to the values of $\eta{ij}$ but less sensitive to the grids of the other constants. Despite its robustness, we empirically observed that the level of sparsity depends substantially on the choices of constants $e_0$ and $c_K$. Furthermore, in some cases, the thresholding for unique components can be too stringent, even when the cross-validation error does not differ significantly. This often results in eliminating all unique paths if scaling is not properly applied. To our knowledge, there is no established standard for choosing the grids. Based on our empirical experiments, we set the grid for $c_0$ to range from 0.1 to 1 at equal intervals, the grid for $c_K$ from 0.5 to 1, and the grid for $\eta_{ij}$ from $\min_{i,j,k} (\tilde{\beta}_i^{(k)})j$ to $\max_{i,j,k} (\tilde{\beta}_i^{(k)})_j$ at equal intervals.
\end{remark}

\subsection{Theory on Estimation}
\label{sse:theory_estimation}

In this section, we establish the convergence rates of the proposed estimators. The main result, Proposition \ref{prop:estim}, relies on several technical lemmas, whose proofs are provided in Appendix \ref{se:lemmas}. Without loss of generality, we set $p=1$ and suppress the lag index in $\Phi_{\ell}^{(k)}$, writing simply $\Phi^{(k)}$. Note that a VAR$(p)$ model with $p>1$ can be equivalently reformulated as a VAR(1) model (see \citealp{basu2015regularized} for a related discussion).

To present the result, we define three standard conditions commonly applied in high-dimensional time series modeling, as discussed in \citet{basu2024high}.
\begin{itemize}
    \item[(a)] Stability regarding VAR transition matrix: for $\Phi^{(k)}=I-\Phi_1^{(k)}z$, $z\in\mathcal{C}$, consider 
    \begin{displaymath}
        \|\Phi^{(k)}\|=\max_{|z|\leq1}\|\Phi^{(k)}(z)\|, \quad \|(\Phi^{(k)})^{-1}\|=\max_{|z|\leq1}\|(\Phi^{(k)})^{-1}(z)\|.
    \end{displaymath}
    Then the condition number is $\kappa(\Phi^{(k)}):=\|\Phi^{(k)}\|\|(\Phi^{(k)})^{-1}\|<\infty$.
    \item[(b)] Error covariance matrix of VAR model: for $\sigma_{k,\max}^{2}=\max_{j}\sigma_{k,j}^2$ and $\sigma_{k,\min}^{2}=\min_{j}\sigma_{k,j}^2$, the condition number is $\kappa(\Sigma_{\epsilon}^{(k)}):=\sigma_{k,\max}^2/\sigma_{k,\min}^2<\infty$.
    \item[(c)] Sparsity level: $s_{0,k}=\|\Phi^{(k)}\|_0$ so that $s_{0,\max}=\max_k s_{0,k}$. Also, for $\Theta^{(k)}=(\Sigma^{(k)})^{-1}$, define 
    \begin{equation}\label{e:sparse}
        s_{j,k}= \|\Theta_{j:}^{(k)}\|_0, \quad s_{\max,k} = \max_j s_{j,k}.
    \end{equation}
\end{itemize}
Note that \cite{van2014asymptotically}, on which our paper is founded, requires sparsity of $\Theta^{(k)}$, which the theory in \cite{maity2022meta} also relies on (see Sections 2.4 and 2.5 therein). They defined  
\begin{displaymath}
    s_k=\sum_{1\leq i,j\leq d}|[\Theta^{(k)}]_{ij}|=\textrm{vec}(\Theta^{(k)}).
\end{displaymath}
In contrast, \citet{basu2024high} adopt the weak sparsity assumption for $\Theta^{(k)}$ as proposed by \citet{javanmard2014confidence}. Regardless of the specific sparsity assumptions employed, the resulting convergence rates remain unchanged, consistent with the findings of \citet{zhang2014confidence}. Consequently, the primary purpose of our proof is to bridge the gap between these differing assumptions.

\begin{assumption}\label{ass:ts}
For each $k$, we consider an asymptotic regime where $d,N_k\to\infty$,
\begin{equation}
    \kappa^2(\Sigma_{\epsilon}^{(k)})\kappa^4(\Phi^{(k)})\|\Phi^{(k)}\|^2\max\{s_{\max,k},s_{0,k}\}\frac{\log d}{\sqrt{N_k}} \to 0. 
\end{equation}
\end{assumption}
\noindent This allows $d$ to grow with $N_k$ as long as $\max\{s_{\max,k},s_{0,k}\}$ grow as $\mathcal{O}(\sqrt{N_k})$. In particular, according to \cite{basu2024high}, if the eigenvalues of $\Sigma_{\epsilon}^{(k)}$ and the modulus of eigenvalues of $\Phi^{(k)}(z)$ are both bounded away from zero and infinity, the terms involving $\Sigma_{\epsilon}^{(k)}$ and $\Phi^{(k)}$ will not appear in the convergence analysis and match the known error bounds in the high-dimensional regression with i.i.d. data.

In addition, we introduce a set of conditions that directly correspond to Assumption 4 in \citet{maity2022meta}.
\begin{assumption}\label{ass:aggregation}
The following conditions are assumed to be held. 
\begin{enumerate}
    \item[(a)] Let $I_j$ be the set of indices for $(\beta_i^{(k)})_j$'s which are considered as inliers. We assume $|I_j|/K \geq 4/7$.
    \item[(b)] Let $\mu_j = \frac{1}{|I_j|}\sum_{k\in I_j}(\beta^{(k)})_j$. Let $\delta$ be the smallest positive real number such that $(\beta_i^{(k)})_j\in[\mu_j-\delta,\mu_j+\delta]$ for all $k\in I_j$. We assume that none of the $(\beta_i^{(k)})_j$'s are in the intervals $[\mu_j-5\delta,\mu_j-\delta)$ or $(\mu_j+\delta,\mu_j+5\delta]$.
    \item[(c)] Let $\delta_2 = \min_{k_1 \in I_j, k_2\notin I_j}|(\beta_i^{(k_1)})_j - (\beta_i^{(k_2)})_j|$ is the minimum separation between inliers and outliers. Clearly, $4\delta < \delta_2$. We choose $\eta_j$ such that $2\delta < \eta_j < \delta_2/2$.
\end{enumerate}
\end{assumption}
\noindent Note that conditions (a) and (c) are technical conditions required to complete Result 5 in \cite{maity2022meta}. Condition (b) is crucial, as it distinguishes between inliers and outliers. The number factors appearing in conditions (a) through (c) are symbolic rather than carrying actual meaning.

\begin{proposition}\label{prop:estim}
Suppose that Assumptions \ref{ass:ts} and \ref{ass:aggregation} hold. Define $\delta_0 = \max_k\kappa(\Sigma_{\epsilon}^{(k)})\sqrt{\frac{\log d^2}{(1-e)KN_{\min}}}$, $0<e<1$, and $\delta_k = \kappa(\Sigma_{\epsilon}^{(k)})\sqrt{\frac{\log d^2}{N_k}}$. For sufficiently large $N_k$, $k=1,\ldots,K$, we have the following:
\begin{enumerate}
    \item[(a)] $\|\hat{\alpha}^{(0)} - \alpha^{(0)}\|_{\infty} \leq \mathcal{O}_{\mathbb{P}}\left(\sqrt{\frac{\log d^2}{K N_{\min}}}\right)$,
    \item[(b)] $\|\hat{\alpha}^{(0)} - \alpha^{(0)}\|_{1} \leq \mathcal{O}_{\mathbb{P}}\left(s_{0,\max}\sqrt{\frac{\log d^2}{K N_{\min}}}\right)$,
    \item[(c)] $\|\hat{\alpha}^{(0)} - \alpha^{(0)}\|_{2} \leq \mathcal{O}_{\mathbb{P}}\left(\sqrt{\frac{s_{0,\max} \log d^2}{K N_{\min}}}\right)$,
    \item[(d)] $\|\hat{\alpha}^{(k)} - \alpha^{(k)}\|_{\infty} \leq \mathcal{O}_{\mathbb{P}}\left(\sqrt{\frac{\log d^2}{N_{k}}}\right)$,
    \item[(e)] $\|\hat{\alpha}^{(k)} - \alpha^{(k)}\|_{1} \leq \mathcal{O}_{\mathbb{P}}\left(s_{k,\max}\sqrt{\frac{\log d^2}{ N_{k}}}\right)$,
    \item[(f)] $\|\hat{\alpha}^{(k)} - \alpha^{(k)}\|_{2} \leq \mathcal{O}_{\mathbb{P}}\left(\sqrt{\frac{s_{k,\max} \log d^2}{N_{k}}}\right)$.
\end{enumerate} 
\end{proposition}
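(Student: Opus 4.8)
The plan is to decouple the argument into a coordinatewise (pre-thresholding) analysis followed by a thresholding-and-norm-conversion step. First I would establish the asymptotic linear expansion of each individual debiased estimator in \eqref{e:debiased}, writing $\tilde{\beta}_i^{(k)} - \beta_i^{(k)} = \frac{1}{N_k}\hat{\Theta}^{(k)}\mathcal{X}^{(k)'}E_i^{(k)} + \Delta_i^{(k)}$, where the leading term is a weighted sum of innovations with coordinatewise variance of order $(\sigma_i^{(k)})^2/N_k$ and $\Delta_i^{(k)}$ collects the bias from the nodewise approximation \eqref{e:nodewise} of $\Theta^{(k)}$. Invoking the debiased-Lasso machinery of \citet{van2014asymptotically}, as extended to VAR data in \citet{basu2024high}, together with Assumption \ref{ass:ts}, the remainder is uniformly $o_{\mathbb{P}}(N_k^{-1/2})$, so a union bound over the $d^2$ coordinates yields $\|\tilde{\beta}_i^{(k)} - \beta_i^{(k)}\|_\infty = \mathcal{O}_{\mathbb{P}}(\kappa(\Sigma_\epsilon^{(k)})\sqrt{\log d^2 / N_k})$. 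This is precisely where I would bridge the two sparsity conventions for $\Theta^{(k)}$ discussed after \eqref{e:sparse}, using the rate-invariance noted there.

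Next I would cast the aggregation step as the robust $M$-estimation problem of \citet{maity2022meta}. Under the contamination model \eqref{e:contaminated}, the inlier set $I_j$ consists exactly of the subjects for which coordinate $j$ is genuinely common, so each $(\tilde{\beta}_i^{(k)})_j$, $k \in I_j$, concentrates around $(\alpha_i^{(0)})_j$ with the Step-1 fluctuation. Verifying Assumption \ref{ass:aggregation}(a)--(c) places the problem in the regime of Result 5 of \citet{maity2022meta}, where the redescending minimizer \eqref{e:redescending} behaves like a mean over $I_j$. Since $|I_j| \geq \tfrac{4}{7}K$ and each inlier contributes effective sample size at least $N_{\min}$, the aggregate variance is of order $\sigma^2/(K N_{\min})$, delivering the pre-thresholding bound $\|\tilde{\alpha}^{(0)} - \alpha^{(0)}\|_\infty = \mathcal{O}_{\mathbb{P}}(\sqrt{\log d^2 / (K N_{\min})})$. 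For the unique paths I would subtract, $\tilde{\alpha}^{(k)} - \alpha^{(k)} = (\tilde{\beta}^{(k)} - \beta^{(k)}) - (\tilde{\alpha}^{(0)} - \alpha^{(0)})$, so the slower individual term of order $\sqrt{\log d^2 / N_k}$ dominates the common-path error.

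To pass from these pre-thresholding bounds to parts (a) and (d), I would show that with $\delta_0$ and $\delta_k$ set to the respective noise levels, $HT_{\delta}$ in \eqref{e:hard_thres} zeroes the off-support coordinates with high probability while perturbing each on-support coordinate by at most $\mathcal{O}(\delta)$; the $\ell_\infty$ rates are then preserved up to constants and the estimated supports have cardinality $\mathcal{O}(s_{0,\max})$ and $\mathcal{O}(s_{k,\max})$, respectively. Parts (b), (c), (e), (f) follow immediately by restricting the error vector to these $\mathcal{O}(s)$-dimensional supports and applying $\|v\|_1 \leq s\|v\|_\infty$ and $\|v\|_2 \leq \sqrt{s}\,\|v\|_\infty$.

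The main obstacle will be the second step: transferring the aggregation result of \citet{maity2022meta}, which is stated for (near-)Gaussian measurements, to the debiased VAR estimators, whose coordinates are only asymptotically Gaussian and exhibit temporal dependence. Concretely, I must control the remainder $\Delta_i^{(k)}$ uniformly over all $d^2$ coordinates and all $K$ subjects, verify that the inlier/outlier separation posited in Assumption \ref{ass:aggregation}(b)--(c) survives the estimation noise, and handle the heterogeneous per-subject variances in the variance-reduction argument, since the redescending minimizer does not literally form an equal-weight average and the effective sample size $K N_{\min}$ must be recovered from the weighted combination over $I_j$.
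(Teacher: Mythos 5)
Your proposal is correct and follows essentially the same route as the paper: a coordinatewise $\ell_\infty$ bound on the debiased estimators, aggregation via Result 5 of \cite{maity2022meta} (the redescending minimizer equals the inlier mean, yielding the $\sqrt{\log d^2/(K N_{\min})}$ variance reduction, with the unique-path error controlled by the triangle inequality), and a final thresholding-plus-norm-conversion step --- precisely the content of the paper's Lemmas \ref{lem:delta_bdd}--\ref{lem:alpha_bdd} and its proof of Proposition \ref{prop:estim}. The only cosmetic difference is that you re-derive the hard-thresholding and norm-conversion facts directly, whereas the paper simply invokes Lemma 11 of \cite{lee2017communication}.
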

\noindent Note that the convergence rates of both the common and unique components \eqref{e:convergence_rate_multiVAR} change, as they now depend on their respective sparsity levels.

\begin{proof}
Recall the Lemma 11 in \cite{lee2017communication}: If $\|\tilde{\theta} - \theta^*\|_{\infty}<\delta$, then for $\hat{\theta}=HT_{\delta}(\tilde{\theta})$, where $HT_{\delta}(\cdot)$ is defined in \eqref{e:hard_thres}, the following holds:
\begin{enumerate}
    \item[(a)] $\|\hat{\theta} - \theta^*\|_{\infty} < 2\delta$,
    \item[(b)] $\|\hat{\theta} - \theta^*\|_{2} < 2\sqrt{2 s}\delta$,
    \item[(c)] $\|\hat{\theta} - \theta^*\|_{1} < 2\sqrt{2}s\delta$,
\end{enumerate} 
where $s$ is the sparsity level of $\theta^*$. The analogous results hold for $\hat{\theta}=ST_{\delta}(\tilde{\theta})$, where $ST_{\delta}(\cdot)$ is defined in \eqref{e:soft_thres}. It remains to show that values of $\delta_k$, $k=0,1,\ldots,K$, satisfy the condition in Proposition \ref{prop:estim}, which is directly from Lemma \ref{lem:alpha_bdd}.   
\end{proof}

\section{Hypothesis Tests}
\label{se:tests}

In this section, we introduce the hypothesis testing framework for the proposed approach. Within this framework, we describe three representative and practically relevant tests. First, we present the test of nullity and the test of homogeneity across all subjects; these two tests share a common foundation, as they represent special cases of a more general setting. In addition, we introduce the test of significance for common paths, which is also derived from the framework.

\subsection{Inference Procedure}
\label{sse:approach_test}

We begin by formulating the most general setting for the hypothesis tests. Define $\hat{V}_{ij}^{(k)} = \hat{\sigma}_{k,i}^2[\hat{\Theta}^{(k)}\hat{\Sigma}^{(k)}\hat{\Theta}^{(k)'}]_{jj}$ where $\hat{\Sigma}^{(k)}=\frac{1}{N_k}\mathcal{X}^{(k)'}\mathcal{X}^{(k)}$, $\hat{\Theta}^{(k)}$ is defined in \eqref{e:debiased}, and 
\begin{displaymath}
    \hat{\sigma}_{k,i}^2 = \frac{1}{N_k}\sum_{t=1}^{N_k}\left((\mathcal{Y}_i^{(k)})_{t} - (\mathcal{X}^{(k)})_{t:}\hat{\beta}_{i}^{(k)}\right)^2, \quad i=1,\ldots,d, \ k=1,\ldots,K.
\end{displaymath}
Let
$\tilde{\beta}_{(i,j)}=((\tilde{\beta}_i^{(1)})_j,\ldots,(\tilde{\beta}_j^{(K)})_j)'\in\mathbb{R}^K$ and $\beta_{(i,j)}=((\beta_i^{(1)})_j,\ldots,(\beta_i^{(K)})_j)'\in\mathbb{R}^K$ be $K$-dimensional estimators of $j^{\textrm{th}}$ entries in $i^{\textrm{th}}$ variable across $K$ subjects and its population analog. We also denote $K$-dimensional diagonal matrix from $\hat{V}_{ij}^{(k)}$, $k=1,\ldots,K$, by 
\begin{equation}\label{e:def_V}
    \hat{V}_{(i,j)} = [\hat{V}^{(k)}_{ij}]_{k=1}^K \in\mathbb{R}^{K\times K}.
\end{equation}
In addition, we have a contrast $D\in\mathbb{R}^{a \times K}$ and a scaler matrix $M$,
\begin{equation}\label{e:def_DM}
    M = \textrm{diag}(\sqrt{N_1},\ldots,\sqrt{N_K})\in\mathbb{R}^{K\times K}.
\end{equation}
Suppose that we are interested in the hypothesis $D\beta_{(i,j)} = c$ for some contrast $D$ whose rank is $\textrm{rank}(D) = a$ (that is, every hypothesis is tested separately). The test statistic 
\begin{equation}\label{e:chi_squared_general}
    \chi_{ij}^2(c) =[D\tilde{\beta}_{(i,j)}-c]'[DM\hat{V}_{(i,j)}MD']^{-1}[D\tilde{\beta}_{(i,j)}-c],
\end{equation}
will follow $\chi^2$-distribution with degree of freedom $a$. Note that the result corresponds to inference on a single entry in a single-subject VAR model in Section 2.7 in \cite{basu2024high} by taking $D=\textrm{diag}(1,0,\ldots,0)\in\mathbb{R}^{K}$ and $M=\textrm{diag}(\sqrt{N_1},0,\ldots,0)$.

We use \eqref{e:chi_squared_general} to introduce hypothesis tests that are practically relevant to our setting. We begin with the test of nullity, which assesses whether the paths are null across all subjects. For $i=1,\ldots,d$ and $j=1,\ldots,dp$, we are interested in
\begin{equation}\label{e:h0_null}
    H_0 : (\beta_i^{(1)})_{j} = \ldots (\beta_i^{(K)})_j=0 \quad vs \quad H_1: \textrm{not }H_0.
\end{equation}
Write the hypothesis \eqref{e:h0_null} into
\begin{displaymath}
    H_0: D\beta_{(i,j)} 
    = \begin{pmatrix}
    1 & 0 & \ldots & 0 \\
    0 & 1 & \ldots& 0 \\
    \vdots & \vdots & \ddots & \vdots \\
    0 & 0 & \ldots & 1
    \end{pmatrix}\begin{pmatrix}
        (\beta_{i}^{(1)})_j \\ (\beta_{i}^{(2)})_j \\ \vdots \\ (\beta_{i}^{(K)})_j
    \end{pmatrix} = 0 \quad vs \quad H_1: \textrm{not }H_0.
\end{displaymath}
Under the null hypotheses in \eqref{e:h0_null} is true, the test statistic $\chi_{ij}^2(0)$ follows $\chi^2$ distribution with degree of freedom $K$. We reject the null hypothesis if
\begin{equation}\label{e:pval_null}
    \mathbb{P}\left(\chi^2(K) > \chi_{ij}^2(0) \right) \leq \alpha.
\end{equation}
for significance level $\alpha$. We refer to this as the test of nullity.

Next, we define the test of homogeneity, which assesses whether the paths are consistent across all subjects. For $i=1,\ldots,d$ and $j=1,\ldots,dp$,
\begin{equation}\label{e:h0_homo}
    H_0 : (\beta_i^{(1)})_j = \ldots = (\beta_i^{(K)})_j \quad vs \quad H_1: \textrm{not }H_0.
\end{equation}
Write the hypothesis \eqref{e:h0_homo} into
\begin{displaymath}
    H_0: D\beta_{(i,j)} 
    = \begin{pmatrix}
    1 & -1 & 0 & \ldots & 0 & 0 \\
    0 & 1 & -1 & \ldots & 0 & 0 \\
    \vdots & \vdots & \vdots & \ddots & \vdots & \vdots \\
    0 & 0 & 0 & \ldots & 1 & -1
    \end{pmatrix}\begin{pmatrix}
        (\beta_{i}^{(1)})_j \\ (\beta_{i}^{(2)})_j \\ (\beta_{i}^{(3)})_j \\ \vdots \\ (\beta_{i}^{(K-1)})_j \\ (\beta_{i}^{(K)})_j
    \end{pmatrix} = 0 \quad vs \quad H_1: \textrm{not }H_0.
\end{displaymath}
Under the null hypotheses in \eqref{e:h0_homo} is true, the test statistic $\chi_{ij}^2(0)$ follows $\chi^2$ distribution with degree of freedom $K-1$. Then we can compute $p$-values, similar to \eqref{e:pval_null}.

Finally, we propose the test of significance, which evaluates the importance of a common path $(\alpha_i^{(0)})_j$ across subjects. Note that we use the standard Z-test while the hypothesis tests described above are based on Wald statistics. Suppose that we are interested in
\begin{equation}\label{e:h0_common}
    H_0: (\alpha_i^{(0)})_j = 0, \quad H_1: (\alpha_i^{(0)})_j \neq 0,
\end{equation}
by taking into account the average of the variances of the subjects that contribute to the common path. That is, let $\mathcal{J}_{ij}\subseteq\{1,\ldots,K\}$ be the set of indices of $k$ whose value of $(\tilde{\beta}_i^{(k)})_j$ is considered as inliers by \eqref{e:redescending}: 
\begin{equation}\label{e:subjects_inlier}
|(\tilde{\beta}_i^{(k)})_j - (\tilde{\alpha}_i^{(0)})_j|\leq \eta_j
\end{equation}
Then for $N_{ij}=\frac{1}{|\mathcal{J}_{ij}|}\sum_{k\in\mathcal{J}_{ij}}N_k$ we have
\begin{equation}\label{e:clt_alpha0}
 Z_{ij}((\alpha_{i}^{(0)})_j) = \frac{\sqrt{N_{ij}}((\tilde{\alpha}_{i}^{(0)})_j - (\alpha_{i}^{(0)})_j)}{\sqrt{\frac{1}{|\mathcal{J}_{ij}|^2}\sum_{k\in\mathcal{J}_{ij}}\hat{V}_{ij}}} \stackrel{d}{\to} \mathcal{N}(0,1). 
\end{equation}
Therefore, the hypothesis \eqref{e:h0_common} is considered as a standard normal test: We reject the null hypothesis in \eqref{e:h0_common} if $\mathbb{P}(Z > |Z_{ij}(0)|) < \alpha/2$ for a standard normal random variable $Z$ with the significance level $\alpha$.

\subsection{Theory on Inference}
\label{sse:theory_test}

In this section, we establish the asymptotic distributions of the test statistics. Specifically, Proposition \ref{prop:test} shows that the Wald-type test statistic \eqref{e:chi_squared_general} converges in distribution to a chi-squared random variable. Corollary \ref{cor:test} then presents the three practical hypothesis tests derived from this result.

\begin{proposition}\label{prop:test}
Suppose that Assumptions \ref{ass:ts} and \ref{ass:aggregation} hold. Consider the hypothesis test written in the form $D\beta = c$ for some contrast $D$ with rank $a$ and constant $c$. Suppose that $\hat{V}_{(i,j)},D,M$ are defined in \eqref{e:def_V} and \eqref{e:def_DM}. Under the null hypothesis,
\begin{equation}\label{e:convergence_chi_squared}
    \sup_{x\in\mathbb{R}}\left| \mathbb{P}\left([D\tilde{\beta}_{(i,j)}]'[DM\hat{V}_{(i,j)}MD']^{-1}[D\tilde{\beta}_{(i,j)}] \leq x \right) - \mathbb{P}(\chi^2(a) \leq x) \right| = \mathcal{O}_{\mathbb{P}}(1).
\end{equation}
where $\chi^2(a)$ is the Wald-type test statistic with degree of freedom $a$.
\end{proposition}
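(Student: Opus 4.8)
The plan is to reduce the Wald statistic \eqref{e:chi_squared_general} to a quadratic form in an asymptotically Gaussian vector and then invoke the continuous mapping theorem. The starting point is the per-subject asymptotic linear (Bahadur-type) representation of the debiased estimator \eqref{e:debiased}. With $\hat{\Sigma}^{(k)} = \frac{1}{N_k}\mathcal{X}^{(k)'}\mathcal{X}^{(k)}$, its definition yields
\begin{equation*}
\sqrt{N_k}\bigl((\tilde{\beta}_i^{(k)})_j - (\beta_i^{(k)})_j\bigr) = \frac{1}{\sqrt{N_k}}\bigl[\hat{\Theta}^{(k)}\mathcal{X}^{(k)'}E_i^{(k)}\bigr]_j + \sqrt{N_k}\,\bigl[(\hat{\Theta}^{(k)}\hat{\Sigma}^{(k)} - I)(\beta_i^{(k)} - \hat{\beta}_i^{(k)})\bigr]_j .
\end{equation*}
First I would show, following the single-subject debiased-Lasso theory for dependent data in \cite{basu2024high} (adapting \cite{van2014asymptotically,javanmard2014confidence}), that under Assumption \ref{ass:ts} the second (remainder) term is $o_{\mathbb{P}}(1)$, while the first (linear) term is a normalized sum of the stationary martingale-difference sequence $\{[\hat{\Theta}^{(k)}]_{j:}(\mathcal{X}^{(k)}_{t:})'\epsilon_{i,t}^{(k)}\}_t$ --- the innovation being independent of the regressors --- so a martingale CLT gives the limit $\mathcal{N}(0, V_{ij}^{(k)})$ with $V_{ij}^{(k)} = \sigma_{k,i}^2[\Theta^{(k)}\Sigma^{(k)}\Theta^{(k)'}]_{jj}$. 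This is the coordinatewise limit that $\hat{V}_{ij}^{(k)}$ in \eqref{e:def_V} is built to estimate.

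Next I would aggregate over the $K$ independent subjects. Independence of the series makes the remainders vanish jointly and the linear terms mutually independent across $k$, so the rescaled vector $M(\tilde{\beta}_{(i,j)} - \beta_{(i,j)})$, with $M$ from \eqref{e:def_DM} placing all subjects on a common order, converges to $\mathcal{N}(0, V_{(i,j)})$ where $V_{(i,j)} = \mathrm{diag}(V_{ij}^{(1)},\ldots,V_{ij}^{(K)})$; the diagonal structure is exactly the cross-subject independence. I would then record the plug-in consistency $\hat{V}_{(i,j)} \stackrel{\mathbb{P}}{\to} V_{(i,j)}$, i.e.\ $\hat{\sigma}_{k,i}^2 \stackrel{\mathbb{P}}{\to} \sigma_{k,i}^2$ and $\hat{\Theta}^{(k)}\hat{\Sigma}^{(k)}\hat{\Theta}^{(k)'} \stackrel{\mathbb{P}}{\to} \Theta^{(k)}\Sigma^{(k)}\Theta^{(k)'}$, which follow from the nodewise-regression bounds underlying \eqref{e:debiased} together with the rates already established in Proposition \ref{prop:estim}.

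Combining these, under the null $D\beta_{(i,j)} = c$ the centered statistic is a quadratic form in $M(\tilde{\beta}_{(i,j)} - \beta_{(i,j)})$ through the rank-$a$ linear map induced by $D$ and $M$, studentized by $\hat{V}_{(i,j)}$. Whitening by the symmetric inverse square root of the limiting contrast covariance (invertible since $\mathrm{rank}(D) = a$ and $V_{(i,j)}$ is positive definite), a Slutsky argument replaces $V_{(i,j)}$ by $\hat{V}_{(i,j)}$, and the continuous mapping theorem yields $\chi_{ij}^2(c) = Z'Z + o_{\mathbb{P}}(1)$ with $Z \stackrel{d}{\to} \mathcal{N}(0, I_a)$; hence $\chi_{ij}^2(c) \stackrel{d}{\to} \chi^2(a)$. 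Because the $\chi^2(a)$ law has a continuous distribution function, Polya's theorem upgrades this convergence in distribution to the uniform statement in \eqref{e:convergence_chi_squared}, whose supremum is $o(1)$ (the stated $\mathcal{O}_{\mathbb{P}}(1)$ should read $o_{\mathbb{P}}(1)$).

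The main obstacle is the first step: establishing asymptotic linearity uniformly under temporal dependence. Showing the remainder is $o_{\mathbb{P}}(1)$ hinges on bounds such as $\|\hat{\Theta}^{(k)}\hat{\Sigma}^{(k)} - I\|_{\infty} = \mathcal{O}_{\mathbb{P}}(\sqrt{\log d / N_k})$ and $\|\hat{\beta}_i^{(k)} - \beta_i^{(k)}\|_1 = \mathcal{O}_{\mathbb{P}}(s_{0,k}\sqrt{\log d / N_k})$ for VAR data, whose product must be negligible --- precisely where the scaling in Assumption \ref{ass:ts} enters and where the concentration lemmas for the VAR Gram matrix in Appendix \ref{se:lemmas} are needed. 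A secondary subtlety, already flagged in the excerpt, is reconciling the exact-sparsity assumption on $\Theta^{(k)}$ of \cite{van2014asymptotically} with the weak-sparsity formulation of \cite{basu2024high,javanmard2014confidence}; as \cite{zhang2014confidence} shows, both give the same limiting variance, so the $\chi^2(a)$ limit is unaffected.
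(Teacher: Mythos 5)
Your proposal is correct and follows essentially the same route as the paper's proof: the same decomposition of $\sqrt{N_k}\bigl((\tilde{\beta}_i^{(k)})_j - (\beta_i^{(k)})_j\bigr)$ into a linear term and the remainder $\sqrt{N_k}(\Delta_i^{(k)})_j$, negligibility of the remainder under Assumption \ref{ass:ts}, the per-subject CLT from the debiased-Lasso theory of \cite{basu2024high}, aggregation across independent subjects, plug-in consistency of $\hat{V}_{(i,j)}$, and the continuous-mapping/Slutsky conclusion. The only differences are cosmetic --- you finish by whitening and Polya's theorem where the paper invokes Cochran's theorem, and your observation that the stated $\mathcal{O}_{\mathbb{P}}(1)$ should read $o_{\mathbb{P}}(1)$ is a fair correction of the paper's notation.
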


\begin{proof}
Recall that from \eqref{e:difference},
\begin{displaymath}
    \sqrt{N_k}\left( (\tilde{\beta}_i^{(k)})_j - (\beta_{i}^{(k)})_{j} \right) = \frac{1}{\sqrt{N_k}}\hat{\Theta}_{j:}^{(k)}\mathcal{X}^{(k)'}E_{i}^{(k)} + \sqrt{N_k}(\Delta_i^{(k)})_j.
\end{displaymath}
From Lemma \ref{lem:delta_bdd}, $\sqrt{N_k}\|\Delta_i^{(k)}\|_{\infty} = \mathcal{O}_{\mathbb{P}}(1)$. By using \eqref{e:clt} and Lemma \ref{lem:scaled_var}, the first term converges to $\mathcal{N}(0,\sigma_{k,i}^2\Theta_{jj}^{(k)})$. Hence, by Slutsky's theorem,
\begin{equation}\label{e:normal}
    \frac{\sqrt{N_k} \left((\tilde{\beta}_i^{(k)})_j - (\beta_{i}^{(k)})_{j} \right)}{\hat{\sigma}_{k,i}\sqrt{[\hat{\Theta}^{(k)}\hat{\Sigma}^{(k)}\hat{\Theta}^{(k)'}]_{jj}}} \stackrel{d}{\to} \mathcal{N}(0,1). 
\end{equation}
Recall that $\hat{V}_{(i,j)} = \textrm{diag}(\hat{V}_{ij}^{(1)},\ldots,\hat{V}_{ij}^{(K)})$, $\tilde{\beta}_{(i,j)}=((\tilde{\beta}_i^{(1)})_j,\ldots,(\tilde{\beta}_i^{(K)})_j)'$ and $\beta_{(i,j)}=((\beta_i^{(1)})_j,\ldots,(\beta_i^{(K)})_j)'$. For a given contrast $D$ with rank $a$ and $M=\textrm{diag}(\sqrt{N_1},\ldots,\sqrt{N_K})$, by Cram\'{e}r-Wold theorem,
\begin{displaymath}
    DM(\tilde{\beta}_{(i,j)} - \beta_{(i,j)}) \stackrel{d}{\to} \mathcal{N}(0,DMV_{(i,j)}MD').
\end{displaymath}
Since each diagonal entry in $\hat{V}_{(i,j)}$ converges in probability to the corresponding diagonal entry in $V_{(i,j)}$, and the multiplication $D\hat{V}_{(i,j)}D'$ is continuous, we have $D\hat{V}_{(i,j)}D'\stackrel{p}{\to}DV_{(i,j)}D'$ by using continuous mapping theorem with a sufficiently large $N_{\min}$. Then, we have
\begin{displaymath}
    \sup_{x\in\mathbb{R}}\left|\mathbb{P}\left(\frac{DM(\tilde{\beta}_{(i,j)} - \beta_{(i,j)})}{\sqrt{D\hat{V}_{(i,J)}D'}} \leq x\right) - \Phi(x) \right| =\mathcal{O}_{\mathbb{P}}(1).
\end{displaymath}
where $\Phi(\cdot)$ is the standard normal CDF. Hence, for $D\beta_{(i,j)}=c$ with the contrast $D$ of rank $a$,
\begin{displaymath}
    \sup_{x\in\mathbb{R}}\left| \mathbb{P}\left([D\tilde{\beta}_{(i,j)}-c]'[DM\hat{V}_{(i,j)}MD']^{-1}[D\tilde{\beta}_{(i,j)}-c] \leq x \right) - \mathbb{P}(\chi^2(a) \leq x) \right| = \mathcal{O}_{\mathbb{P}}(1), 
\end{displaymath}
holds by Cochran's theorem \citep{cochran1934distribution}.
\end{proof}

\begin{corollary}\label{cor:test}
For the three specific hypothesis tests,
\begin{enumerate}
    \item[(a)] Under $H_0$ of the hypothesis \eqref{e:h0_null}, the test statistic $\chi_{ij}^2(0)$ follows $\chi^2$ distribution with degree of freedom $K$.
    \item[(b)] Under $H_0$ of the hypothesis \eqref{e:h0_homo}, the test statistic $\chi_{ij}^2(0)$ follows $\chi^2$ distribution with degree of freedom $K-1$.
    \item[(c)] Under $H_0$ of the hypothesis \eqref{e:h0_common}, the test statistic $Z_{ij}(0)$ follows the standard normal distribution.
\end{enumerate}
\end{corollary}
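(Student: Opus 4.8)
The plan is to obtain parts (a) and (b) as immediate specializations of Proposition~\ref{prop:test}, and to treat part (c) separately, since it is a $Z$-test built on the asymptotic normality of the aggregated common-path estimator rather than a Wald statistic.

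For part (a), I would take the contrast $D = I_K$ and $c = 0$, so that $D\beta_{(i,j)} = 0$ encodes exactly the nullity hypothesis \eqref{e:h0_null}. Since $\mathrm{rank}(I_K) = K$, Proposition~\ref{prop:test} gives $\chi_{ij}^2(0) \stackrel{d}{\to} \chi^2(K)$. For part (b), I would take $D$ to be the $(K-1)\times K$ first-difference matrix displayed after \eqref{e:h0_homo}, whose $k$-th row has $+1$ in position $k$ and $-1$ in position $k+1$; this has rank $K-1$, and $D\beta_{(i,j)} = 0$ is equivalent to the homogeneity hypothesis \eqref{e:h0_homo}. Applying Proposition~\ref{prop:test} with $a = K-1$ and $c = 0$ yields $\chi_{ij}^2(0) \stackrel{d}{\to} \chi^2(K-1)$. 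In both cases one need only verify the rank of $D$ and that the null reparametrizes as $D\beta_{(i,j)} = c$.

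For part (c), the argument is a CLT for the inlier average. First I would show that, with probability tending to one, the data-driven inlier set $\mathcal{J}_{ij}$ defined by \eqref{e:subjects_inlier} coincides with the true inlier set $I_j$ of Assumption~\ref{ass:aggregation}; this follows from the separation bound $4\delta < \delta_2$ together with the admissible choice $2\delta < \eta_j < \delta_2/2$, which forces inliers inside the band of width $\eta_j$ and outliers outside it. On this event the redescending minimizer \eqref{e:redescending} reduces, up to negligible terms, to the inlier average $(\tilde{\alpha}_i^{(0)})_j = |\mathcal{J}_{ij}|^{-1}\sum_{k\in\mathcal{J}_{ij}}(\tilde{\beta}_i^{(k)})_j$. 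By \eqref{e:normal}, each inlier coordinate satisfies $\sqrt{N_k}\,((\tilde{\beta}_i^{(k)})_j - (\alpha_i^{(0)})_j) \stackrel{d}{\to}\mathcal{N}(0, V_{ij}^{(k)})$, where $V_{ij}^{(k)}$ is the limit of $\hat{V}_{ij}^{(k)}$ and I use that inliers share the common value $(\beta_i^{(k)})_j = (\alpha_i^{(0)})_j$. Since the $K$ subjects are independent, the summands are independent, and a Lindeberg--Feller CLT gives asymptotic normality of the centered average with variance $|\mathcal{J}_{ij}|^{-2}\sum_{k\in\mathcal{J}_{ij}} V_{ij}^{(k)}/N_k$. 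Plugging in the consistent estimators $\hat{V}_{ij}^{(k)}$ and using Slutsky's theorem together with the equal-sample-size convention $N_k \equiv N$ (so that $N_{ij} = N$) standardizes the average to $Z_{ij}((\alpha_i^{(0)})_j) \stackrel{d}{\to}\mathcal{N}(0,1)$ as in \eqref{e:clt_alpha0}; setting $(\alpha_i^{(0)})_j = 0$ gives the claim.

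The main obstacle is controlling the randomness of the inlier set in part (c): the event $\{\mathcal{J}_{ij} = I_j\}$ must hold with probability approaching one, so that conditioning on it does not disturb the limiting normal law. This reduces to showing that the estimation error of each debiased coordinate is $o_{\mathbb{P}}(\delta_2 - \eta_j)$, ensuring no outlier is misclassified as an inlier and vice versa; this is exactly the sup-norm control supplied by Lemma~\ref{lem:delta_bdd} and the rate $\sqrt{\log d^2 / N_k}$ from Proposition~\ref{prop:estim}. A secondary technical point is justifying the asymptotic equivalence between the redescending $M$-estimator and the inlier mean, which follows from the first-order optimality condition for \eqref{e:redescending} once band membership is fixed.
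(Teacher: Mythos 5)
Your parts (a) and (b) coincide with the paper's treatment: it declares them immediate from Proposition \ref{prop:test}, and your explicit contrasts $D=I_K$ (rank $K$) and the first-difference matrix (rank $K-1$) are exactly what is intended. Your reduction of the redescending minimizer to the inlier average is also legitimate machinery (it is Result 5 of the cited meta-analysis paper, used in Lemma \ref{lem:alpha_bdd}); the paper's corollary proof reaches the same point through the first-order condition of \eqref{e:redescending} and a Taylor expansion, which is a cosmetic difference.

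The genuine gap is in part (c), in the step ``I use that inliers share the common value $(\beta_i^{(k)})_j = (\alpha_i^{(0)})_j$.'' That is not implied by the paper's assumptions and is false in general: under Assumption \ref{ass:aggregation}(b) the common path is identified as the \emph{mean} of the inlier coefficients, $(\alpha_i^{(0)})_j=\mu_j=|I_j|^{-1}\sum_{k\in I_j}(\beta_i^{(k)})_j$, and individual inliers may sit anywhere in $[\mu_j-\delta,\mu_j+\delta]$ with $\delta>0$. Consequently your intermediate claim $\sqrt{N_k}\bigl((\tilde{\beta}_i^{(k)})_j-(\alpha_i^{(0)})_j\bigr)\stackrel{d}{\to}\mathcal{N}(0,V_{ij}^{(k)})$ fails for such inliers: it carries a deterministic drift $\sqrt{N_k}\bigl((\beta_i^{(k)})_j-(\alpha_i^{(0)})_j\bigr)$ that diverges whenever the inlier deviates from the mean. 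The paper's proof handles this by splitting $(\tilde{\beta}_i^{(k)})_j-(\alpha_i^{(0)})_j=\bigl[(\tilde{\beta}_i^{(k)})_j-(\beta_i^{(k)})_j\bigr]+\bigl[(\beta_i^{(k)})_j-(\alpha_i^{(0)})_j\bigr]$ and invoking the exact cancellation $\sum_{k=1}^{K}I_{ij}^{(k)}\bigl((\beta_i^{(k)})_j-(\alpha_i^{(0)})_j\bigr)=0$, which is precisely the mean identification of the common path; after the cancellation only the estimation errors remain, each asymptotically normal by \eqref{e:normal}, and Slutsky finishes. Your argument can be repaired by inserting this cancellation (the factor $\sqrt{N_{ij}}$ is common to all terms, so the bias sum vanishes identically), but as written the justification does not hold under the paper's framework. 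A secondary restriction: you invoke an ``equal-sample-size convention $N_k\equiv N$,'' which the paper does not impose---\eqref{e:clt_alpha0} is stated for heterogeneous $N_k$ with $N_{ij}$ the average inlier sample size---so even after the fix your argument covers only that special case.
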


\begin{proof}
The first two statements are immediate from Proposition \ref{prop:test}. For the third statement, it is sufficient to show that \eqref{e:clt_alpha0} holds. Define the indicator that the element of $k^{\textrm{th}}$ subject contributes to the common path,
\begin{displaymath}
    I_{ij}^{(k)} := 1_{\{|((\beta)_i^{(k)})_j - ((\alpha)_i^{(0)})_j| \leq \eta_j\}}.
\end{displaymath}
From the conditions (a) -- (c) in Assumption \ref{ass:aggregation}, the minimizer in \eqref{e:redescending} is unique, and none of $|((\beta)_i^{(k)})_j - ((\alpha)_i^{(0)})_j|=\eta_j$ holds for all $k$. Then the derivative of the objective function in \eqref{e:redescending} is defined by
\begin{displaymath}
    L_{ij}'(x) = -2\sum_{k=1}^K ((\tilde{\beta}_i^{(k)})_j-x)1_{\{|(\tilde{\beta}_i^{(k)})_j-x|\leq \eta_j\}},
\end{displaymath}
and it satisfies $L_{ij}'((\tilde{\alpha}_{i}^{(0)})_j)=0$. So, by Taylor expansion around $(\alpha_{i}^{(0)})_j$,
\begin{displaymath}
    0 = L_{ij}'((\alpha_{i}^{(0)})_j) + L_{ij}''((\alpha_{i}^{(0)})_j)((\tilde{\alpha}_{i}^{(0)})_j - (\alpha_{i}^{(0)})_j).
\end{displaymath}
Note that $1_{\{|(\tilde{\beta}_i^{(k)})_j-((\alpha)_i^{(0)})_j|\leq \eta_j\}} \stackrel{p}{\to} I_{ij}^{(k)}$. With $N_{ij}=(\sum_{k=1}^K I_{ij}^{(k)} N_k)/(\sum_{k=1}^K I_{ij}^{(k)})$,
\begin{align*}
    L_{ij}'((\alpha_{i}^{(0)})_j) 
    &= -2\sum_{k=1}^K I_{ij}^{(k)}((\tilde{\beta}_i^{(k)})_j - ((\alpha)_i^{(0)})_j) + \mathcal{O}_{\mathbb{P}}( N_{ij}^{-1/2} ), \\
    L_{ij}''((\alpha_{i}^{(0)})_j) &= 2\sum_{k=1}^K I_{ij}^{(k)} + \mathcal{O}_{\mathbb{P}}( 1 ).
\end{align*}
Therefore, 
\begin{displaymath}
    (\tilde{\alpha}_i^{(0)})_j - (\alpha_i^{(0)})_j 
    = \frac{\sum_{k=1}^K I_{ij}^{(k)}((\tilde{\beta}_i^{(k)})_j - (\alpha_i^{(0)})_j )}{\sum_{k=1}^K I_k} + \mathcal{O}_p( N_{ij}^{-1/2} )
    = \frac{\sum_{k=1}^K I_{ij}^{(k)}((\tilde{\beta}_i^{(k)})_j - (\beta_{i}^{(k)})_j )}{\sum_{k=1}^K I_k} + \mathcal{O}_p( N_{ij}^{-1/2} ),
\end{displaymath}
since $\sum_{k=1}^K I_{ij}^{(k)}((\beta_i^{(k)})_j-(\alpha_i^{(0)})_j)=0$. Therefore, by Slutsky’s theorem,
\begin{displaymath}
    \sqrt{N_{ij}}((\tilde{\alpha}_{i}^{(0)})_j-(\alpha_{i}^{(0)})_j) \stackrel{d}{\to} \mathcal{N}(0,W_{ij}^{(0)}),
\end{displaymath}
where $\hat{W}_{ij}^{(0)}:=\sum_{k=1}^K I_{ij}^{(k)}\hat{V}_{ij}/(\sum_{k=1}^K I_{ij}^{(k)})^2 \stackrel{p}{\to} W_{ij}^{(0)}$. Hence, the test statistic in \eqref{e:clt_alpha0} also follows the standard normal distribution under the null hypothesis.
\end{proof}

\section{Numerical Experiments}
\label{se:illustrative}

In this section, we present numerical experiments evaluating the proposed method in comparison with the benchmark approach and report its performance according to the defined metrics.

\subsection{Simulation Setups}

We focus on the case $p=1$ with independent Gaussian errors $\epsilon_{i,t}^{(k)} \sim \mathcal{N}(0,1)$. Among the $d^2$ possible paths, $s_0 d^2$ are designated as common, and $(\sum_{k=1}^K s_k) d^2$ unique paths are selected so that no overlaps occur across subjects. For estimation, we set $K=10,15$ and vary $d=10,20$ and the average sample lengths $T=50,200$, where the ranges are between 45-55 for $T=50$ and 190-210 for $T=200$. Three relative heterogeneity levels are considered, given by $(s_0,s_k)=(0.02,0.04),(0.03,0.03),(0.04,0.02)$, denoted as high, medium, and low, respectively, while the overall sparsity is fixed at 6\%. For each combination of settings, we repeat the simulations 50 times.

The proposed estimation framework in Section \ref{sse:approach_estimation} is compared with the multi-VAR model in \eqref{e:stratified_lasso} (multi-VAR) and its adaptive Lasso variant in \eqref{e:stratified_adaptive_lasso} (multi-VAR (A)). The benchmark methods in this simulation study are implemented using the \texttt{R} package \texttt{multivar} \citep{fisher2022penalized}. As discussed in Section \ref{sse:related}, existing methods outside the multi-VAR framework either do not estimate identically defined common paths, cannot jointly estimate common and subject-specific paths, or rely on low-rank modeling with individualized covariates. Moreover, the implementations of the second and third approaches described in that section cannot be modified to fit our simulation setup. In addition to the estimation results, we conduct three hypothesis tests based on the estimated models, tests of nullity, homogeneity, and significance, as described in Section \ref{sse:approach_test}.

To evaluate estimation performance, we compute the root mean square error (RMSE), sensitivity (Sens), and specificity (Spec) for ${\alpha^{(0)}}$ and averaged ${\alpha^{(k)}}$ across the $K$ subjects. Specifically, for the true $(\alpha_i^{(0)})_j$ and its estimator $(\hat{\alpha}_i^{(0)})_j$,
\begin{equation*}
    \begin{array}{lll}
    & \textrm{RMSE}(\alpha^{(0)}) = \frac{\|\hat{\alpha}^{(0)} - \alpha^{(0)}\|_2}{\|\alpha^{(0)}\|_2}, 
    & \textrm{RMSE}(\alpha^{(K)}) = \frac{1}{K}\sum_{k=1}^K\frac{\|\hat{\alpha}^{(k)} - \alpha^{(k)}\|_2}{\|\alpha^{(k)}\|_2}, \\
    
    & \textrm{Sens}(\alpha^{(0)}) = \frac{\sum_{i,j}1_{\{(\hat{\alpha}_i^{(0)})_j \neq 0 \ \& \ (\alpha_i^{(0)})_j \neq 0\}}}{\sum_{i,j}1_{\{(\alpha_i^{(0)})_j \neq 0\}}},
    & \textrm{Sens}(\alpha^{(K)}) = \frac{1}{K}\sum_{k=1}^K\frac{\sum_{i,j}1_{\{(\hat{\alpha}_i^{(k)})_j \neq 0 \ \& \ (\alpha_i^{(k)})_j \neq 0\}}}{\sum_{i,j}1_{\{(\alpha_i^{(k)})_j \neq 0\}}} \\
    & \textrm{Spec}(\alpha^{(0)}) = \frac{\sum_{i,j}1_{\{(\hat{\alpha}_i^{(0)})_j = 0 \ \& \ (\alpha_i^{(0)})_j = 0\}}}{\sum_{i,j}1_{\{(\alpha_i^{(0)})_j = 0\}}},
    & \textrm{Spec}(\alpha^{(K)}) = \frac{1}{K}\sum_{k=1}^K\frac{\sum_{i,j}1_{\{(\hat{\alpha}_i^{(k)})_j = 0 \ \& \ (\alpha_i^{(k)})_j = 0\}}}{\sum_{i,j}1_{\{(\alpha_i^{(k)})_j = 0\}}}.
    \end{array}
\end{equation*}

For the inference study, we compute the false discovery rate (FDR) and statistical power of the three tests at significance level $\alpha = 0.05$. To compute these metrics, let $\mathcal{S}$ and $\mathcal{S}^{c}$ denote the sets of index pairs $(i,j)$, $i,j=1,\ldots,d$, for which the null and alternative hypotheses are true, respectively. Let $\hat{\mathcal{S}}$ and $\hat{\mathcal{S}}^{c}$ denote the sets of indices $(i,j)$ for which the corresponding decisions are non-rejection and rejection, respectively. The FDR and power are then computed as
\begin{align*}
    \textrm{FDR} &= \frac{\sum_{i,j}1_{\{(i,j)\in\hat{\mathcal{S}}^{c} \ \& \ (i,j)\in\mathcal{S}\}}}{\sum_{i,j}1_{\{(i,j)\in\hat{\mathcal{S}}^{c}\}}}, \\
    \textrm{Power} &= \frac{\sum_{i,j}1_{\{(i,j)\in\hat{\mathcal{S}}^{c} \ \& \ (i,j)\in\mathcal{S}^{c}\}}}{\sum_{i,j}1_{\{(i,j)\in\mathcal{S}^{c}\}}}.
\end{align*}
Here, define the index sets $\mathcal{T}_{0K},\mathcal{T}_{0^cK},\mathcal{T}_{0K^c},\mathcal{T}_{0^cK^c}$ as the sets of index pairs $(i,j)$, $i,j=1,\ldots,d$, corresponding to the following cases, respectively: (i) both the common and unique paths are zero; (ii) the common path is nonzero but the unique path is zero; (iii) the common path is zero but the unique path is nonzero; and  (iv) both the common and unique paths are nonzero.  These sets are disjoint from each other and
\begin{displaymath}
    \mathcal{T}_{0K}\cup\mathcal{T}_{0^cK}\cup\mathcal{T}_{0K^c}\cup\mathcal{T}_{0^cK^c}
=\{(i,j):i,j=1,\ldots,d\}.
\end{displaymath}
Then the null and alternative sets for each test are given by
\begin{itemize}
  \item[(a)] Test of nullity: $\mathcal{S}=\mathcal{T}_{0K}$,  $\mathcal{S}^{c}=\mathcal{T}_{0K^c}\cup\mathcal{T}_{0^cK}\cup\mathcal{T}_{0^cK^c}$.
  \item[(b)] Test of homogeneity: $\mathcal{S}=\mathcal{T}_{0K}\cup\mathcal{T}_{0^cK}$,
  $\mathcal{S}^{c}=\mathcal{T}_{0K^c}\cup\mathcal{T}_{0^cK^c}$.
  \item[(c)] Test of significance: $\mathcal{S}=\mathcal{T}_{0K}\cup\mathcal{T}_{0K^c}$, 
  $\mathcal{S}^{c}=\mathcal{T}_{0^cK}\cup\mathcal{T}_{0^cK^c}$.
\end{itemize}

\subsection{Estimation Results}

The simulation results for estimation are presented in Figure \eqref{fig:numer_est}. In the lower-dimensional setting ($d = 10$), both the identified common and unique paths from the proposed methods tend to yield smaller RMSEs compared with the original approaches. Although the trend reverses in the higher-dimensional setting ($d = 20$), the gap quickly narrows as the sample size increases (on average, as $T$ grows from 50 to 200). Regarding other performance metrics, such as sensitivity and specificity, the behavior is similar to that observed in multi-VAR modeling with an adaptive scheme. We conjecture that the individually adjusted thresholding applied during sparsity recovery plays a role similar to that of adaptive weights in the adaptive Lasso approach. There are no significant differences across different numbers of subjects ($K$) for all performance metrics. This result is not surprising, as similar observations have been reported in previous studies of multi-subject time series modeling \cite[e.g.,][]{fisher2022penalized,kim2024group}. While the improvement in estimation accuracy may not be dramatic in higher dimensions, Figure \ref{fig:numer_cputime} shows that a substantial amount of computational time is saved in achieving comparable results.

\begin{figure}[h]
 \centering
 \includegraphics[width=1\textwidth,height=0.8\textheight]
 {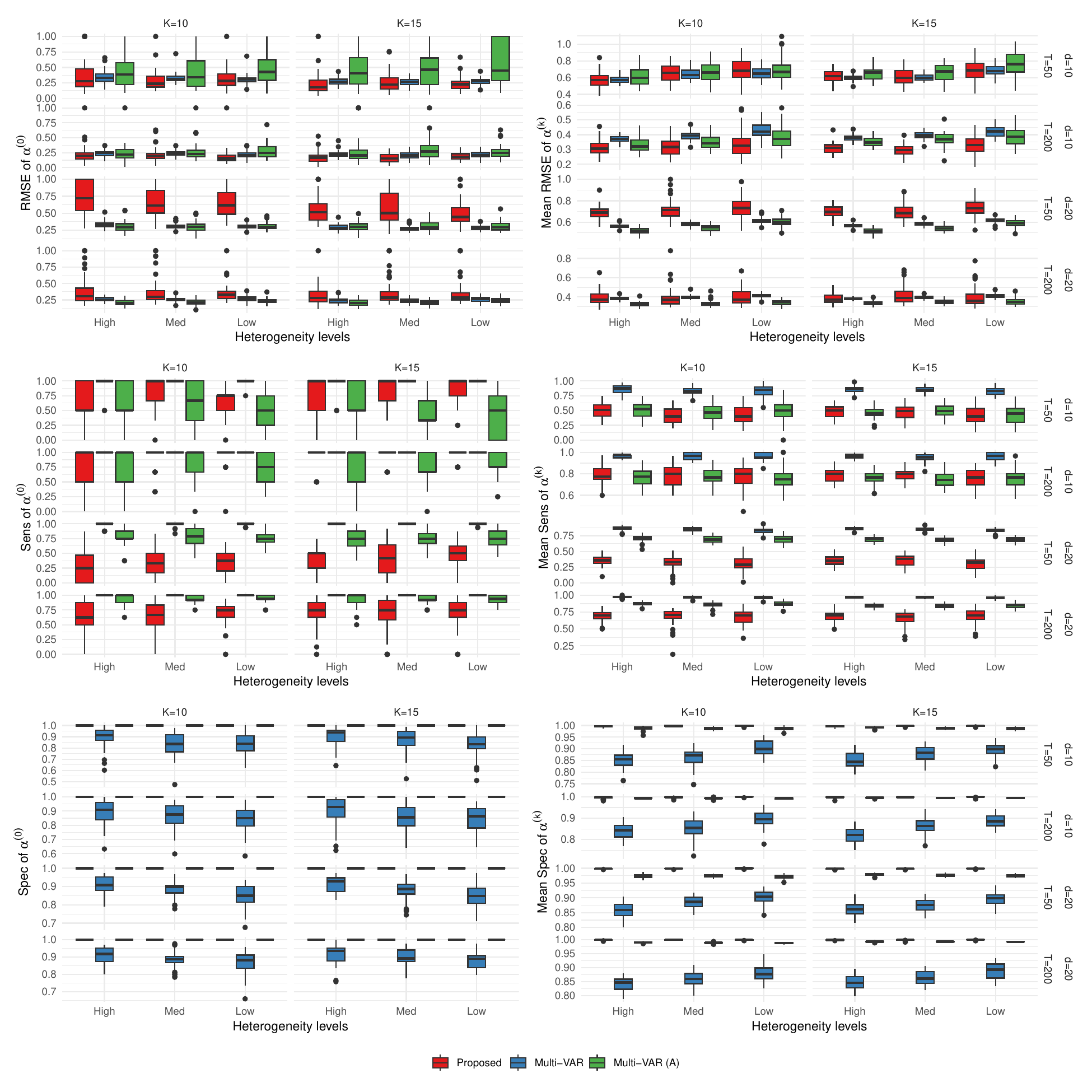}
 \caption{Boxplots of the root mean square error (RMSE) of $\alpha^{(0)}$ (top left), the average RMSE of $\alpha^{(k)}$ (top right), the sensitivity (Sens) of $\alpha^{(0)}$ (middle left), the average sensitivity of $\alpha^{(k)}$ (middle right), the specificity (Spec) of $\alpha^{(0)}$ (bottom left), and the average specificity of $\alpha^{(k)}$ (bottom right) under different combinations of $d$ and average $T$ (combinations indicated on the right tabs), $K$ (each column), and heterogeneity levels (each axis). Red indicates the proposed method, while blue and green represent the benchmark methods.}
 \label{fig:numer_est}
\end{figure}

\begin{figure}[t]
 \centering
 \includegraphics[width=1\textwidth,height=0.8\textheight]
 {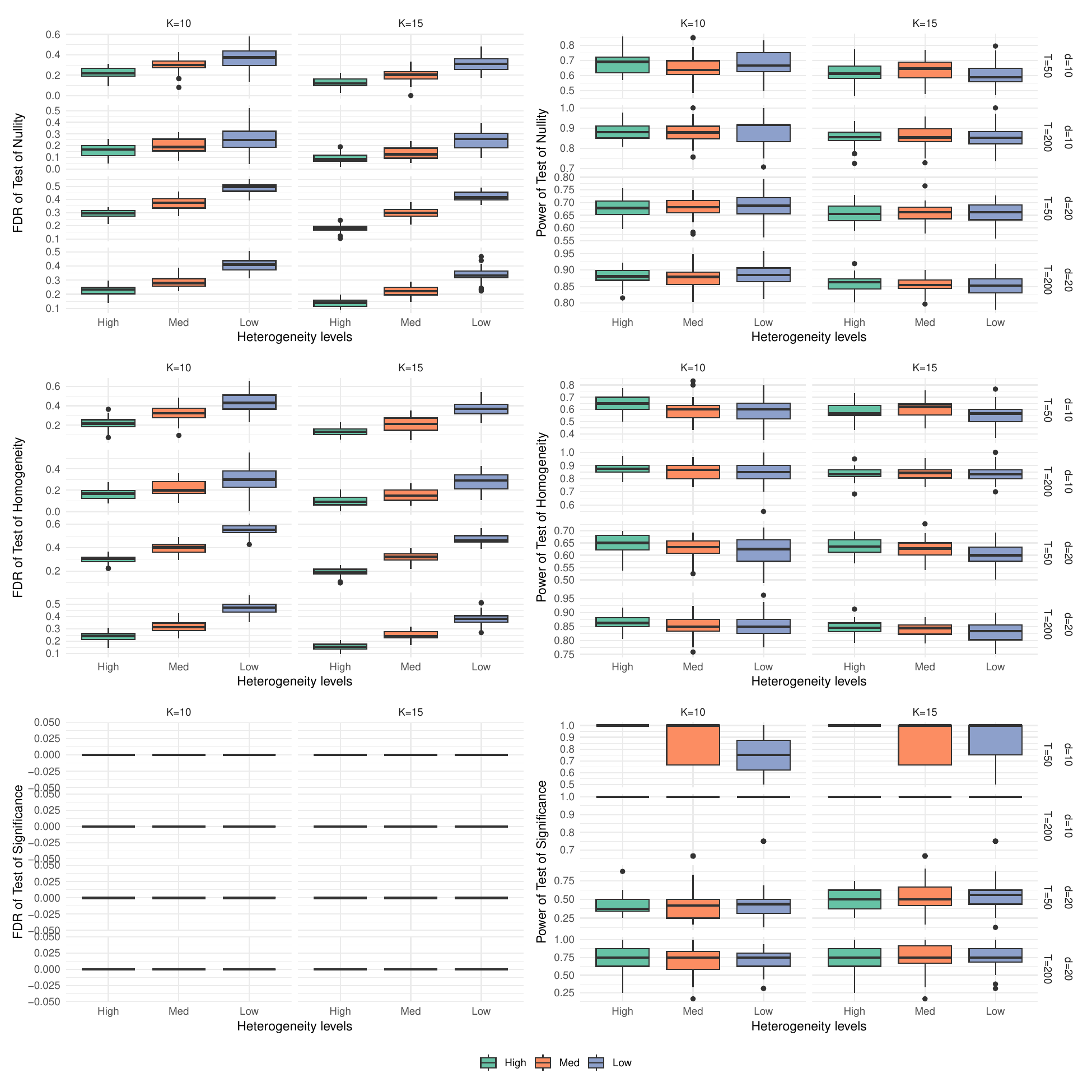}
 \caption{Boxplots of the FDRs (left columns) and powers (right columns) for the three hypothesis tests: test of nullity (top), test of homogeneity (middle), and test of significance (bottom), under different combinations of $d$  and average $T$ (combinations indicated on the right tabs), $K$ (each column), and heterogeneity levels (each axis), presented through three colors.}
 \label{fig:numer_inf}
\end{figure}

\clearpage
\subsection{Hypothesis Tests Results}

The simulation results for the hypothesis tests are presented in Figure \eqref{fig:numer_inf}. After obtaining the estimation results, hypothesis testing was performed. Interestingly, while the FDR varies with the level of heterogeneity, being more favorable at higher levels, the power remains relatively stable across settings. For both measures, performance improves as the sample size increases, while remaining robust with respect to dimensionality. This robustness may arise because the hypothesis tests are conducted entrywise. As expected, the number of subjects does not substantially affect test performance. Across different types of tests, the results for the test of nullity are generally similar to those for the test of homogeneity, although the test of nullity tends to perform slightly better. Notably, for the test of significance, the FDR remains zero across all simulation settings, while the corresponding power behaves as expected. This finding suggests that the proposed testing procedure is both accurate and sensitive, which demonstrates the reliability of the hypothesis tests under the given scenarios.

\section{Data Application}
\label{se:data}

\subsection{Data Description}

We use task fMRI (tfMRI) data from the WU-Minn Human Connectome Project (HCP) \citep{van2013wu}. The data have been preprocessed through the minimal pipeline described in \cite{glasser2013minimal}. The HCP emotion processing task probes brain circuits involved in affective perception, particularly the amygdala. Participants complete two short fMRI runs (up to three minutes each) that alternate between emotion blocks and control blocks. In the emotion blocks, they match faces displaying fearful or angry expressions; in the control blocks, they match simple geometric shapes. Each block lasts approximately 18 seconds and includes several trials, isolating neural responses to emotional faces from general visual or matching processes. This task is widely used to study emotion reactivity, regulation, and individual differences across the large HCP sample \citep{barch2013function}. 

For our analysis, we select subjects whose behavioral and imaging data were both acquired and released in Quarter 1 (Q1) and who completed the full HCP 3T MRI protocol, ensuring that all scans are available across all time points. Our final sample consists of 12 females $(K=12)$, with ages ranging from 22 to 30 years. For cortical parcellation, we adopt the Schaefer2018 local–global atlas \citep{schaefer2018local}, using the 400-parcel solution aligned with Yeo’s 17-network functional organization \citep{yeo2011organization}.

The atlas provides a predefined map that divides the brain into regions of interest (parcels), allowing researchers to summarize neural activity at the regional level rather than at individual voxels or vertices. The Schaefer atlas is derived from resting-state functional connectivity and combines fine local gradients with global clustering, producing parcellations at multiple resolutions. In the 400-parcel version, each parcel is assigned to one of $d=17$ cortical networks, including the Default Mode, Salience/Ventral Attention, Dorsal Attention, Somatomotor, and Visual networks. Following this preprocessing step, we excluded abnormally high spikes observed at the beginning and end of the scans, yielding an average sample length of $T_k=165$ for all subjects.

\subsection{Application Results}

The results are presented in Figure \eqref{fig:numer_data_appl}. The first row shows the estimated common paths across four approaches: multi-VAR without the adaptive scheme, multi-VAR with the adaptive Lasso penalty, the proposed estimation framework, and the proposed framework after hypothesis-based filtering. In terms of sparsity, the proposed framework (third column) yields sparser results than the adaptive multi-VAR model, and the subsequent hypothesis tests confirm this finding (fourth column). All nonzero paths identified by the proposed method are contained within the set of nonzero paths from the multi-VAR model, and approximately 88.9\% of the nonzero paths overlap with those identified by the adaptive multi-VAR model.

The second row summarizes the number of unique nonzero paths across the 12 subjects. For each path, green indicates that more than six subjects exhibit a nonzero path, while orange indicates otherwise. Both the non-adaptive and adaptive multi-VAR models produce an excessively large number of nonzero paths. Specifically, 86.9\% and 72.7\% of paths are identified as nonzero by the two benchmarks, respectively, whereas only 19.7\% and 12.5\% of paths are identified as nonzero by the proposed method and hypothesis test results. Regarding the frequency with which each path is identified as nonzero, the non-adaptive and adaptive multi-VAR models yield medians of 2 and 1, third quartiles of 3 and 2, and maximum values of 8 for both. In contrast, the proposed method and hypothesis test results yield both medians and third quartiles of 0, with maximum values of only 2. This indicates that most of the unique nonzero paths occur only in single individuals.

The third row reports the number of nonzero individual paths. The non-adaptive multi-VAR model suggests that nearly all paths are present across subjects, resulting in uniformly dark green cells. Although somewhat less dense, the adaptive multi-VAR model still identifies 74.7\% of paths as nonzero, yielding similar results. By contrast, the two proposed frameworks identify only 25.6\% and 18\% of paths as nonzero. In terms of frequency, while all four approaches reach a maximum of 12, their median values are 12, 3.46, 1.29, and 0.96, respectively. Under limited sample lengths, the proposed methods facilitate easier interpretation by producing sparser models while preserving heterogeneous patterns.

Three paths are particularly relevant for emotion processing. At the individual level, hypothesis tests show that all participants have nonzero paths between ventral attention A and default mode subdivision B in both directions. Using the proposed method reveals additional connections among multiple default mode subdivisions (not only B but also C and D). This corresponds to the idea that salience detection systems influence self-referential and internally oriented processes \cite[e.g.,][]{seeley2007dissociable,andrews2010functional,menon2011large} and confirms that externally salient emotional stimuli are integrated with internal evaluations, linking perception of emotion to autobiographical and self-related representations. Paths from frontoparietal network A to limbic B are consistently observed across all participants. This corresponds to the integration of affective appraisal with cognitive control systems \cite[e.g.,][]{ochsner2005cognitive,buhle2014cognitive,etkin2015neural} and shows that executive networks help regulate responses to emotional stimuli, consistent with prior findings on top-down control of emotion.

\begin{figure}[h]
\centering
\includegraphics[width=1\textwidth,height=0.7\textheight]{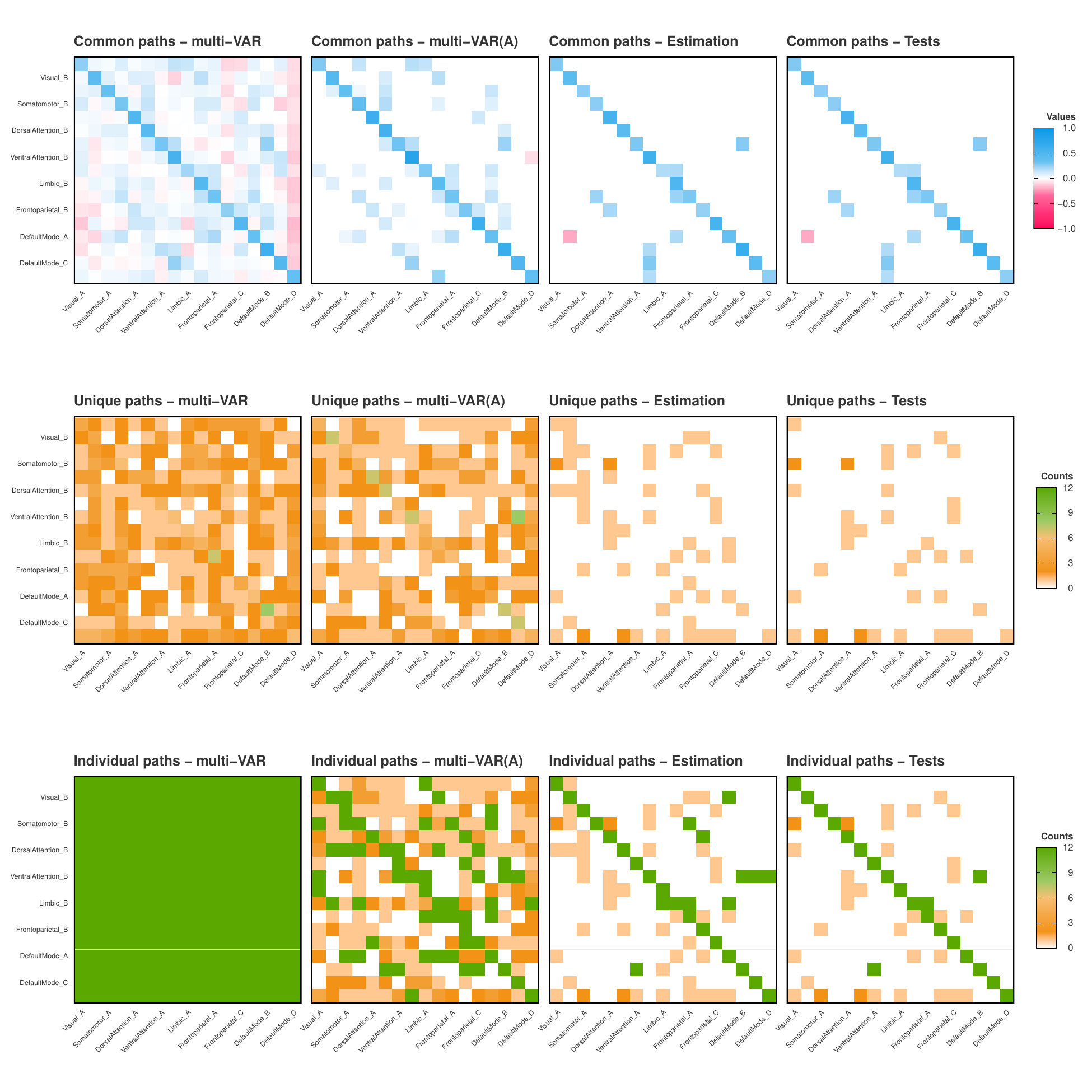}
\caption{Path identification from tfMRI on emotion processing. Each row presents results for common paths (top), unique paths (middle), and individual paths (bottom). The common paths are shown with their estimated values, while the unique and individual paths are summarized by the counts of nonzero entries across subjects. The first two columns report results from the stratified Lasso and its adaptive analogue (multi-VAR and multi-VAR(A), respectively), the third column shows results from the estimators of the proposed framework, and the last column presents the estimators after filtering through the hypothesis testing procedure. For each node in the common, unique, and individual path matrices, only entries for which the null hypotheses of significance, homogeneity, or nullity are rejected are colored.}
\label{fig:numer_data_appl}
\end{figure}

Focusing on the result from the estimation only, one path shared by all participants is a directional connection from Limbic A to Limbic B. This corresponds to strong coordination within the limbic system, where interactions between the amygdala, hippocampus, and orbitofrontal cortex support emotion evaluation and integrate emotional experiences with memory \cite[e.g.,][]{critchley2004neural,ochsner2005cognitive,buhle2014cognitive}. This corresponds to the central role of limbic networks in coordinating emotion processing.

\section{Discussion}
\label{se:discuss}

Adopting the new identifiability restriction for the common path enables convergence rates that are tailored to each subject’s sparsity level and sample size. To the best of our knowledge, this work provides the first systematic framework for conducting inference on both commonality and heterogeneity in multiple-subject HDTS models (multi-VAR). Across the simulation studies, the proposed algorithm performs reliably and is sufficiently accurate and fast to replace existing methods in terms of estimator quality, as measured by various metrics. The performance of the hypothesis testing framework, evaluated using standard criteria, aligns closely with the asymptotic theory. The data application offers new insights into heterogeneous tfMRI dynamics across multiple subjects by identifying both shared and individual-specific paths.

The framework can be extended in several directions. First, it naturally accommodates sub-Gaussian (or strongly bounded) innovations \cite[e.g., Section 2.3.4 in][]{van2014asymptotically}, but extending it to heavier-tailed distributions, such as sub-exponential innovations, remains an open challenge. While additional mixing conditions may ensure consistent estimation for individual subjects \citep{wong2020lasso}, debiasing time series models under such distributions is nontrivial, even in single-subject settings. Second, \cite{crawford2024penalized} considered the decomposition of partially shared paths via clustering; however, establishing the consistency of these estimators within a communication-efficient data integration framework is still unresolved, representing an important direction for future work. Finally, the framework can be extended to higher-lag VAR models. Imposing simple sparsity across lags or using a standard group Lasso uniformly may fail to capture the natural decay of higher-lag effects. Approaches such as overlapping group sparsity with increasing penalties \citep{nicholson2020high} offer promising alternatives, yet debiasing these structurally penalized models remains an open problem, providing another avenue for methodological development.

\section*{Acknowledgements}

Vladas Pipiras’s research was partially supported by the grants NSF DMS 1712966, DMS 2113662, and DMS 2134107.

\section*{Data Availability Statement}

Human Connectome Project (HCP) data used in the data application of Section 6 is publicly accessible. The dataset can be downloaded at \href{https://humanconnectome.org/}{https://humanconnectome.org/}. The R code used in the simulations of Sections \ref{se:illustrative} and in the data analysis of Section \ref{se:data} is available on GitHub at \href{https://github.com/yk748/multiVARSE}{https://github.com/yk748/multiVARSE}.


\appendix







\section{Proofs of Lemmas}\label{se:lemmas}

In this section, we provide several technical Lemmas and their proofs.

\begin{lemma}\label{lem:beta_hat_bdd}
With the conditions (a) -- (c) in Assumption \ref{ass:ts}, the following holds.
\begin{equation}\label{e:bdd_sparse}
    \|\hat{\beta}_i^{(k)} - \beta_{i}^{(k)}\|_1 = \|\hat{\Phi}_{i:}^{(k)'} - \Phi_{i:}^{(k)'}\|_1
    = \mathcal{O}_{\mathbb{P}}\left(\kappa(\Sigma_{\epsilon}^{(k)})\kappa^2(\Phi^{(k)})\|\Phi^{(k)}\|^2s_{0,k}\sqrt{\frac{\log d}{N_k}}\right).
\end{equation}
\end{lemma}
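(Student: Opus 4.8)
The plan is to recognize \eqref{e:lasso} as a standard high-dimensional Lasso regression whose design $\mathcal{X}^{(k)}$ and noise $E_i^{(k)}$ are generated by a stable Gaussian VAR(1), and to run the familiar restricted-eigenvalue/deviation-bound argument as in \citet{basu2015regularized,basu2024high}. Fix a subject $k$ and a coordinate $i$ and write $\hat{\Sigma}^{(k)} = \frac{1}{N_k}\mathcal{X}^{(k)'}\mathcal{X}^{(k)}$. Because $X_t^{(k)}$ is stable, the stacked process $(\mathcal{X}^{(k)},E_i^{(k)})$ is jointly Gaussian with spectral density $f^{(k)}$ governed by $\Phi^{(k)}$ and $\Sigma_\epsilon^{(k)}$; the two population constants that will drive the final rate are the extreme eigenvalues of $f^{(k)}$, which under conditions (a)--(b) of Assumption \ref{ass:ts} satisfy $\mu_{\min}(f^{(k)}) \gtrsim \Lambda_{\min}(\Sigma_\epsilon^{(k)})/\norm{\Phi^{(k)}}^2$ and $\mu_{\max}(f^{(k)}) \lesssim \Lambda_{\max}(\Sigma_\epsilon^{(k)})\norm{(\Phi^{(k)})^{-1}}^2$.

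First I would establish the two probabilistic ingredients. The restricted strong convexity (restricted-eigenvalue) condition for $\hat{\Sigma}^{(k)}$: via Gaussian concentration of the sample Gram matrix of a VAR process around its population covariance, one has $v'\hat{\Sigma}^{(k)}v \geq \frac{1}{2}\mu_{\min}(f^{(k)})\norm{v}_2^2$ uniformly over the cone $\{\norm{v_{S^c}}_1 \leq 3\norm{v_S}_1\}$, where $S$ is the support of $\beta_i^{(k)}$, with probability approaching one provided $N_k \gtrsim s_{0,k}\log d$, which is exactly what Assumption \ref{ass:ts} secures. Then the deviation (score) bound: again by Gaussian concentration of the cross term, $\norm{\frac{1}{N_k}\mathcal{X}^{(k)'}E_i^{(k)}}_{\infty} \lesssim \sqrt{\mu_{\max}(f^{(k)})\,\Lambda_{\max}(\Sigma_\epsilon^{(k)})}\sqrt{\frac{\log d}{N_k}}$ with probability tending to one, and I choose $\lambda_i^{(k)}$ to be a constant multiple of this right-hand side.

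Second, with $\lambda_i^{(k)}$ dominating the score, the usual basic inequality forces $\hat{\beta}_i^{(k)} - \beta_i^{(k)}$ into the restricted cone, so combining it with restricted strong convexity yields the oracle bound $\norm{\hat{\beta}_i^{(k)} - \beta_i^{(k)}}_1 \lesssim s_{0,k}\lambda_i^{(k)}/\mu_{\min}(f^{(k)})$. Substituting the spectral bounds from the first paragraph and rewriting in the condition-number notation (using $\mu_{\max}(\Phi^{(k)}) = \norm{\Phi^{(k)}}^2$, $\mu_{\min}(\Phi^{(k)}) = \norm{(\Phi^{(k)})^{-1}}^{-2}$, $\kappa(\Phi^{(k)}) = \norm{\Phi^{(k)}}\norm{(\Phi^{(k)})^{-1}}$, and $\norm{\Phi^{(k)}}\geq 1$, $\norm{(\Phi^{(k)})^{-1}}\geq 1$, both following from $\Phi^{(k)}(0)=I$) shows the resulting prefactor is bounded above by $\kappa(\Sigma_\epsilon^{(k)})\kappa^2(\Phi^{(k)})\norm{\Phi^{(k)}}^2$, giving the claimed rate. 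Finally, the identity $\hat{\beta}_i^{(k)} = \hat{\Phi}_{i:}^{(k)'}$ for $p=1$ makes the two norms in \eqref{e:bdd_sparse} equal.

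The main obstacle is twofold. The genuinely technical part is the probabilistic control under temporal dependence --- the restricted-eigenvalue condition and the score deviation bound for a correlated Gaussian design --- which relies on Hanson--Wright-type concentration for quadratic forms in dependent Gaussians together with spectral-density bounds for the VAR through $\Phi^{(k)}$ and $\Sigma_\epsilon^{(k)}$; here I would lean directly on \citet{basu2015regularized,basu2024high}. The more delicate accounting, and the one the paper flags as its real purpose, is tracking how each spectral constant maps onto the exact powers of $\kappa(\Sigma_\epsilon^{(k)})$, $\kappa(\Phi^{(k)})$, and $\norm{\Phi^{(k)}}$ in the statement, so that the bound is stated in a form compatible with the $\Theta^{(k)}$-sparsity conventions used in the later lemmas.
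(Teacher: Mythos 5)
Your proposal is correct and follows essentially the same route as the paper: the paper's proof simply compresses your argument into two citations, using Proposition 4.3 of \citet{basu2015regularized} for the deviation-bound constant $\mathbb{Q}(\beta_k,\Sigma_{\epsilon}^{(k)}) = \mathcal{O}(\sigma_{k,\max}^2\kappa^2(\Phi^{(k)}))$ and Proposition 2.2 of \citet{basu2024high} (the packaged restricted-eigenvalue plus basic-inequality Lasso bound) with $\lambda^{(k)}=\mathcal{O}_{\mathbb{P}}(\sigma_{k,\max}^2\kappa^2(\Phi^{(k)})\sqrt{\log d/N_k})$. Your explicit spectral-constant accounting (via $\mu_{\min}(f)$, $\mu_{\max}(f)$, and the identities $\mu_{\max}(\Phi^{(k)})=\|\Phi^{(k)}\|^2$, $\mu_{\min}(\Phi^{(k)})=\|(\Phi^{(k)})^{-1}\|^{-2}$) reproduces, and indeed slightly sharpens, the prefactor the paper states.
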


\begin{proof}
From the deterministic function of the VAR model \cite[e.g., Proposition 4.3 in ][]{basu2015regularized}, we have
\begin{align}
    \mathbb{Q}(\beta_k,\Sigma_{\epsilon}^{(k)}) 
    &= c_0\left( \Lambda_{\max}(\Sigma_{\epsilon}^{(k)}) + \frac{\Lambda_{\max}(\Sigma_{\epsilon}^{(k)})}{\mu_{\min}(\Phi^{(k)})} + \frac{\Lambda_{\max}(\Sigma_{\epsilon}^{(k)})\mu_{\max}(\Phi^{(k)})}{\mu_{\min}(\Phi^{(k)})} \right) \nonumber \\
    &= c_0\sigma_{k,\max}^2\left(1 + \|(\Phi^{(k)})^{-1}\| (1+\|\Phi^{(k)}\|) \right) \nonumber\\
    &\leq c_0\sigma_{k,\max}^2(1+2\kappa^2(\Phi^{(k)})) \nonumber \\
    &= \mathcal{O}_{\mathbb{P}}(\sigma_{k,\max}^2\kappa^2(\Phi^{(k)})). \label{e:db}
\end{align}
For \eqref{e:db}, by using Proposition 2.2 in \cite{basu2024high} with $\lambda^{(k)}=\mathcal{O}_{\mathbb{P}}(\sigma_{k,\max}^2\kappa^2(\Phi^{(k)})\sqrt{\log d/N_k})$, it completes the proof.    
\end{proof}

\begin{lemma}\label{lem:delta_bdd}
Consider that 
\begin{displaymath}
    \Delta_i^{(k)} := (I-\hat{\Theta}^{(k)}\hat{\Sigma}^{(k)})(\hat{\beta}_{i}^{(k)} - \beta_{i}^{(k)}).
\end{displaymath}
Then 
\begin{equation}\label{e:delta}
    \|\Delta_i^{(k)}\|_{\infty} \leq \mathcal{O}_{\mathbb{P}}\left(\kappa^2(\Sigma_{\epsilon}^{(k)})\kappa^4(\Phi^{(k)})\|\Phi^{(k)}\|^2 s_{0,k}\frac{\log d}{N_k}\right).
\end{equation}
\end{lemma}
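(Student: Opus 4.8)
The plan is to exploit the standard debiased-Lasso decomposition together with a H\"older-type bound that cleanly separates the two sources of error inside $\Delta_i^{(k)}$: the $\ell_1$ estimation error of the nodewise-regularized Lasso and the quality of the approximate inverse Hessian $\hat{\Theta}^{(k)}$ as a surrogate for $(\hat{\Sigma}^{(k)})^{-1}$. Writing $\|\cdot\|_{\max}$ for the entrywise maximum of a matrix, the elementary inequality $\|Mv\|_{\infty}\leq\|M\|_{\max}\|v\|_1$ applied to $\Delta_i^{(k)} = (I-\hat{\Theta}^{(k)}\hat{\Sigma}^{(k)})(\hat{\beta}_i^{(k)}-\beta_i^{(k)})$ yields
\[
    \|\Delta_i^{(k)}\|_{\infty}\leq\|I-\hat{\Theta}^{(k)}\hat{\Sigma}^{(k)}\|_{\max}\,\|\hat{\beta}_i^{(k)}-\beta_i^{(k)}\|_1.
\]
Thus it suffices to control the two factors separately and multiply their rates.

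The $\ell_1$ factor is already handled: Lemma \ref{lem:beta_hat_bdd} gives $\|\hat{\beta}_i^{(k)}-\beta_i^{(k)}\|_1=\mathcal{O}_{\mathbb{P}}(\kappa(\Sigma_{\epsilon}^{(k)})\kappa^2(\Phi^{(k)})\|\Phi^{(k)}\|^2 s_{0,k}\sqrt{\log d/N_k})$. For the matrix factor, I would invoke the Karush--Kuhn--Tucker conditions of the nodewise regression \eqref{e:nodewise}. Since $\hat{\Theta}^{(k)}=(\hat{\gamma}^{(k)})^{-2}\hat{\Gamma}^{(k)}$, the subgradient optimality condition for each nodewise Lasso implies that the $j$-th row of $\hat{\Theta}^{(k)}\hat{\Sigma}^{(k)}-I$ has sup-norm at most $\lambda_j/\hat{\tau}_j^2$. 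Because $\hat{\tau}_j^2$ concentrates around $1/\Theta_{jj}^{(k)}$ and stays bounded away from zero under Assumption \ref{ass:ts}, choosing the nodewise penalty at the VAR-adapted scale $\lambda_j\asymp\kappa(\Sigma_{\epsilon}^{(k)})\kappa^2(\Phi^{(k)})\sqrt{\log d/N_k}$ gives
\[
    \|I-\hat{\Theta}^{(k)}\hat{\Sigma}^{(k)}\|_{\max}=\mathcal{O}_{\mathbb{P}}\pare{\kappa(\Sigma_{\epsilon}^{(k)})\kappa^2(\Phi^{(k)})\sqrt{\tfrac{\log d}{N_k}}}.
\]
Multiplying the two displays reproduces the claimed order $\kappa^2(\Sigma_{\epsilon}^{(k)})\kappa^4(\Phi^{(k)})\|\Phi^{(k)}\|^2 s_{0,k}\,\frac{\log d}{N_k}$.

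The main obstacle is justifying the nodewise sup-norm bound in the dependent-data regime rather than the i.i.d. case. The deviation $\|\hat{\Sigma}^{(k)}-\Sigma^{(k)}\|_{\max}$ must be controlled through concentration inequalities for Gram matrices of VAR-generated data (following \citet{basu2015regularized} and \citet{basu2024high}), and it is precisely these spectral-density bounds that inject the condition-number factors $\kappa(\Sigma_{\epsilon}^{(k)})$ and $\kappa(\Phi^{(k)})$. Establishing that the nodewise Lasso satisfies a restricted-eigenvalue (compatibility) condition with high probability under these bounds, so that its KKT residual attains the stated scale, is the technical heart; the remaining steps (H\"older, combining Lemma \ref{lem:beta_hat_bdd}, multiplying rates) are routine. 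I would finally note that Assumption \ref{ass:ts}, through $\max\{s_{\max,k},s_{0,k}\}\log d/\sqrt{N_k}\to0$, is exactly what guarantees $\sqrt{N_k}\,\|\Delta_i^{(k)}\|_{\infty}=\mathcal{O}_{\mathbb{P}}(1)$, the form in which this bound is consumed downstream.
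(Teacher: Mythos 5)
Your proposal follows essentially the same route as the paper's proof: the H\"older-type splitting $\|\Delta_i^{(k)}\|_{\infty}\leq\|I-\hat{\Theta}^{(k)}\hat{\Sigma}^{(k)}\|_{\infty}\|\hat{\beta}_i^{(k)}-\beta_i^{(k)}\|_1$, Lemma \ref{lem:beta_hat_bdd} for the $\ell_1$ factor, the nodewise KKT conditions giving the row-wise bound $\lambda_j^{(k)}/(\hat{\tau}_j^{(k)})^2$, and a lower bound on $(\hat{\tau}_j^{(k)})^2$ to conclude. The step you flag as the technical heart is exactly what the paper carries out in detail, reconstructing Lemma 5.3 of \cite{van2014asymptotically} by expanding the sample error variance and invoking Propositions 3.2 and 3.3 of \cite{basu2015regularized} to show $\max_j 1/(\hat{\tau}_j^{(k)})^2=\mathcal{O}_{\mathbb{P}}(1/\sigma_{k,\min}^2)$.
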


\begin{proof}
Recall the debiased equation \eqref{e:debiased}. For for each $i$ and $k$, one has
\begin{equation}\label{e:difference}
    \tilde{\beta}_i^{(k)} - \beta_{i}^{(k)} = \frac{1}{N_k}\hat{\Theta}^{(k)}\mathcal{X}^{(k)'}\underbrace{(\mathcal{Y}_i^{(k)} - \mathcal{X}^{(k)}\beta_{i}^{(k)})}_{=E_{i}^{(k)}} + \underbrace{(I-\hat{\Theta}^{(k)}\hat{\Sigma}^{(k)})(\hat{\beta}_{i}^{(k)} - \beta_{i}^{(k)})}_{=\Delta_i^{(k)}}.
\end{equation}
Note that 
\begin{align*}
    \|\Delta_i^{(k)}\|_{\infty} 
    &\leq = \|I - \hat{\Theta}^{(k)}\hat{\Sigma}^{(k)}\|_{\infty}\|\hat{\beta}_i^{(k)} - \beta_{i}^{(k)}\|_1 = \max_{j}|e_j - \hat{\Theta}_{j:}^{(k)}\hat{\Sigma}^{(k)}|\|\hat{\beta}_i^{(k)} - \beta_{i}^{(k)}\|_1.
\end{align*}
From KKT condition for nodewise regression in \eqref{e:nodewise},
\begin{displaymath}
    -\frac{1}{N_k}\mathcal{X}_{-j}^{(k)'}(\mathcal{X}_{j}^{(k)}-\mathcal{X}_{-j}^{(k)}\hat{\gamma}_{j}^{(k)}) + \lambda_{j}^{(k)}\hat{z}_{j}^{(k)},
\end{displaymath}
where $\|\hat{z}_{j}^{(k)}\|_1\leq1$. This implies 
\begin{displaymath}
    -\frac{1}{N_k}\hat{\gamma}_{j}^{(k)'}\mathcal{X}_{-j}^{(k)'}\mathcal{X}^{(k)}\hat{\Theta}_{j:}^{(k)'}(\hat{\tau}_j^{(k)})^{2} + \lambda_{j}^{(k)}\|\hat{\gamma}_j^{(k)}\|_1 = 0.
\end{displaymath}
This gives us
\begin{align}
    (\hat{\tau}_j^{(k)})^2 &= \frac{1}{N_k}\|\mathcal{X}_{j}^{(k)} - \mathcal{X}_{-j}^{(k)}\hat{\gamma}_j^{(k)}\|_2^2 + \lambda_j^{(k)}\|\hat{\tau}_j^{(k)}\|_1 \label{e:sample_error_variance}\\
    &= \frac{1}{N_k}\mathcal{X}_{j}^{(k)'}\mathcal{X}^{(k)}\hat{\Theta}_{j:}^{(k)'}(\hat{\tau}_j^{(k)})^2 -\frac{1}{N_k}\hat{\gamma}_{j}^{(k)'}\mathcal{X}_{-j}^{(k)'}\mathcal{X}^{(k)}\hat{\Theta}_{j:}^{(k)'}(\hat{\tau}_j^{(k)})^{2} + \lambda_{j}^{(k)}\|\hat{\gamma}_j^{(k)}\|_1 \nonumber\\
    &= \frac{1}{N_k}\mathcal{X}_{j}^{(k)'}\mathcal{X}^{(k)}\hat{\Theta}_{j:}^{(k)'}(\hat{\tau}_j^{(k)})^2. \nonumber
\end{align}
Hence, $\frac{1}{N_k}\mathcal{X}_{j}^{(k)'}\mathcal{X}^{(k)}\hat{\Theta}_{j:}^{(k)}=1$. This implies 
\begin{displaymath}
    |e_j - \hat{\Theta}_{j:}^{(k)}\hat{\Sigma}^{(k)}| \leq \frac{\lambda_j^{(k)}}{(\hat{\tau}_j^{(k)})^2} \leq \frac{\sigma_{k,\max}^2\kappa^2(\Phi^{(k)})}{(\hat{\tau}_j^{(k)})^2}\sqrt{\frac{\log d}{N_k}}
\end{displaymath}
with a suitable choice of $\lambda_j^{(k)}=O_{\mathbb{P}}(\sigma_{k,\max}^2\kappa^2(\Phi^{(k)})\sqrt{\log d/N_k})$. To complete the proof, we reconstruct Lemma 5.3 in \cite{van2014asymptotically}: the population error variance $(\tau_j^{(k)})^{2} = \mathbb{E}[(\mathcal{X}_{1,j} - \sum_{\ell \neq j}\gamma_{j,\ell}X_{1,\ell})^2]$ satisfies $(\tau_j^{(k)})^{2} = \mathbb{E}\epsilon_{1,j}^2 = \frac{1}{\Theta_{jj}^{(k)}} \geq \sigma_{k,\min}^2 >0$ and $(\tau_j^{(k)})^{2} \leq \mathcal{M}(f_X) \leq \mathcal{O}(\sigma_{k,\max}^2\|(\Phi^{(k)})^{-1}\|^2) < \infty$. This successfully replaces Assumption (A2) in \cite{van2014asymptotically}. From \eqref{e:sample_error_variance},
\begin{align}
    \frac{1}{N_k}\|\mathcal{X}_{j}^{(k)} - \mathcal{X}_{-j}^{(k)}\hat{\gamma}_j^{(k)}\|_2^2
    &= \frac{1}{N_k}\|\mathcal{X}_{j}^{(k)} - \mathcal{X}_{-j}^{(k)}\gamma_j^{(k)}\|_2^2 + \frac{1}{N_k}\|\mathcal{X}_{-j}(\hat{\gamma}_j^{(k)}-\gamma_j^{(k)})\|_2^2 \nonumber\\
    & + \frac{2}{N_k}(\mathcal{X}_{j}^{(k)} - \mathcal{X}_{-j}^{(k)}\gamma_j^{(k)})'\mathcal{X}_{-j}^{(k)}(\hat{\gamma}_j^{(k)}-\gamma_j^{(k)}). \label{e:sample_error_variance2}
\end{align}
The first term in \eqref{e:sample_error_variance2} is $(\tau_j^{(k)})^2$. The second term in \eqref{e:sample_error_variance2} is, by Proposition 3.3 in \cite{basu2015regularized}, bounded above by
\begin{displaymath}
    \mathcal{O}_{\mathbb{P}}\left( \frac{s_{j,k}(\lambda_j^{(k)}(\mathcal{M}(f_X)+\mathcal{M}(f_{\epsilon}))^2}{\alpha_{\textrm{RE}}} \right) = \mathcal{O}_{\mathbb{P}}\left(\kappa^2(\Sigma_{\epsilon}^{(k)})\kappa^4(\Phi^{(k)})\|\Phi^{(k)}\|^2\frac{s_{j,k}\log d}{N_k}\right) = \mathcal{O}_{\mathbb{P}}\left(\frac{1}{\sqrt{N_k}}\right),
\end{displaymath}
where the last equality holds by Assumption \ref{ass:ts}. Note that it is equivalent to assume explicitly that $s_{\max,k}=\mathcal{O}(N_k/\log d)$ in \cite{van2014asymptotically}. The third term in \eqref{e:sample_error_variance2} is, by Propositions 3.2 and 3.3 in \cite{basu2015regularized}, bounded above by
\begin{align*}
    & 2\left\|\frac{1}{N_k}E_{-j}^{(k)'}\mathcal{X}_{-j}^{(k)}\right\|_{\infty}\|\hat{\gamma}_j^{(k)}-\gamma_j^{(k)}\|_1 \\
    &\leq \mathcal{O}_{\mathbb{P}}\left( \lambda_j^{(k)}(\mathcal{M}(f_X)+\mathcal{M}(f_{\epsilon})) \right)\mathcal{O}_{\mathbb{P}}\left( \frac{s_{j,k}(\mathcal{M}(f_X)+\mathcal{M}(f_{\epsilon}))}{\alpha_{\textrm{RE}}}\sqrt{\frac{\log d}{N_k}}\right) \\
    &\leq \mathcal{O}_{\mathbb{P}}\left(\kappa(\Sigma_{\epsilon}^{(k)})\kappa^2(\Phi^{(k)})\sqrt{\frac{\log d}{N_k}}\right)\mathcal{O}_{\mathbb{P}}\left(\kappa(\Sigma_{\epsilon}^{(k)})\kappa^2(\Phi^{(k)})\|\Phi^{(k)}\|^2 s_{j,k}\sqrt{\frac{\log d}{N_k}}\right) \\
    &= \mathcal{O}_{\mathbb{P}}\left(\kappa^2(\Sigma_{\epsilon}^{(k)})\kappa^4(\Phi^{(k)})\|\Phi^{(k)}\|^2\frac{s_{j,k}\log d}{N_k}\right) \\
    &= \mathcal{O}_{\mathbb{P}}\left(\frac{1}{\sqrt{N_k}}\right).
\end{align*}
In addition,
\begin{equation}\label{e:sample_error_variance3}
    \lambda_{j}^{(k)}\|\hat{\gamma}_j^{(k)}\|_1 \leq \lambda_{j}^{(k)}\|\gamma_j^{(k)}\|_1 + \lambda_{j}^{(k)}\|\hat{\gamma}_j^{(k)} - \gamma_j^{(k)}\|_1 = \lambda_{j}^{(k)}\mathcal{O}_{\mathbb{P}}(\sqrt{s_{j,k}}) + \lambda_{j}^{(k)}\mathcal{O}_{\mathbb{P}}(\lambda_{j}^{(k)}s_{j,k}) = \mathcal{O}_{\mathbb{P}}(1).
\end{equation}
Combining \eqref{e:sample_error_variance2}, \eqref{e:sample_error_variance3}, and $(\tau_j^{(k)})^{2} \geq \sigma_{k,\min}^2$ yields 
\begin{displaymath}
    \max_{j}\frac{1}{(\hat{\tau}_j^{(k)})^2} = \mathcal{O}_{\mathbb{P}}(1/\sigma_{k,\min}^2).
\end{displaymath}
This implies 
\begin{displaymath}
        \|I - \hat{\Theta}^{(k)}\hat{\Sigma}^{(k)}\|_{\infty} \leq  \mathcal{O}_{\mathbb{P}}\left(\kappa(\Sigma_{\epsilon}^{(k)})\kappa^2(\Phi^{(k)})\sqrt{\frac{\log d}{N_k}}\right)
\end{displaymath}
so that combining with \eqref{e:bdd_sparse} in Lemma \ref{lem:beta_hat_bdd} yields the desired result \eqref{e:delta}.    
\end{proof}

\begin{lemma}
The bound on $\tilde{\beta}_i^{(k)}-\beta_{i}^{(k)}$ in \eqref{e:difference} is
\begin{equation}\label{lem:beta_tilde_bdd}
\|\tilde{\beta}_i^{(k)}-\beta_{i}^{(k)}\|_{\infty} \leq \mathcal{O}_{\mathbb{P}}\left(\kappa(\Sigma^{(k)})\sqrt{\frac{\log d}{N_{k}}}\right).
\end{equation}
\end{lemma}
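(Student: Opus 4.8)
The plan is to start from the exact decomposition \eqref{e:difference}, namely $\tilde{\beta}_i^{(k)}-\beta_i^{(k)} = \frac{1}{N_k}\hat{\Theta}^{(k)}\mathcal{X}^{(k)\prime}E_i^{(k)} + \Delta_i^{(k)}$, and to bound the two summands separately in the $\ell_\infty$ norm via the triangle inequality. The deterministic-bias term is already under control: Lemma \ref{lem:delta_bdd} gives $\|\Delta_i^{(k)}\|_\infty = \mathcal{O}_{\mathbb{P}}(\kappa^2(\Sigma_\epsilon^{(k)})\kappa^4(\Phi^{(k)})\|\Phi^{(k)}\|^2 s_{0,k}\,\log d/N_k)$, and Assumption \ref{ass:ts} forces $\kappa^2(\Sigma_\epsilon^{(k)})\kappa^4(\Phi^{(k)})\|\Phi^{(k)}\|^2 s_{0,k}\,(\log d/\sqrt{N_k}) \to 0$. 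Hence $\|\Delta_i^{(k)}\|_\infty = o_{\mathbb{P}}(1/\sqrt{N_k}) = o_{\mathbb{P}}(\sqrt{\log d/N_k})$, so this term is of strictly smaller order than the claimed rate and can be absorbed. The entire burden therefore falls on the linear term.

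Next I would show that the stochastic term $\frac{1}{N_k}\hat{\Theta}^{(k)}\mathcal{X}^{(k)\prime}E_i^{(k)}$ is $\mathcal{O}_{\mathbb{P}}(\kappa(\Sigma^{(k)})\sqrt{\log d/N_k})$ in $\ell_\infty$. Writing the $j$-th coordinate as $\frac{1}{N_k}\hat{\Theta}_{j:}^{(k)}\mathcal{X}^{(k)\prime}E_i^{(k)} = \frac{1}{N_k}\sum_t (\hat{\Theta}_{j:}^{(k)}X_{t-1}^{(k)})\epsilon_{i,t}^{(k)}$, this is a weighted sum of the innovations whose conditional variance is $\frac{\sigma_{k,i}^2}{N_k}[\hat{\Theta}^{(k)}\hat{\Sigma}^{(k)}\hat{\Theta}^{(k)\prime}]_{jj}$, exactly the variance proxy $\hat{V}_{ij}^{(k)}$ used in Section \ref{sse:approach_test}. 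I would invoke the VAR deviation inequalities of \cite{basu2015regularized} to obtain a Gaussian-type tail for each coordinate, then take a union bound over the $dp$ coordinates: the maximum of $dp$ such terms supplies the factor $\sqrt{\log d}$, while the averaging supplies $\sqrt{1/N_k}$. To pin down the constant, I would control the variance proxy uniformly in $j$ using the approximate-inverse identity $\frac{1}{N_k}\mathcal{X}_j^{(k)\prime}\mathcal{X}^{(k)}\hat{\Theta}_{j:}^{(k)\prime}=1$ and the bound $\Theta_{jj}^{(k)} = 1/(\tau_j^{(k)})^2 \le 1/\sigma_{k,\min}^2$, both established inside the proof of Lemma \ref{lem:delta_bdd}, together with the companion convergence $[\hat{\Theta}^{(k)}\hat{\Sigma}^{(k)}\hat{\Theta}^{(k)\prime}]_{jj}\stackrel{p}{\to}\Theta_{jj}^{(k)}$. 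This makes the scaled variance bounded by a condition-number factor up to $o_{\mathbb{P}}(1)$, which I absorb into $\kappa(\Sigma^{(k)})$, yielding the uniform bound and completing the argument since the linear term dominates $\Delta_i^{(k)}$.

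The main obstacle is the temporal dependence. Conditioning on the full design $\mathcal{X}^{(k)}$ is not legitimate, because later rows of $\mathcal{X}^{(k)}$ depend on earlier innovations, so the linear term is not a sum of conditionally independent Gaussians and a naive conditional-normal argument fails. I expect the delicate step to be establishing the coordinatewise sub-Gaussian tail uniformly over $j$: this requires the martingale/Hanson–Wright-type concentration available for stable VAR processes (via the spectral-density quantities $\mathcal{M}(f_X),\mathcal{M}(f_\epsilon)$ of \cite{basu2015regularized}) rather than an i.i.d. argument, and it must be coupled with uniform-in-$j$ control of $\hat{\Theta}_{j:}^{(k)}$ so that the random weights do not inflate the variance proxy beyond the stated condition-number order. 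Once this uniform deviation bound is in hand, the remaining combination with the maximal inequality and the negligible $\Delta_i^{(k)}$ term is routine.
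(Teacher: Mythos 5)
Your treatment of the bias term $\Delta_i^{(k)}$ matches the paper exactly: Lemma \ref{lem:delta_bdd} plus Assumption \ref{ass:ts} makes it $o_{\mathbb{P}}\bigl(\sqrt{\log d/N_k}\bigr)$, and that part is fine. The gap is in the stochastic term. You keep the estimated precision matrix $\hat{\Theta}^{(k)}$ inside the sum and propose to concentrate $\frac{1}{N_k}\sum_t (\hat{\Theta}_{j:}^{(k)} X_{t-1}^{(k)})\epsilon_{i,t}^{(k)}$ directly, with a union bound over $j$ and a variance proxy controlled via $[\hat{\Theta}^{(k)}\hat{\Sigma}^{(k)}\hat{\Theta}^{(k)'}]_{jj}\rightarrow\Theta_{jj}^{(k)}$. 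This does not go through as stated: $\hat{\Theta}^{(k)}$ is produced by the nodewise Lasso \eqref{e:nodewise} from the whole design $\mathcal{X}^{(k)}$, hence depends on all innovations, including those in $E_i^{(k)}$ and those occurring after time $t$. The summands are therefore neither independent, nor Gaussian conditionally on the design (you correctly note that conditioning is illegitimate), nor a martingale difference sequence, because the weights $\hat{\Theta}_{j:}^{(k)}X_{t-1}^{(k)}$ are not predictable. The martingale/Hanson--Wright machinery of Basu--Michailidis that you invoke applies to sums with deterministic (or predictable) weight vectors, and ``uniform-in-$j$ control of the variance proxy'' does not repair this: the obstruction is not the size of the variance but the absence of any valid concentration framework for weights that are functions of the full sample.

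The missing idea --- and the one the paper's proof turns on --- is to add and subtract the population precision matrix, writing $\frac{1}{N_k}\hat{\Theta}^{(k)}\mathcal{X}^{(k)'}E_i^{(k)} = \frac{1}{N_k}\Theta^{(k)}\mathcal{X}^{(k)'}E_i^{(k)} + (\hat{\Theta}^{(k)}-\Theta^{(k)})\frac{1}{N_k}\mathcal{X}^{(k)'}E_i^{(k)}$, which yields the three-term bound \eqref{e:bdd_debiased}. In the leading term the weights $\Theta_{j:}^{(k)}X_{t-1}^{(k)}$ are fixed linear functionals of the past, so the coordinatewise CLT \eqref{e:clt} applies, and the Borell--TIS inequality with $u=\log d$ together with $\Theta_{jj}^{(k)}\leq 1/\sigma_{k,\min}^2$ gives the claimed rate $\kappa(\Sigma^{(k)})\sqrt{\log d/N_k}$ uniformly over $j$, as in \eqref{e:first_term}. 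The remainder is then handled crudely by H\"older's inequality, $\max_j\|\hat{\Theta}_{j:}^{(k)}-\Theta_{j:}^{(k)}\|_1\cdot\|\frac{1}{N_k}\mathcal{X}^{(k)'}E_i^{(k)}\|_{\infty}$, a product of two quantities each of order $\sqrt{\log d/N_k}$ (up to condition numbers and $s_{\max,k}$), hence of the negligible order $s_{\max,k}\log d/N_k$ as in \eqref{e:second_term}; the nodewise-regression error bound on $\|\hat{\Theta}_{j:}^{(k)}-\Theta_{j:}^{(k)}\|_1$ is exactly the ``uniform-in-$j$ control'' you anticipated needing, but it enters through this product of vanishing terms, not through a variance calculation. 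Your proposal contains the ingredients of the remainder estimate, but without the add-and-subtract decomposition there is no step at which a rigorous tail bound can actually be applied, so the plan as written has a genuine gap at its central step.
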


\begin{proof}
Note that
\begin{equation}\label{e:bdd_debiased}
    \|\tilde{\beta}_i^{(k)}-\beta_{i}^{(k)}\|_{\infty} \leq \left\|\frac{1}{N_k}\Theta^{(k)}\mathcal{X}^{(k)'}E_{i}^{(k)}\right\|_{\infty} + \|(\hat{\Theta}^{(k)}-\Theta^{(k)})\frac{1}{N_k}\mathcal{X}^{(k)'}E_{i}^{(k)}\|_{\infty} + \|\Delta_i^{(k)}\|_{\infty}.
\end{equation}
Note that the first term is 
\begin{displaymath}
    \max_{j}\frac{1}{\sqrt{N_k}}\left|\frac{1}{\sqrt{N_k}}e_j\Theta^{(k)}\mathcal{X}^{(k)'}E_{i}^{(k)}\right|.
\end{displaymath}
In the proof of Proposition 2.3 in \cite{basu2024high}, they showed that under Assumption \ref{ass:ts} holds,
\begin{equation}\label{e:clt}
    \frac{1}{\sqrt{N_k}}e_j\Theta^{(k)}\mathcal{X}^{(k)'}E_{i}^{(k)} = \frac{1}{\sqrt{N_k}}\sum_{t=p+1}^{T_k}(\Theta_{j:}^{(k)'}X_{t-1})\epsilon_{i,t} \stackrel{d}{\to} \mathcal{N}(0,\sigma_{k,i}^2\Theta_{jj}^{(k)}).
\end{equation}
Therefore, by using Borell–TIS inequality with $u=\log d$ \cite[e.g., Theorem 2.1.1 in][]{adler2007gaussian} and $\Theta_{jj}^{(k)} \leq 1/\sigma_{k,\min}^2$, the first term in \eqref{e:bdd_debiased} is bounded above by 
\begin{equation}\label{e:first_term}
    \mathcal{O}_{\mathbb{P}}\left(\kappa(\Sigma^{(k)})\sqrt{\frac{\log d}{N_k}}\right).
\end{equation}
Note that the second term is 
\begin{align}
    \left\|(\hat{\Theta}^{(k)}-\Theta^{(k)})\frac{1}{N_k}\mathcal{X}^{(k)'}E_{i}^{(k)}\right\|_{\infty} 
    &= \max_{j}\left|(\hat{\Theta}_{j:}^{(k)}-\Theta^{(k)}_{j:})\frac{1}{N_k}\mathcal{X}^{(k)'}E_{i}^{(k)}\right| \nonumber\\
    &\leq \max_{j}\|\hat{\Theta}_{j:}^{(k)}-\Theta^{(k)}_{j:}\|_1\left\|\frac{1}{N_k}\mathcal{X}^{(k)'}E_{i}^{(k)}\right\|_{\infty} \nonumber\\
    &\leq \max_{j}\|\hat{\Theta}_{j:}^{(k)}-\Theta^{(k)}_{j:}\|_1\mathcal{O}_{\mathbb{P}}\left(\sigma_{k,\max}^2\kappa^2(\Phi^{(k)})\sqrt{\frac{\log d}{N_k}}\right) \nonumber\\
    &\leq \mathcal{O}_{\mathbb{P}}\left(\kappa(\Sigma_{\epsilon}^{(k)})\kappa^4(\Phi^{(k)})\|\Phi^{(k)}\|^2s_{\max,k}\frac{\log d}{N_k}\right) \label{e:second_term}
\end{align}
where the second last inequality holds by \eqref{e:db} in Lemma \ref{lem:beta_hat_bdd}. To show the last inequality, note that from $\hat{\Theta}^{(k)} = (\hat{\gamma}^{(k)})^{-2}\hat{\Gamma}^{(k)}$, so $\hat{\Theta}_{j:}^{(k)} = \hat{\gamma}_{j}^{(k)}/(\hat{\tau}_{j}^{(k)})^2$. This implies
\begin{align*}
    & \|\hat{\Theta}_{j:}^{(k)} - \Theta_{j:}^{(k)}\|_1 \\
    &= \|\hat{\gamma}_{j}^{(k)}/(\hat{\tau}_{j}^{(k)})^2 - \hat{\gamma}_{j}^{(k)}/(\hat{\tau}_{j}^{(k)})^2 \|_1 \\
    &\leq \|\hat{\gamma}_{j}^{(k)} - \gamma_{j}^{(k)}\|_1/(\hat{\tau}_j^{(k)})^2 + \|\gamma_{j}^{(k)}\|_1(1/(\hat{\tau}_j^{(k)})^2 - 1/(\tau_j^{(k)})^2) \\
    &\leq \mathcal{O}_{\mathbb{P}}\left(\kappa(\Sigma_{\epsilon}^{(k)})\kappa^2(\Phi^{(k)})\|\Phi^{(k)}\|^2s_{j,k}\sqrt{\frac{\log d}{N_k}}\right)\mathcal{O}_{\mathbb{P}}(1/\sigma_{k,\min}^2) + \mathcal{O}_{\mathbb{P}}(\sqrt{s_{j,k}})\mathcal{O}_{\mathbb{P}}\left(\sqrt{\frac{s_{j,k}\log d}{N_k}}\right) \\
    &= \mathcal{O}_{\mathbb{P}}\left(\frac{\kappa(\Sigma_{\epsilon}^{(k)})\kappa^2(\Phi^{(k)})}{\sigma_{k,\min}^2}\|\Phi^{(k)}\|^2s_{j,k}\sqrt{\frac{\log d}{N_k}}\right).
\end{align*}
Combining \eqref{e:delta} in Lemma \ref{lem:delta_bdd}, \eqref{e:first_term}, and \eqref{e:second_term} yields the upper bound of \eqref{e:bdd_debiased},
\begin{displaymath}
     \|\tilde{\beta}_i^{(k)}-\beta_{i}^{(k)}\|_{\infty} \leq \mathcal{O}_{\mathbb{P}}\left(\kappa(\Sigma^{(k)})\sqrt{\frac{\log d}{N_k}} + \kappa(\Sigma_{\epsilon}^{(k)})\kappa^4(\Phi^{(k)})\|\Phi^{(k)}\|^2\max\{s_{\max,k},s_{0,k}\}\frac{\log d}{N_k}\right).
    \end{displaymath}
For a sufficiently large $N_k$, by Assumptions \ref{ass:ts}, we have the desired result \eqref{lem:beta_tilde_bdd}.
\end{proof}

\begin{lemma}\label{lem:alpha_bdd}
For a sufficiently large $N_k$, by Assumptions 1 and 2, we have for all $k=0,1,\ldots,K$,
\begin{displaymath}
\|\tilde{\alpha}^{(k)} - \alpha^{(k)}\|_{\infty} \leq \mathcal{O}_{\mathbb{P}}\left(\kappa(\Sigma_{\epsilon}^{(k)})\sqrt{\frac{\log d^2}{N_k}}\right).
\end{displaymath}
\end{lemma}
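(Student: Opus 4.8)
The plan is to separate the common-path case $k=0$ from the unique-path cases $k\geq 1$ and to reduce the latter to the former. For $k\geq 1$, since $\tilde{\alpha}_i^{(k)}=\tilde{\beta}_i^{(k)}-\tilde{\alpha}_i^{(0)}$ and $\alpha^{(k)}=\beta^{(k)}-\alpha^{(0)}$, the triangle inequality gives
\begin{displaymath}
\|\tilde{\alpha}^{(k)}-\alpha^{(k)}\|_{\infty}\leq \|\tilde{\beta}^{(k)}-\beta^{(k)}\|_{\infty}+\|\tilde{\alpha}^{(0)}-\alpha^{(0)}\|_{\infty}.
\end{displaymath}
The first term is $\mathcal{O}_{\mathbb{P}}(\kappa(\Sigma^{(k)})\sqrt{\log d/N_k})$ by \eqref{lem:beta_tilde_bdd}, and the second term will be shown below to be of the faster order $\sqrt{\log d^2/(KN_{\min})}$; since $KN_{\min}\gtrsim N_k$ for comparable sample sizes, the debiased term dominates and yields the stated rate (recall $\log d^2=2\log d$, which changes only the constant). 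Hence the whole lemma reduces to bounding $\|\tilde{\alpha}^{(0)}-\alpha^{(0)}\|_{\infty}$, which simultaneously covers the $k=0$ case, with the denominator read as $KN_{\min}$ and $\kappa(\Sigma_{\epsilon}^{(0)})$ read as $\max_k\kappa(\Sigma_{\epsilon}^{(k)})$.

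I would argue coordinatewise. Fix $(i,j)$ and recall that $(\tilde{\alpha}_i^{(0)})_j$ minimizes the redescending loss $L_{ij}(x)=\sum_{k=1}^K\min\{((\tilde{\beta}_i^{(k)})_j-x)^2,\eta_j^2\}$. The key step is to invoke Result 5 of \citet{maity2022meta}: under Assumption \ref{ass:aggregation}, and provided the debiased errors are uniformly smaller than the separation gaps $\delta$ and $\delta_2$, the set of subjects classified as inliers by \eqref{e:subjects_inlier} coincides with the population inlier set $I_j$, and on this event $L_{ij}$ has a unique global minimizer equal to the inlier mean,
\begin{displaymath}
(\tilde{\alpha}_i^{(0)})_j=\frac{1}{|I_j|}\sum_{k\in I_j}(\tilde{\beta}_i^{(k)})_j.
\end{displaymath}
Because $(\beta_i^{(k)})_j=(\alpha_i^{(0)})_j$ for every inlier $k\in I_j$, subtracting the true common value gives
\begin{displaymath}
(\tilde{\alpha}_i^{(0)})_j-(\alpha_i^{(0)})_j=\frac{1}{|I_j|}\sum_{k\in I_j}\left[(\tilde{\beta}_i^{(k)})_j-(\beta_i^{(k)})_j\right].
\end{displaymath}

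To finish I would bound this average uniformly over the $d^2$ coordinates. By the linear expansion \eqref{e:difference} and the CLT \eqref{e:clt}, each summand equals a mean-zero Gaussian term of standard deviation of order $\kappa(\Sigma_{\epsilon}^{(k)})/\sqrt{N_k}$ plus a remainder $(\Delta_i^{(k)})_j$ that is uniformly of smaller order by Lemma \ref{lem:delta_bdd}. Since the $K$ subjects are independent, these terms are independent across $k\in I_j$, so averaging over $|I_j|$ of them reduces the variance by the factor $|I_j|$. Applying the Gaussian maximal inequality (Borell--TIS, exactly as in the preceding lemma) across all $d^2$ coordinates then yields
\begin{displaymath}
\|\tilde{\alpha}^{(0)}-\alpha^{(0)}\|_{\infty}=\mathcal{O}_{\mathbb{P}}\left(\max_k\kappa(\Sigma_{\epsilon}^{(k)})\sqrt{\frac{\log d^2}{|I_j|N_{\min}}}\right)=\mathcal{O}_{\mathbb{P}}\left(\sqrt{\frac{\log d^2}{KN_{\min}}}\right),
\end{displaymath}
using $|I_j|\geq 4K/7$ from Assumption \ref{ass:aggregation}(a). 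Combined with the reduction above, this establishes the bound for all $k$.

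The step I expect to be the main obstacle is the exact-recovery claim, namely that the empirically identified inlier set matches $I_j$ simultaneously across all $d^2$ coordinates with probability tending to one. This requires the uniform debiased error $\max_{i,j,k}|(\tilde{\beta}_i^{(k)})_j-(\beta_i^{(k)})_j|$ to be smaller than the population gaps controlling Assumption \ref{ass:aggregation}(b)--(c), which is precisely what the ``sufficiently large $N_k$'' hypothesis secures through \eqref{lem:beta_tilde_bdd}. A related subtlety is that Assumption \ref{ass:aggregation} is phrased in terms of the population quantities $(\beta_i^{(k)})_j$, whereas the estimator classifies using $(\tilde{\beta}_i^{(k)})_j$; bridging this gap uniformly is where the i.i.d.\ argument of \citet{maity2022meta} must be adapted to the VAR-debiased estimators, and it is essentially the only place where the time-series structure (through Lemmas \ref{lem:beta_hat_bdd} and \ref{lem:delta_bdd}) enters nontrivially.
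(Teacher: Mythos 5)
Your proposal follows essentially the same route as the paper's proof: invoke Result 5 of \cite{maity2022meta} to identify $(\tilde{\alpha}_i^{(0)})_j$ with the inlier mean of the debiased estimators (after verifying, for sufficiently large $N_k$, that the debiased errors are smaller than the separation gaps in Assumption \ref{ass:aggregation}), expand via the debiased decomposition \eqref{e:difference}, apply a sub-Gaussian concentration bound with a union bound over the $d^2$ coordinates to obtain the $\sqrt{\log d^2/(KN_{\min})}$ rate for the common path, and treat $k\geq 1$ by the triangle inequality with the individual debiased bound dominating. The only minor slip is your justification that $(\beta_i^{(k)})_j=(\alpha_i^{(0)})_j$ for every inlier $k\in I_j$: Assumption \ref{ass:aggregation}(b) only guarantees that the inlier \emph{mean} equals $(\alpha_i^{(0)})_j$, which is what the paper uses and is all that your displayed identity actually requires.
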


\begin{proof}
Recall Result 5 in \cite{maity2022meta}: Under Assumption \ref{ass:aggregation}, the objective function \eqref{e:redescending} has a unique minimizer $(\alpha_i^{(0)})_j = \mu_{ij}$. Note that if $\kappa(\Sigma^{(k)})\sqrt{\log d/N_{k}}\leq \frac{1}{4}(\eta_j-2\delta)\wedge(\eta_j -\delta_2/2)$ for all $k$, there exists $\hat{\delta},\hat{\delta}_2$ such that
\begin{displaymath}
    2\delta < 2\hat{\delta} < \eta_j < \hat{\delta}_2/2 \leq \delta_2/2,
\end{displaymath}
and Assumption \ref{ass:aggregation} holds with $\hat{\delta}$ and $\hat{\delta}_2$. Hence, by Result 5 in \cite{maity2022meta}, we have $(\tilde{\alpha}_i^{(0)})_j = \frac{1}{|I_j|}\sum_{k\in I_j}(\tilde{\beta}_i^{(k)})_j$. Let $w_j^{(k)} = \frac{1}{|I_j|}1_{\{I_j\}}(k)$, the indicator function of the events $k\in I_j$. Then 
\begin{align*}
    (\tilde{\alpha}_i^{(0)})_j 
    &= \sum_{k=1}^K w_j^{(k)}(\tilde{\beta}^{(k)})_j \\
    &= \sum_{k=1}^K w_j^{(k)}\left((\beta^{(k)})_j - \frac{1}{N_k}\hat{\Theta}_{j:}^{(k)}\mathcal{X}^{(k)'}E_{i}^{(k)} + (\Delta_i^{(k)})_j \right) \\
    &= (\alpha_i^{(0)})_j - \sum_{k=1}^K\frac{w_j^{(k)}}{N_k}\hat{\Theta}_{j:}^{(k)}\mathcal{X}^{(k)'}E_{i}^{(k)} + \sum_{k=1}^K w_j^{(k)}(\Delta_i^{(k)})_j,
\end{align*}
Then
\begin{align}
    & |(\tilde{\alpha}^{(0)})_j - (\alpha^{(0)})_j| \\
    &\leq \left| \sum_{k=1}^K\frac{w_j^{(k)}}{N_k}\hat{\Theta}_{j:}^{(k)}\mathcal{X}^{(k)'}E_{i}^{(k)} \right| + \left| \sum_{k=1}^K w_j^{(k)}(\Delta_i^{(k)})_j \right| \nonumber\\
    &\leq \left| \sum_{k=1}^K\frac{w_j^{(k)}}{N_k}\Theta_{j:}^{(k)}\mathcal{X}^{(k)'}E_{i}^{(k)} \right| + \left| \sum_{k=1}^K\frac{w_j^{(k)}}{N_k}(\hat{\Theta}_{j:}^{(k)} - \Theta_{j:}^{(k)})\mathcal{X}^{(k)'}E_{i}^{(k)} \right|  + \left| \sum_{k=1}^K w_j^{(k)}(\Delta_i^{(k)})_j \right| \nonumber\\
    &\leq \left| \sum_{k=1}^K\frac{w_j^{(k)}}{N_k}\Theta_{j:}^{(k)}\mathcal{X}^{(k)'}E_{i}^{(k)} \right| + \sum_{k=1}^K w_j^{(k)}\left( \|\hat{\Theta}_{j:}^{(k)} - \Theta_{j:}^{(k)}\|_1\|\frac{1}{N_k}\mathcal{X}^{(k)'}E_{i}^{(k)}\|_{\infty} + |(\Delta_i^{(k)})_j|\right). \label{e:individual_sum}
\end{align}
The second term in \eqref{e:individual_sum} is, from \eqref{e:delta} and \eqref{e:second_term} in Lemmas \ref{lem:delta_bdd} and \ref{lem:beta_tilde_bdd}, bounded above by
\begin{displaymath}
    \mathcal{O}_{\mathbb{P}}\left(\sum_{k=1}^K w_j^{(k)} \left(\kappa(\Sigma_{\epsilon}^{(k)})\kappa^4(\Phi^{(k)})\|\Phi^{(k)}\|^2\max\{s_{\max},s_{0}\}\frac{\log d}{N_{\min}}\right)\right)
\end{displaymath}
for all $j$ and $k$, which converges to 0 fast by Assumption \ref{ass:ts}. Note that the bound does not depend on $i$. Hence, we take the union bound across $j=1,\ldots,d^2$. For the first term in \eqref{e:individual_sum}, one can show that for each $k$, the sum of $(\Theta_{j:}^{(k)'}X_{t-1})\epsilon_{i,t} = Z_{t}^{(k)}$, $t=p+1,\ldots,T_k$ converges to
\begin{displaymath}
   \frac{1}{\sqrt{N_k}}\sum_{t=1}^{N_k}Z_{t}^{(k)} := S_{N_k}^{(k)} \stackrel{d}{\to} \mathcal{N}(0,\sigma_{k,i}^2\Theta_{jj}^{(k)}).
\end{displaymath}
Therefore, by using generalized Hoeffding inequality \cite[e.g., Theorem 2.6.3 in][]{vershynin2018high}, with $a=(w_{j}^{(1)}/\sqrt{N_1},\ldots,w_{j}^{(K)}/\sqrt{N_k})$,
\begin{displaymath}
    \mathbb{P}\left( \left| \sum_{k=1}^K\frac{1}{\sqrt{N_k}}S_{N_k}^{(k)} \right| \geq \eta \right) \leq 2\exp\left( -\frac{c_1\eta^2}{-c_2\max_{k}(\sigma_{k,i}^2\Theta_{jj}^{(k)})\|a\|_2^2} \right).
\end{displaymath}
for some constants $c_1$ and $c_2$. By taking $\eta=\mathcal{O}(\max_k\kappa(\Sigma_{\epsilon}^{(k)})\sqrt{\log d^2}\|a\|_2)$ and using $\Theta_{jj}^{(k)} \leq 1/\sigma_{k,\min}^2$, the first term in \eqref{e:individual_sum} is bounded above by
\begin{displaymath}
    \mathcal{O}_{\mathbb{P}}\left(\max_k\kappa(\Sigma_{\epsilon}^{(k)})\sqrt{\frac{\log d^2}{|I_j|^2}\sum_{k\in I_j}\frac{1}{N_k}}\right) 
    \leq \mathcal{O}_{\mathbb{P}}\left(\max_k\kappa(\Sigma_{\epsilon}^{(k)})\sqrt{\frac{\log d^2}{KN_{\min}}}\right)
\end{displaymath}
and by Assumption \ref{ass:ts}, the bound on the first term dominates the bound of $|(\hat{\alpha}^{(0)})_j - (\alpha^{(0)})_j|$. For the second inequality, by using the triangular inequality,
\begin{align}
    \|\tilde{\alpha}^{(k)} - \alpha^{(k)}\|_{\infty} 
    &\leq \mathcal{O}_{\mathbb{P}}\Bigg(\kappa(\Sigma^{(k)})\sqrt{\frac{\log d^2}{N_k}} + \kappa(\Sigma_{\epsilon}^{(k)})\kappa^4(\Phi^{(k)})\max\{s_{\max,k},s_{0,k}\}\frac{\log d^2}{N_k} \nonumber\\
    &+ \max_k\kappa(\Sigma_{\epsilon}^{(k)})\sqrt{\frac{\log d^2}{KN_{\min}}} + \max_k\kappa(\Sigma_{\epsilon}^{(k)})\kappa^4(\Phi^{(k)})\max\{s_{\max},s_{0}\}\frac{\log d^2}{N_{\min}}\Bigg). \label{e:unique_bdd}
\end{align}
Note that the first term in \eqref{e:unique_bdd} dominates for a sufficiently large $N_k$ with the condition (a) in Assumption \ref{ass:aggregation} so that
\begin{displaymath}
    \|\tilde{\alpha}^{(k)} - \alpha^{(k)}\|_{\infty} \leq \mathcal{O}_{\mathbb{P}}\left(\kappa(\Sigma_{\epsilon}^{(k)})\sqrt{\frac{\log d^2}{N_k}}\right).
\end{displaymath}
This corresponds to the Lemma 6 in \cite{maity2022meta} by replacing $\sigma$ with $\max_k \kappa(\Sigma^{(k)})$.    
\end{proof}

\begin{lemma}\label{lem:sigma}
For each $k$, $\hat{\sigma}_{k,i}^2 \stackrel{p}{\to}\sigma_{k,i}^2$, $i=1,\ldots,d$.    
\end{lemma}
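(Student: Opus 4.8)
The plan is to expand the residual sum of squares around the true error and show that the two terms carrying the estimation error $\hat{\beta}_i^{(k)}-\beta_i^{(k)}$ are asymptotically negligible, leaving the empirical second moment of the innovations, which obeys a law of large numbers. Concretely, I would first write the residual at time $t$ as
\begin{displaymath}
(\mathcal{Y}_i^{(k)})_t - (\mathcal{X}^{(k)})_{t:}\hat{\beta}_i^{(k)} = \epsilon_{i,t}^{(k)} + (\mathcal{X}^{(k)})_{t:}(\beta_i^{(k)} - \hat{\beta}_i^{(k)}),
\end{displaymath}
using the true model $(\mathcal{Y}_i^{(k)})_t = (\mathcal{X}^{(k)})_{t:}\beta_i^{(k)} + \epsilon_{i,t}^{(k)}$. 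Squaring and averaging then decomposes $\hat{\sigma}_{k,i}^2$ into three pieces,
\begin{displaymath}
\hat{\sigma}_{k,i}^2 = \underbrace{\frac{1}{N_k}\sum_{t=1}^{N_k}(\epsilon_{i,t}^{(k)})^2}_{\textrm{(I)}} + \underbrace{\frac{2}{N_k}(E_i^{(k)})'\mathcal{X}^{(k)}(\beta_i^{(k)} - \hat{\beta}_i^{(k)})}_{\textrm{(II)}} + \underbrace{(\hat{\beta}_i^{(k)} - \beta_i^{(k)})'\hat{\Sigma}^{(k)}(\hat{\beta}_i^{(k)} - \beta_i^{(k)})}_{\textrm{(III)}},
\end{displaymath}
where $\hat{\Sigma}^{(k)}=\frac{1}{N_k}\mathcal{X}^{(k)'}\mathcal{X}^{(k)}$.

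The core of the argument is then a term-by-term treatment. For (I), since the innovations $\{\epsilon_{i,t}^{(k)}\}_t$ are i.i.d.\ $\mathcal{N}(0,\sigma_{k,i}^2)$, the strong law of large numbers gives $\textrm{(I)}\stackrel{p}{\to}\sigma_{k,i}^2$. For (III), I would apply the entrywise H\"{o}lder bound
\begin{displaymath}
\textrm{(III)} \leq \Big(\max_{a,b}\big|[\hat{\Sigma}^{(k)}]_{ab}\big|\Big)\,\|\hat{\beta}_i^{(k)} - \beta_i^{(k)}\|_1^2,
\end{displaymath}
where $\max_{a,b}|[\hat{\Sigma}^{(k)}]_{ab}| = \mathcal{O}_{\mathbb{P}}(1)$ by the concentration of the empirical autocovariances of a stable VAR process (Proposition 3.2 in \citealp{basu2015regularized}, as already used to calibrate $\lambda^{(k)}$), and $\|\hat{\beta}_i^{(k)} - \beta_i^{(k)}\|_1 = \mathcal{O}_{\mathbb{P}}(s_{0,k}\sqrt{\log d / N_k})$ by Lemma \ref{lem:beta_hat_bdd}. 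Hence $\textrm{(III)} = \mathcal{O}_{\mathbb{P}}(s_{0,k}^2 \log d / N_k)$, which tends to $0$ under Assumption \ref{ass:ts} since that assumption forces $s_{0,k}\log d/\sqrt{N_k}\to 0$. For the cross term (II), I would use
\begin{displaymath}
|\textrm{(II)}| \leq 2\Big\|\tfrac{1}{N_k}\mathcal{X}^{(k)'}E_i^{(k)}\Big\|_{\infty}\,\|\hat{\beta}_i^{(k)} - \beta_i^{(k)}\|_1,
\end{displaymath}
and invoke the gradient deviation bound $\|\frac{1}{N_k}\mathcal{X}^{(k)'}E_i^{(k)}\|_{\infty} = \mathcal{O}_{\mathbb{P}}(\sigma_{k,\max}^2\kappa^2(\Phi^{(k)})\sqrt{\log d/N_k})$ already established in the proof of Lemma \ref{lem:delta_bdd}, giving $|\textrm{(II)}| = \mathcal{O}_{\mathbb{P}}(s_{0,k}\log d/N_k)\to 0$ by the same assumption. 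Combining the three pieces and applying Slutsky's theorem yields $\hat{\sigma}_{k,i}^2\stackrel{p}{\to}\sigma_{k,i}^2$.

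The only genuinely nontrivial ingredient is the control of the design-dependent quantities $\max_{a,b}|[\hat{\Sigma}^{(k)}]_{ab}|$ and $\|\frac{1}{N_k}\mathcal{X}^{(k)'}E_i^{(k)}\|_{\infty}$ under the temporal dependence induced by the VAR dynamics; unlike the i.i.d.\ case these require the spectral-density-based deviation inequalities for stable Gaussian VAR processes. Fortunately, both bounds are exactly the ones already derived and reused in Lemmas \ref{lem:beta_hat_bdd} and \ref{lem:delta_bdd}, so the proof reduces to reassembling them rather than proving fresh concentration results, and no new conditions beyond Assumption \ref{ass:ts} are needed.
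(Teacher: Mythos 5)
Your proof is correct and follows essentially the same route as the paper: the identical three-term expansion of the residual sum of squares, the law of large numbers for the pure-noise term (I), and the same dual-norm bound $2\bigl\|\tfrac{1}{N_k}\mathcal{X}^{(k)'}E_i^{(k)}\bigr\|_{\infty}\|\hat{\beta}_i^{(k)}-\beta_i^{(k)}\|_1$ for the cross term (II). The one substantive difference is the quadratic term (III): the paper bounds $\tfrac{1}{N_k}\|\mathcal{X}^{(k)}(\hat{\beta}_i^{(k)}-\beta_i^{(k)})\|_2^2$ directly via the restricted-eigenvalue prediction-error bound (Proposition 3.3 of Basu and Michailidis, 2015), yielding order $s_{0,k}\log d/N_k$, whereas you use the entrywise H\"{o}lder bound $\max_{a,b}|[\hat{\Sigma}^{(k)}]_{ab}|\,\|\hat{\beta}_i^{(k)}-\beta_i^{(k)}\|_1^2$, which is cruder (order $s_{0,k}^2\log d/N_k$) but still vanishes under Assumption \ref{ass:ts}, since that assumption forces $s_{0,k}\log d/\sqrt{N_k}\to 0$ and the condition-number factors are at least one. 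What your variant buys is that this step no longer needs the RE condition; what it costs is a worse sparsity dependence and the auxiliary claim $\max_{a,b}|[\hat{\Sigma}^{(k)}]_{ab}|=\mathcal{O}_{\mathbb{P}}(1)$, which strictly speaking is $\mathcal{O}_{\mathbb{P}}(\mathcal{M}(f_X))=\mathcal{O}_{\mathbb{P}}(\sigma_{k,\max}^2\|(\Phi^{(k)})^{-1}\|^2)$ rather than an absolute constant; you should either carry these spectral factors explicitly or note that they are absorbed into the rate exactly as in the paper's other bounds.
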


\begin{proof}
Note that for fixed $i$,
\begin{align}
    \hat{\sigma}_{k,i}^2 
    &= \frac{1}{N_k}\sum_{t=1}^{N_k}\left((\mathcal{Y}_i^{(k)})_{t} - (\mathcal{X}^{(k)})_{t:}\hat{\beta}_{i}^{(k)}\right)^2 = \frac{1}{N_k}\|\mathcal{Y}_i^{(k)} - \mathcal{X}^{(k)}\hat{\beta}_i^{(k)}\|_2^2 \nonumber\\
    &= \frac{1}{N_k}\|\mathcal{X}^{(k)}\beta_i^{(k)} - \mathcal{X}^{(k)}\hat{\beta}_i^{(k)} + E_{i}^{(k)}\|_2^2 \nonumber\\
    &= \frac{1}{N_k}\| \mathcal{X}^{(k)}(\hat{\beta}_i^{(k)} - \beta_i^{(k)})\|_2^2 - \frac{2}{N_k}\langle \mathcal{X}^{(k)}(\hat{\beta}_i^{(k)} - \beta_i^{(k)}),E_{i}^{(k)} \rangle + \frac{1}{N}_k\|E_{i}^{(k)}\|_2^2. \label{e:sigma_est}
\end{align}
Note that by the law of large numbers, the last term in \eqref{e:sigma_est} converges to $\sigma_{k,i}^2$. For the first term in \eqref{e:sigma_est}, by using Proposition 3.3 in \cite{basu2015regularized}, it is bounded above by
\begin{displaymath}
    \mathcal{O}_{\mathbb{P}}\left(\frac{s_{0,k}(\lambda^{(k)}(\mathcal{M}(f_X)+\mathcal{M}(f_{\epsilon})))^2}{\alpha_{\textrm{RE}}}\right) = \mathcal{O}_{\mathbb{P}}\left(\kappa^2(\Sigma_{\epsilon}^{(k)})\kappa^4(\Phi^{(k)})\|\Phi^{(k)}\|^2\frac{s_{0,k}\log d}{N_k}\right) = \mathcal{O}_{\mathbb{P}}\left(\frac{1}{\sqrt{N_k}}\right).
\end{displaymath}
Note that the second term in \eqref{e:sigma_est}, from \eqref{e:db} and \eqref{e:bdd_sparse} in Lemma \ref{lem:beta_hat_bdd}, is bounded above by 
\begin{align*}
    2\left\|\frac{\mathcal{X}^{(k)'}E_{i}^{(k)}}{N_k}\right\|_{\infty}\|\hat{\beta}_i^{(k)} - \beta_{i}^{(k)}\|_{1} 
    &\leq \mathcal{O}_{\mathbb{P}}\left(\sigma_{k,\max}^2\kappa^2(\Phi^{(k)})\sqrt{\frac{\log d}{N_k}}\right) \mathcal{O}_{\mathbb{P}}\left(\kappa(\Sigma_{\epsilon}^{(k)})\kappa^2(\Phi^{(k)})\|\Phi^{(k)}\|^2s_{0,k}\sqrt{\frac{\log d}{N_k}}\right) \\
    &\leq \mathcal{O}_{\mathbb{P}}\left(\kappa^2(\Sigma_{\epsilon}^{(k)})\kappa^4(\Phi^{(k)})\|\Phi^{(k)}\|^2\frac{s_{0,k}\log d}{N_k}\right) = \mathcal{O}_{\mathbb{P}}\left(\frac{1}{\sqrt{N_k}}\right) \\
    &= \mathcal{O}_{\mathbb{P}}\left(\frac{1}{\sqrt{N_k}}\right).
\end{align*}
Hence, we have the desired result.    
\end{proof}

\begin{lemma}\label{lem:scaled_var}
$\hat{\sigma}_{i,k}\sqrt{\hat{\Omega}_{jj}^{(k)}} \stackrel{p}{\to} \sigma_{i,k}\sqrt{\Theta_{jj}^{(k)}}$ for all $i,j$, and $k$.    
\end{lemma}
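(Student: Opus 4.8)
The plan is to factor the claim into two separate convergences and then invoke Slutsky's theorem. Writing $\hat{\Omega}_{jj}^{(k)} = [\hat{\Theta}^{(k)}\hat{\Sigma}^{(k)}\hat{\Theta}^{(k)'}]_{jj}$, whose population analogue is $\Theta_{jj}^{(k)}$ (since $\Theta^{(k)}\Sigma^{(k)}\Theta^{(k)} = \Theta^{(k)}$), it suffices to show (i) $\hat{\sigma}_{k,i}\stackrel{p}{\to}\sigma_{k,i}$ and (ii) $\hat{\Omega}_{jj}^{(k)}\stackrel{p}{\to}\Theta_{jj}^{(k)}$; the product then converges to $\sigma_{k,i}\sqrt{\Theta_{jj}^{(k)}}$ by the continuous mapping theorem, using $\sigma_{k,i}^2,\Theta_{jj}^{(k)}>0$. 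Claim (i) is immediate from Lemma \ref{lem:sigma} together with continuity of the square root.

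For claim (ii) I would exploit the nodewise structure $\hat{\Theta}_{j:}^{(k)} = (\hat{\tau}_j^{(k)})^{-2}\hat{\Gamma}_{j:}^{(k)}$, where $\hat{\Gamma}_{j:}^{(k)}$ carries a $1$ in position $j$ and $-\hat{\gamma}_{j,\ell}^{(k)}$ elsewhere. Since $\mathcal{X}^{(k)}(\hat{\Gamma}_{j:}^{(k)})' = \mathcal{X}_j^{(k)} - \mathcal{X}_{-j}^{(k)}\hat{\gamma}_j^{(k)}$ and $\hat{\Sigma}^{(k)}=\frac{1}{N_k}\mathcal{X}^{(k)'}\mathcal{X}^{(k)}$, the diagonal quadratic form collapses to
\begin{displaymath}
\hat{\Omega}_{jj}^{(k)} = (\hat{\tau}_j^{(k)})^{-4}\,\hat{\Gamma}_{j:}^{(k)}\hat{\Sigma}^{(k)}(\hat{\Gamma}_{j:}^{(k)})' = (\hat{\tau}_j^{(k)})^{-4}\,\frac{1}{N_k}\left\|\mathcal{X}_j^{(k)} - \mathcal{X}_{-j}^{(k)}\hat{\gamma}_j^{(k)}\right\|_2^2 .
\end{displaymath}
Substituting the definition of $(\hat{\tau}_j^{(k)})^2$ in \eqref{e:sample_error_variance}, namely $\frac{1}{N_k}\|\mathcal{X}_j^{(k)} - \mathcal{X}_{-j}^{(k)}\hat{\gamma}_j^{(k)}\|_2^2 = (\hat{\tau}_j^{(k)})^2 - \lambda_j^{(k)}\|\hat{\gamma}_j^{(k)}\|_1$, yields the clean identity
\begin{displaymath}
\hat{\Omega}_{jj}^{(k)} = (\hat{\tau}_j^{(k)})^{-2} - (\hat{\tau}_j^{(k)})^{-4}\,\lambda_j^{(k)}\|\hat{\gamma}_j^{(k)}\|_1 .
\end{displaymath}

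It then remains to show $(\hat{\tau}_j^{(k)})^2\stackrel{p}{\to}(\tau_j^{(k)})^2$ and to kill the correction term. The decomposition \eqref{e:sample_error_variance2} already used in the proof of Lemma \ref{lem:delta_bdd} shows that the leading piece of $\frac{1}{N_k}\|\mathcal{X}_j^{(k)} - \mathcal{X}_{-j}^{(k)}\hat{\gamma}_j^{(k)}\|_2^2$ equals the population error variance $(\tau_j^{(k)})^2$, while the remaining cross and quadratic terms are $\mathcal{O}_{\mathbb{P}}(N_k^{-1/2})$; together with $\lambda_j^{(k)}\|\hat{\gamma}_j^{(k)}\|_1 = o_{\mathbb{P}}(1)$, obtained from \eqref{e:sample_error_variance3} under Assumption \ref{ass:ts}, this gives $(\hat{\tau}_j^{(k)})^2 = (\tau_j^{(k)})^2 + o_{\mathbb{P}}(1)$. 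Because $(\tau_j^{(k)})^2 \geq \sigma_{k,\min}^2 > 0$ and $1/(\tau_j^{(k)})^2 = \Theta_{jj}^{(k)}$ (the van de Geer Lemma 5.3 identity reconstructed in Lemma \ref{lem:delta_bdd}), the continuous mapping theorem gives $(\hat{\tau}_j^{(k)})^{-2}\stackrel{p}{\to}\Theta_{jj}^{(k)}$. The correction term is $o_{\mathbb{P}}(1)$ since $(\hat{\tau}_j^{(k)})^{-4}=\mathcal{O}_{\mathbb{P}}(1)$ (from $\max_j (\hat{\tau}_j^{(k)})^{-2}=\mathcal{O}_{\mathbb{P}}(1/\sigma_{k,\min}^2)$ in Lemma \ref{lem:delta_bdd}) multiplied by $\lambda_j^{(k)}\|\hat{\gamma}_j^{(k)}\|_1=o_{\mathbb{P}}(1)$, whence $\hat{\Omega}_{jj}^{(k)}\stackrel{p}{\to}\Theta_{jj}^{(k)}$.

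The main obstacle is not any single limit but assembling the exact algebraic identity for $\hat{\Omega}_{jj}^{(k)}$ in terms of $(\hat{\tau}_j^{(k)})^2$ and the penalty correction, and then confirming that each error term there vanishes at the required rate. The delicate point is the bookkeeping of $\lambda_j^{(k)}\|\hat{\gamma}_j^{(k)}\|_1$, which \eqref{e:sample_error_variance3} only records as $\mathcal{O}_{\mathbb{P}}(1)$ but which must be upgraded to $o_{\mathbb{P}}(1)$ through Assumption \ref{ass:ts}. Since the statement is pointwise in $(i,j,k)$, no uniform control over $j$ is needed, which simplifies the argument considerably.
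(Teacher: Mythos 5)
Your proof is correct, but it takes a genuinely different route from the paper's. You exploit the exact nodewise-Lasso algebra, collapsing the quadratic form to $\hat{\Omega}_{jj}^{(k)} = (\hat{\tau}_j^{(k)})^{-4}\,N_k^{-1}\|\mathcal{X}_j^{(k)}-\mathcal{X}_{-j}^{(k)}\hat{\gamma}_j^{(k)}\|_2^2 = (\hat{\tau}_j^{(k)})^{-2} - (\hat{\tau}_j^{(k)})^{-4}\lambda_j^{(k)}\|\hat{\gamma}_j^{(k)}\|_1$, so the lemma reduces to the scalar convergence $(\hat{\tau}_j^{(k)})^2 \stackrel{p}{\to} (\tau_j^{(k)})^2 = 1/\Theta_{jj}^{(k)}$ together with Lemma \ref{lem:sigma} and the continuous mapping theorem; this is essentially the identity underlying the variance estimator in \cite{van2014asymptotically}. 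The paper instead bounds the matrix error $\|\hat{\Omega}^{(k)}-\Theta^{(k)}\|_{\infty}$ via the triangle-inequality split into $(\hat{\Theta}^{(k)}\hat{\Sigma}^{(k)}-I)\hat{\Theta}^{(k)'}$ and $\hat{\Theta}^{(k)}-\Theta^{(k)}$, controlling the first piece by the KKT residual bound $\lambda_j^{(k)}/(\hat{\tau}_j^{(k)})^2$ times $\|\hat{\Theta}_{j:}^{(k)}\|_1$ and the second by the $\ell_1$/$\ell_2$ error of $\hat{\gamma}_j^{(k)}/(\hat{\tau}_j^{(k)})^2$. The paper's route buys an explicit rate $\mathcal{O}_{\mathbb{P}}\bigl(\sqrt{s_{\max,k}\log d/N_k}\bigr)$ that is uniform over $j$, at the cost of heavier condition-number and sparsity bookkeeping; your pointwise argument is more elementary and fully sufficient for the lemma as stated, which is pointwise in $(i,j,k)$. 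Your handling of the one delicate step — upgrading $\lambda_j^{(k)}\|\hat{\gamma}_j^{(k)}\|_1$ from the $\mathcal{O}_{\mathbb{P}}(1)$ recorded in \eqref{e:sample_error_variance3} to $o_{\mathbb{P}}(1)$ under Assumption \ref{ass:ts} — is correct (both summands there are of order $\sqrt{s_{j,k}\log d/N_k}$ and $s_{j,k}\log d/N_k$, which vanish); note the paper itself uses $\mathcal{O}_{\mathbb{P}}(1)$ loosely where $o_{\mathbb{P}}(1)$ is meant, including in its own concluding display for this lemma.
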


\begin{proof}
Define $\hat{\Omega}^{(k)}= \hat{\Theta}^{(k)}\hat{\Sigma}^{(k)}\hat{\Theta}^{(k)'}$. Note that 
\begin{equation}
    \|\hat{\Omega}^{(k)} - \Theta^{(k)}\|_{\infty} = \|(\hat{\Theta}^{(k)}\hat{\Sigma}^{(k)}-I)\hat{\Theta}^{(k)'}\|_{\infty} + \|\hat{\Theta}^{(k)} - \Theta^{(k)}\|_{\infty}. \label{e:theta_est}
\end{equation}
The first term in \eqref{e:theta_est} is bounded above by 
\begin{align*}
    \max_j |e_j - \hat{\Theta}_{j:}^{(k)}\Sigma^{(k)}|\|\hat{\Theta}_{j:}^{(k)}\|_1 
    &\leq \max_j |e_j - \hat{\Theta}_{j:}^{(k)}\Sigma^{(k)}|\|\hat{\gamma}_{j}^{(k)}/(\hat{\tau}_{j}^{(k)})^2\|_1\\
    &\leq \mathcal{O}_{\mathbb{P}}\left(\kappa(\Sigma_{\epsilon}^{(k)})\kappa^2(\Phi^{(k)})\sqrt{\frac{\log d}{N_k}}\right)\mathcal{O}(\max_{j}\sqrt{s_{j,k}})\mathcal{O}_{\mathbb{P}}(1/\sigma_{k,\min}^2) \\
    &\leq \mathcal{O}_{\mathbb{P}}\left( \frac{\kappa(\Sigma_{\epsilon}^{(k)})\kappa^2(\Phi^{(k)})}{\sigma_{k,\min}^2}\sqrt{\frac{s_{\max,k}\log d}{N_k}}\right).
\end{align*}
The second term in \eqref{e:theta_est} is bounded above by
\begin{align*}
    \max_j\|\hat{\Theta}_{j:}^{(k)} - \Theta_{j:}^{(k)}\| 
    &\leq \max_j\left(\|\hat{\gamma}_j^{(k)} - \gamma_j^{(k)}\|_2/(\hat{\tau}_j^{(k)})^2 + \|\gamma_j^{(k)}\|_2( 1/(\hat{\tau}_j^{(k)})^2 - 1/(\tau_j^{(k)})^2) \right) \\
    &\leq \max_j \left\{\mathcal{O}_{\mathbb{P}}\left(\frac{\sqrt{s_{j,k}}\lambda^{(k)}(\mathcal{M}(f_X)+\mathcal{M}(f_{\epsilon}))}{\alpha_{\textrm{RE}}}\right)\mathcal{O}_{\mathbb{P}}(1/\sigma_{k,\min}^2) + \mathcal{O}_{\mathbb{P}}\left(\sqrt{\frac{s_{j,k}\log d}{N_k}}\right)\right\} \\
    &\leq \mathcal{O}_{\mathbb{P}}\left(\frac{\kappa(\Sigma_{\epsilon}^{(k)})\kappa^2(\Phi^{(k)})\|\Phi^{(k)}\|^2}{\sigma_{k,\min}^2}\sqrt{\frac{s_{\max,k}\log d}{N_k}}\right)
\end{align*}
Hence, $\|\hat{\Omega}^{(k)} - \Theta^{(k)}\|_{\infty} = \mathcal{O}_{\mathbb{P}}(1)$. By combining with Lemma \ref{lem:sigma}, it completes the proof.    
\end{proof}


\newpage
\section{Additional Figure}\label{se:figures}

\begin{figure}[h]
 \centering
 \includegraphics[width=0.6\textwidth,height=0.4\textheight]
 {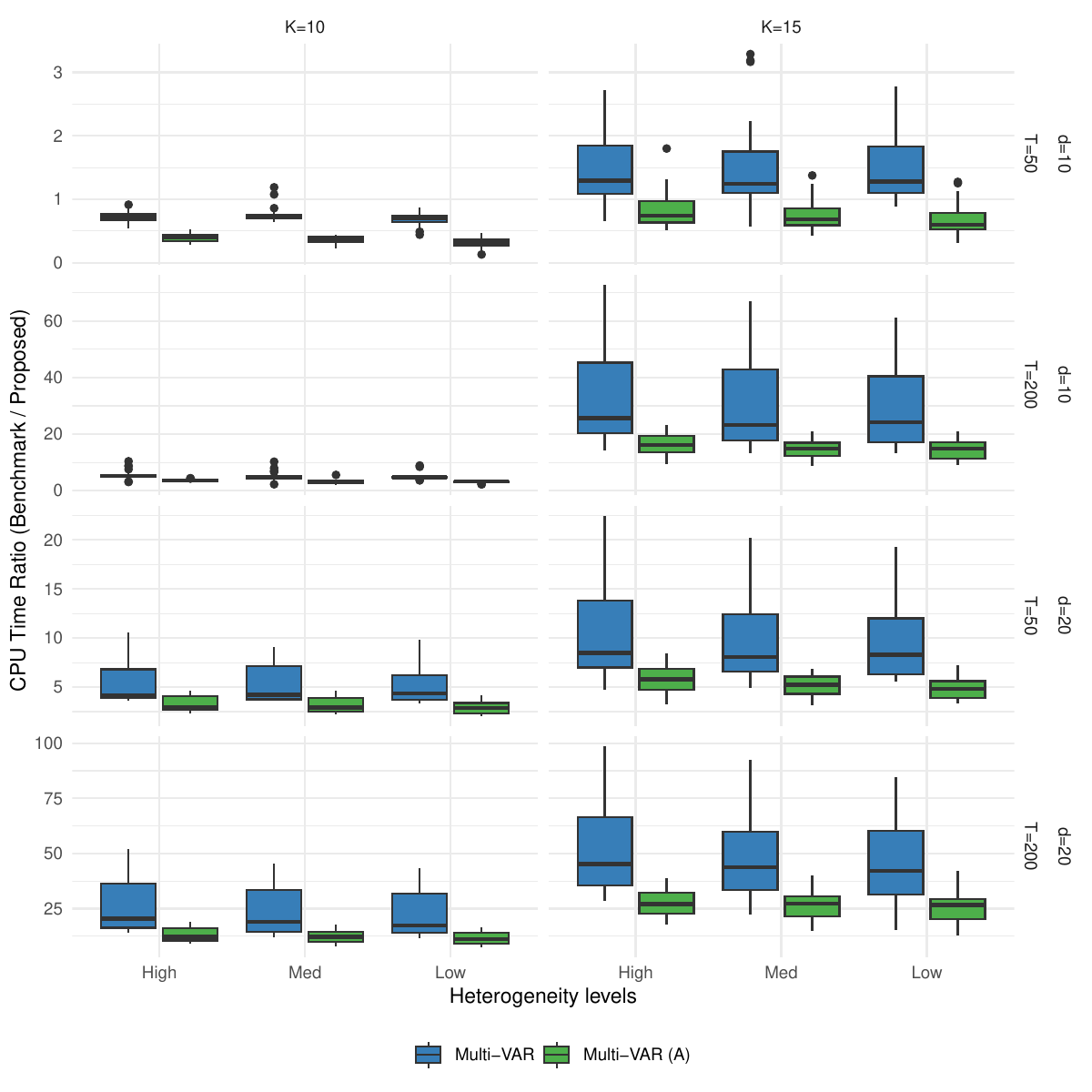}
 \caption{Boxplots of the CPU time ratio (Benchmark / Proposed) under different combinations of $d$ and average $T$ (combinations indicated on the right tabs), $K$ (each column), and heterogeneity levels (each axis).}
 \label{fig:numer_cputime}
\end{figure}

\clearpage
\small
\bibliographystyle{apalike}
\bibliography{multiVARSE}

\begin{thebibliography}{}

\bibitem[Abdi et~al., 2013]{abdi2013multiple}
Abdi, H., Williams, L.~J., and Valentin, D. (2013).
\newblock Multiple factor analysis: principal component analysis for multitable
  and multiblock data sets.
\newblock {\em Wiley Interdisciplinary Reviews: Computational Statistics},
  5(2):149--179.

\bibitem[Adamek et~al., 2023]{adamek2023lasso}
Adamek, R., Smeekes, S., and Wilms, I. (2023).
\newblock Lasso inference for high-dimensional time series.
\newblock {\em Journal of Econometrics}, 235(2):1114--1143.

\bibitem[Adler and Taylor, 2007]{adler2007gaussian}
Adler, R.~J. and Taylor, J.~E. (2007).
\newblock Gaussian inequalities.
\newblock {\em Random Fields and Geometry}, pages 49--64.

\bibitem[Alquier et~al., 2020]{alquier2020high}
Alquier, P., Bertin, K., Doukhan, P., and Garnier, R. (2020).
\newblock High-dimensional {VAR} with low-rank transition.
\newblock {\em Statistics and Computing}, 30(4):1139--1153.

\bibitem[Andrews-Hanna et~al., 2010]{andrews2010functional}
Andrews-Hanna, J.~R., Reidler, J.~S., Sepulcre, J., Poulin, R., and Buckner,
  R.~L. (2010).
\newblock Functional-anatomic fractionation of the brain's default network.
\newblock {\em Neuron}, 65(4):550--562.

\bibitem[Asiaee et~al., 2019]{asiaee2019data}
Asiaee, A., Oymak, S., Coombes, K.~R., and Banerjee, A. (2019).
\newblock Data enrichment: Multi-task learning in high dimension with
  theoretical guarantees.
\newblock In {\em Adaptive and Multitask Learning Workshop at the ICML. IMLS,
  Long Beach, CA}.

\bibitem[Barch et~al., 2013]{barch2013function}
Barch, D.~M., Burgess, G.~C., Harms, M.~P., Petersen, S.~E., Schlaggar, B.~L.,
  Corbetta, M., Glasser, M.~F., Curtiss, S., Dixit, S., Feldt, C., et~al.
  (2013).
\newblock Function in the human connectome: {task-fMRI} and individual
  differences in behavior.
\newblock {\em Neuroimage}, 80:169--189.

\bibitem[Basu et~al., 2024]{basu2024high}
Basu, S., Das, S., Michailidis, G., and Purnanandam, A. (2024).
\newblock A high-dimensional approach to measure connectivity in the financial
  sector.
\newblock {\em The Annals of Applied Statistics}, 18(2):922--945.

\bibitem[Basu et~al., 2019]{basu2019low}
Basu, S., Li, X., and Michailidis, G. (2019).
\newblock Low rank and structured modeling of high-dimensional vector
  autoregressions.
\newblock {\em IEEE Transactions on Signal Processing}, 67(5):1207--1222.

\bibitem[Basu and Michailidis, 2015]{basu2015regularized}
Basu, S. and Michailidis, G. (2015).
\newblock Regularized estimation in sparse high-dimensional time series models.
\newblock {\em The Annals of Statistics}, 43(4):1535--1567.

\bibitem[Beck and Teboulle, 2009]{beck2009fast}
Beck, A. and Teboulle, M. (2009).
\newblock A fast iterative shrinkage-thresholding algorithm for linear inverse
  problems.
\newblock {\em SIAM Journal on Imaging Sciences}, 2(1):183--202.

\bibitem[Buhle et~al., 2014]{buhle2014cognitive}
Buhle, J.~T., Silvers, J.~A., Wager, T.~D., Lopez, R., Onyemekwu, C., Kober,
  H., Weber, J., and Ochsner, K.~N. (2014).
\newblock Cognitive reappraisal of emotion: a meta-analysis of human
  neuroimaging studies.
\newblock {\em Cerebral Cortex}, 24(11):2981--2990.

\bibitem[Chen et~al., 2011]{chen2011vector}
Chen, G., Glen, D.~R., Saad, Z.~S., Hamilton, J.~P., Thomason, M.~E., Gotlib,
  I.~H., and Cox, R.~W. (2011).
\newblock Vector autoregression, structural equation modeling, and their
  synthesis in neuroimaging data analysis.
\newblock {\em Computers in Biology and Medicine}, 41(12):1142--1155.

\bibitem[Cochran, 1934]{cochran1934distribution}
Cochran, W.~G. (1934).
\newblock The distribution of quadratic forms in a normal system, with
  applications to the analysis of covariance.
\newblock In {\em Mathematical Proceedings of the Cambridge Philosophical
  Society}, volume~30, pages 178--191. Cambridge University Press.

\bibitem[Crawford et~al., 2024]{crawford2024penalized}
Crawford, C.~M., Park, J.~J., Chow, S.-M., Ernst, A.~F., Pipiras, V., and
  Fisher, Z.~F. (2024).
\newblock Penalized subgrouping of heterogeneous time series.
\newblock {\em arXiv preprint arXiv:2409.03085}.

\bibitem[Critchley et~al., 2004]{critchley2004neural}
Critchley, H.~D., Wiens, S., Rotshtein, P., {\"O}hman, A., and Dolan, R.~J.
  (2004).
\newblock Neural systems supporting interoceptive awareness.
\newblock {\em Nature Neuroscience}, 7(2):189--195.

\bibitem[Davis et~al., 2016]{davis2016sparse}
Davis, R.~A., Zang, P., and Zheng, T. (2016).
\newblock Sparse vector autoregressive modeling.
\newblock {\em Journal of Computational and Graphical Statistics},
  25(4):1077--1096.

\bibitem[Etkin et~al., 2015]{etkin2015neural}
Etkin, A., B{\"u}chel, C., and Gross, J.~J. (2015).
\newblock The neural bases of emotion regulation.
\newblock {\em Nature Reviews Neuroscience}, 16(11):693--700.

\bibitem[Fan et~al., 2018]{fan2018heterogeneity}
Fan, J., Liu, H., Wang, W., and Zhu, Z. (2018).
\newblock Heterogeneity adjustment with applications to graphical model
  inference.
\newblock {\em Electronic Journal of Statistics}, 12(2):3908.

\bibitem[Fisher et~al., 2021]{fisher2021multivar}
Fisher, Z., Kim, Y., and Pipiras, V. (2021).
\newblock Multivar: Penalized estimation of multiple-subject vector
  autoregressive (multi-var) models.
\newblock \texttt{R} package version 1.1.0.
\newblock Available at \url{https://CRAN.R-project.org/package=multivar}.

\bibitem[Fisher et~al., 2022]{fisher2022penalized}
Fisher, Z.~F., Kim, Y., Fredrickson, B.~L., and Pipiras, V. (2022).
\newblock Penalized estimation and forecasting of multiple subject intensive
  longitudinal data.
\newblock {\em Psychometrika}, 87(2):403--431.

\bibitem[Fisher et~al., 2024]{fisher2024structured}
Fisher, Z.~F., Kim, Y., Pipiras, V., Crawford, C., Petrie, D.~J., Hunter,
  M.~D., and Geier, C.~F. (2024).
\newblock Structured estimation of heterogeneous time series.
\newblock {\em Multivariate Behavioral Research}, 59(6):1270--1289.

\bibitem[Gates and Molenaar, 2012]{gates2012group}
Gates, K.~M. and Molenaar, P.~C. (2012).
\newblock Group search algorithm recovers effective connectivity maps for
  individuals in homogeneous and heterogeneous samples.
\newblock {\em NeuroImage}, 63(1):310--319.

\bibitem[Glasser et~al., 2013]{glasser2013minimal}
Glasser, M.~F., Sotiropoulos, S.~N., Wilson, J.~A., Coalson, T.~S., Fischl, B.,
  Andersson, J.~L., Xu, J., Jbabdi, S., Webster, M., Polimeni, J.~R., et~al.
  (2013).
\newblock The minimal preprocessing pipelines for the {Human Connectome
  Project}.
\newblock {\em Neuroimage}, 80:105--124.

\bibitem[Gross and Tibshirani, 2016]{gross2016data}
Gross, S.~M. and Tibshirani, R. (2016).
\newblock Data shared {Lasso}: A novel tool to discover uplift.
\newblock {\em Computational statistics \& Data Analysis}, 101:226--235.

\bibitem[Han et~al., 2015]{han2015direct}
Han, F., Lu, H., and Liu, H. (2015).
\newblock A direct estimation of high dimensional stationary vector
  autoregressions.
\newblock {\em Journal of Machine Learning Research}, 16:3115--3150.

\bibitem[Haslbeck et~al., 2025]{haslbeck2025testing}
Haslbeck, J.~M., Epskamp, S., and Waldorp, L.~J. (2025).
\newblock Testing for group differences in multilevel vector autoregressive
  models.
\newblock {\em Behavior Research Methods}, 57(3):100.

\bibitem[Huber and Ronchetti, 2011]{huber2011robust}
Huber, P. and Ronchetti, E. (2011).
\newblock {\em Robust Statistics}.
\newblock Wiley Series in Probability and Statistics. Wiley.

\bibitem[Javanmard and Montanari, 2014]{javanmard2014confidence}
Javanmard, A. and Montanari, A. (2014).
\newblock Confidence intervals and hypothesis testing for high-dimensional
  regression.
\newblock {\em The Journal of Machine Learning Research}, 15(1):2869--2909.

\bibitem[Jongerling et~al., 2015]{jongerling2015multilevel}
Jongerling, J., Laurenceau, J.-P., and Hamaker, E.~L. (2015).
\newblock A multilevel {AR} (1) model: Allowing for inter-individual
  differences in trait-scores, inertia, and innovation variance.
\newblock {\em Multivariate Behavioral Research}, 50(3):334--349.

\bibitem[Kim et~al., 2024]{kim2024group}
Kim, Y., Fisher, Z.~F., and Pipiras, V. (2024).
\newblock Group integrative dynamic factor models with application to multiple
  subject brain connectivity.
\newblock {\em Biometrical Journal}, 66(8):e202300370.

\bibitem[Kock and Callot, 2015]{kock2015oracle}
Kock, A.~B. and Callot, L. (2015).
\newblock Oracle inequalities for high dimensional vector autoregressions.
\newblock {\em Journal of Econometrics}, 186(2):325--344.

\bibitem[Lee et~al., 2017]{lee2017communication}
Lee, J.~D., Liu, Q., Sun, Y., and Taylor, J.~E. (2017).
\newblock Communication-efficient sparse regression.
\newblock {\em Journal of Machine Learning Research}, 18(5):1--30.

\bibitem[Lyu et~al., 2024]{lyu2024high}
Lyu, X., Kang, J., and Li, L. (2024).
\newblock High-dimensional multisubject time series transition matrix inference
  with application to brain connectivity analysis.
\newblock {\em Biometrics}, 80(2):ujae021.

\bibitem[Maity et~al., 2022]{maity2022meta}
Maity, S., Sun, Y., and Banerjee, M. (2022).
\newblock Meta-analysis of heterogeneous data: Integrative sparse regression in
  high-dimensions.
\newblock {\em Journal of Machine Learning Research}, 23(198):1--50.

\bibitem[Manomaisaowapak and Songsiri, 2022]{manomaisaowapak2022joint}
Manomaisaowapak, P. and Songsiri, J. (2022).
\newblock Joint learning of multiple {Granger} causal networks via non-convex
  regularizations: Inference of group-level brain connectivity.
\newblock {\em Neural Networks}, 149:157--171.

\bibitem[Menon, 2011]{menon2011large}
Menon, V. (2011).
\newblock Large-scale brain networks and psychopathology: a unifying triple
  network model.
\newblock {\em Trends in Cognitive Sciences}, 15(10):483--506.

\bibitem[Nicholson et~al., 2020]{nicholson2020high}
Nicholson, W.~B., Wilms, I., Bien, J., and Matteson, D.~S. (2020).
\newblock High dimensional forecasting via interpretable vector autoregression.
\newblock {\em Journal of Machine Learning Research}, 21(166):1--52.

\bibitem[Ochsner and Gross, 2005]{ochsner2005cognitive}
Ochsner, K.~N. and Gross, J.~J. (2005).
\newblock The cognitive control of emotion.
\newblock {\em Trends in Cognitive Sciences}, 9(5):242--249.

\bibitem[O'Connell and Lock, 2019]{o2019linked}
O'Connell, M.~J. and Lock, E.~F. (2019).
\newblock Linked matrix factorization.
\newblock {\em Biometrics}, 75(2):582--592.

\bibitem[Ollier and Viallon, 2017]{ollier2017regression}
Ollier, E. and Viallon, V. (2017).
\newblock Regression modelling on stratified data with the {Lasso}.
\newblock {\em Biometrika}, 104(1):83--96.

\bibitem[Schaefer et~al., 2018]{schaefer2018local}
Schaefer, A., Kong, R., Gordon, E.~M., Laumann, T.~O., Zuo, X.-N., Holmes,
  A.~J., Eickhoff, S.~B., and Yeo, B.~T. (2018).
\newblock Local-global parcellation of the human cerebral cortex from intrinsic
  functional connectivity mri.
\newblock {\em Cerebral Cortex}, 28(9):3095--3114.

\bibitem[Seeley et~al., 2007]{seeley2007dissociable}
Seeley, W.~W., Menon, V., Schatzberg, A.~F., Keller, J., Glover, G.~H., Kenna,
  H., Reiss, A.~L., and Greicius, M.~D. (2007).
\newblock Dissociable intrinsic connectivity networks for salience processing
  and executive control.
\newblock {\em Journal of Neuroscience}, 27(9):2349--2356.

\bibitem[Seth et~al., 2015]{seth2015granger}
Seth, A.~K., Barrett, A.~B., and Barnett, L. (2015).
\newblock Granger causality analysis in neuroscience and neuroimaging.
\newblock {\em Journal of Neuroscience}, 35(8):3293--3297.

\bibitem[Shojaie, 2021]{shojaie2021differential}
Shojaie, A. (2021).
\newblock Differential network analysis: A statistical perspective.
\newblock {\em Wiley Interdisciplinary Reviews: Computational Statistics},
  13(2):e1508.

\bibitem[Shojaie and Fox, 2022]{shojaie2022granger}
Shojaie, A. and Fox, E.~B. (2022).
\newblock Granger causality: A review and recent advances.
\newblock {\em Annual Review of Statistics and Its Application}, 9:289--319.

\bibitem[Skripnikov and Michailidis, 2019]{skripnikov2019joint}
Skripnikov, A. and Michailidis, G. (2019).
\newblock Joint estimation of multiple network {Granger} causal models.
\newblock {\em Econometrics and Statistics}, 10:120--133.

\bibitem[Song and Bickel, 2011]{song2011large}
Song, S. and Bickel, P.~J. (2011).
\newblock Large vector auto regressions.
\newblock {\em arXiv preprint arXiv:1106.3915}.

\bibitem[Van~de Geer et~al., 2014]{van2014asymptotically}
Van~de Geer, S., B{\"u}hlmann, P., Ritov, Y., and Dezeure, R. (2014).
\newblock On asymptotically optimal confidence regions and tests for
  high-dimensional models.
\newblock {\em The Annals of Statistics}, 42(3):1166--1202.

\bibitem[Van~Essen et~al., 2013]{van2013wu}
Van~Essen, D.~C., Smith, S.~M., Barch, D.~M., Behrens, T.~E., Yacoub, E.,
  Ugurbil, K., Consortium, W.-M.~H., et~al. (2013).
\newblock The {WU-Minn} human connectome project: an overview.
\newblock {\em Neuroimage}, 80:62--79.

\bibitem[Vershynin, 2018]{vershynin2018high}
Vershynin, R. (2018).
\newblock {\em High-Dimensional Probability: An Introduction with Applications
  in Data Science}, volume~47.
\newblock Cambridge University Press.

\bibitem[Wilms et~al., 2018]{wilms2018multiclass}
Wilms, I., Barbaglia, L., and Croux, C. (2018).
\newblock Multiclass vector auto-regressive models for multistore sales data.
\newblock {\em Journal of the Royal Statistical Society: Series C (Applied
  Statistics)}, 67(2):435--452.

\bibitem[Wong et~al., 2020]{wong2020lasso}
Wong, K.~C., Li, Z., Tewari, A., et~al. (2020).
\newblock {Lasso} guarantees for $\beta$-mixing heavy-tailed time series.
\newblock {\em The Annals of Statistics}, 48(2):1124--1142.

\bibitem[Wright et~al., 2015]{wright2015examining}
Wright, A.~G., Beltz, A.~M., Gates, K.~M., Molenaar, P.~C., and Simms, L.~J.
  (2015).
\newblock Examining the dynamic structure of daily internalizing and
  externalizing behavior at multiple levels of analysis.
\newblock {\em Frontiers in Psychology}, 6:1914.

\bibitem[Yeo et~al., 2011]{yeo2011organization}
Yeo, B.~T., Krienen, F.~M., Sepulcre, J., Sabuncu, M.~R., Lashkari, D.,
  Hollinshead, M., Roffman, J.~L., Smoller, J.~W., Z{\"o}llei, L., Polimeni,
  J.~R., et~al. (2011).
\newblock The organization of the human cerebral cortex estimated by intrinsic
  functional connectivity.
\newblock {\em Journal of Neurophysiology}.

\bibitem[Zhang and Zhang, 2014]{zhang2014confidence}
Zhang, C.-H. and Zhang, S.~S. (2014).
\newblock Confidence intervals for low dimensional parameters in high
  dimensional linear models.
\newblock {\em Journal of the Royal Statistical Society Series B: Statistical
  Methodology}, 76(1):217--242.

\bibitem[Zou, 2006]{zou2006adaptive}
Zou, H. (2006).
\newblock The adaptive {Lasso} and its oracle properties.
\newblock {\em Journal of the American statistical association},
  101(476):1418--1429.

\end{thebibliography}

\end{document}